\keywords{lambda calculus, process calculi, abstract machine, control
  operator, bisimilarity, full abstraction}
\begin{document}

\title[Encodings of the $\lambda$-Calculus in \hocore]{Fully Abstract Encodings of
  \texorpdfstring{$\lambda$-Calculus \\}{} in \hocore through
  Abstract Machines
} 

\author[M.~Biernacka]{Ma\l{}gorzata Biernacka \lmcsorcid{0000-0001-8094-0980}}[a]	
\author[D.~Biernacki]{Dariusz Biernacki \lmcsorcid{0000-0002-1477-4635}}[a]	
\author[S.~Lenglet]{Sergue\"i Lenglet \lmcsorcid{0000-0001-5588-1050}}[b]	
\author[P.~Polesiuk]{Piotr Polesiuk \lmcsorcid{0000-0002-7012-4346}}[a]
\author[D.~Pous]{Damien Pous \lmcsorcid{0000-0002-1220-4399}}[c]
\author[A.~Schmitt]{Alan Schmitt \lmcsorcid{0000-0002-2551-8039}}[d]

\address{University of Wroc\l{}aw}	
\email{mabi@cs.uni.wroc.pl, dabi@cs.uni.wroc.pl, ppolesiuk@cs.uni.wroc.pl}
\address{Universit\'e de Lorraine}
\address{Plume, LIP, CNRS, ENS de Lyon, Université de Lyon}
\email{Damien.Pous@ens-lyon.fr}	
\address{INRIA}
\email{alan.schmitt@inria.fr}

\begin{abstract}
  We present fully abstract encodings of the call-by-name and
  call-by-value $\lambda$-calculus into \hocore, a minimal
  higher-order process calculus with no name restriction. We consider
  several equivalences on the $\lambda$-calculus side---normal-form
  bisimilarity, applicative bisimilarity, and contextual
  equivalence---that we internalize into abstract machines in order to
  prove full abstraction of the encodings. We also demonstrate that
  this technique scales to the $\lambda\mu$-calculus, i.e., a standard
  extension of the $\lambda$-calculus with control operators.
\end{abstract}

\maketitle

\section{Introduction}
\label{sec:intro}

\hocore is a minimal process calculus with higher-order communication,
meaning that messages are executable processes. It is a subcalculus of
HO$\pi$~\cite{Sangiorgi:IaC96} with no construct to generate names or
to restrict the scope of communication channels. Even with such a
limited syntax, \hocore{} is Turing
complete~\cite{Lanese-al:IC11}. However, as a higher-order calculus,
it is less expressive than the name passing $\pi$-calculus: polyadic
message sending cannot be compositionally encoded in monadic message
sending in HO$\pi$~\cite{Lanese-al:ICALP10}, while it can be done in
$\pi$~\cite{Sangiorgi-Walker:01}.

Although \hocore is Turing complete, a fully abstract encoding of the
$\lambda$-calculus into \hocore{} appears impossible at first.
Indeed, a $\lambda$-term potentially has an unbounded number of
redexes. A straightforward encoding would use communication to emulate
$\beta$-reduction, but since \hocore{} does not provide means to
restrict the scope of communication, one would need as many distinct
names as there are redexes to avoid interference. Moreover, as new
redexes may be created by $\beta$-reduction, we also need a way to
generate new names on which to communicate. To circumvent these
problems and illustrate the expressiveness of \hocore, we consider
encodings where the \emph{reduction strategy} is fixed, thus for which
at most one redex is enabled at any time.  In this setting,
$\beta$-reduction can be emulated using communication on a single,
shared, name. A first contribution of this paper is the
definition of novel encodings of the call-by-name and call-by-value
$\lambda$-calculus---more precisely the Krivine Abstract Machine
(KAM)~\cite{Krivine:HOSC07} and CK
machine~\cite{Felleisen-al:09}---into \hocore.

A faithful encoding not only reflects the operational semantics of a
calculus, it should also reflect its equivalences. Ideally, an
encoding is \emph{fully abstract}: two source terms are behaviorally
equivalent iff their translations are equivalent. On the \hocore side,
we use \emph{barbed equivalence with hidden
  names}~\cite{Lanese-al:IC11}, where a fixed number of names used for
the translation cannot be observed. On the $\lambda$-calculus side, we
consider three equivalences. First, we look at \emph{normal-form
  bisimilarity}~\cite{Lassen:MFPS99}, where normal forms
are decomposed into subterms that must be bisimilar. Next, we turn to
\emph{applicative bisimilarity}~\cite{Abramsky-Ong:IaC93}, where
normal forms must behave similarly when applied to identical
arguments. And finally, we consider contextual equivalence, where
terms must behave identically when put into arbitrary contexts.

Our second contribution is an \emph{internalization} of these
equivalences into extended abstract machines: these machines expand
the underlying machine evaluating the term (the KAM or CK machine)
with additional transitions with flags interpreting the
equivalences. By doing so, we can express these different equivalences
on terms by a simpler bisimilarity on the flag-generating machines,
which can be seen as labeled transition systems (LTS). We then
translate these extended machines into \hocore and prove full
abstraction for all three equivalences. The result of the
translation is therefore a \hocore process representing the source
$\lambda$-term augmented with the machine of the internalized
equivalence under consideration. A surrounding context may then
interact with that process by communicating on the machine flags.

Finally, we show that the internalization principle scales to the
$\lambda\mu$-calculus~\cite{Parigot:LPAR92}, an extension of the
$\lambda$-calculus with control operators. The $\mu$ operator is able
to capture the current continuation for later use. Its semantics can
be defined with an extended KAM, in which we can internalize
normal-form and applicative bisimilarities as well as contextual
equivalence. Altogether, this work shows that a minimal process
calculus with no name restriction can faithfully encode the
call-by-name and call-by-value $\lambda$-calculus, with or without
control operators. 

\paragraph*{The chosen equivalences} It is enough to faithfully encode
contextual equivalence to get full abstraction. We choose to study
other equivalences (normal-form and applicative bisimilarities) for
two reasons. First, we start with normal-form bisimilarity because it
is the simplest to translate, as we do not need to inject terms from
the environment to establish the equivalence. We next show how we can
inject terms for applicative bisimilarity, and we finally extend this
approach to contexts for contextual equivalence. Second, the study of
quite different equivalences illustrate the robustness of the
internalization technique. 

\paragraph*{Related work} Since Milner's seminal
work~\cite{Milner:MSCS92}, other encodings of $\lambda$ into $\pi$
have been proposed either as a result in itself~\cite{Cai-Fu:MSCS11,Durier-al:TCS22},
or to study other properties such as connections with
logic~\cite{Accattoli:TERMGRAPH13,Beffara:CoRR11,Toninho-al:FOSSACS12},
termination~\cite{Cimini-al:TGC10,Amadio:CoRR11,Yoshida-al:IaC04},
sequentiality~\cite{Berger-al:TLCA01},
control~\cite{Cimini-al:TGC10,Honda-al:RTATLCA14,vanBakel-Vigliotti:CLC14},
references~\cite{Prebet:ICALP22},
or Continuation-Passing Style (CPS)
transforms~\cite{Sangiorgi:MSCS99,Sangiorgi-Walker:01,Downen-al:PPDP14}. These
works use the more expressive \emph{first-order} $\pi$-calculus,
except for~\cite{Sangiorgi:MSCS99,Sangiorgi-Walker:01}, discussed
below; full abstraction is proved w.r.t. contextual equivalence
in~\cite{Berger-al:TLCA01,Yoshida-al:IaC04,Honda-al:RTATLCA14},
normal-form bisimilarity in~\cite{vanBakel-Vigliotti:CLC14,Durier-al:TCS22,Prebet:ICALP22},
applicative bisimilarity in~\cite{Cai-Fu:MSCS11}, and both
bisimilarities in~\cite{Sangiorgi-Walker:01}. The encodings
of~\cite{Berger-al:TLCA01,Yoshida-al:IaC04,Honda-al:RTATLCA14} are
driven by types, and therefore cannot be compared to our untyped
setting.  In~\cite{vanBakel-Vigliotti:CLC14}, van Bakel et
al. establish a full abstraction result between the
$\lambda\mu$-calculus with normal-form bisimilarity and the
$\pi$-calculus. Their encoding relies on an unbounded number of
restricted names to evaluate several translations of $\lambda$-terms
in parallel, while we rely on flags and on barbed equivalence. We
explain the differences between the two approaches in
Section~\ref{ssec:am-cbn}. The encoding of~\cite{Cai-Fu:MSCS11} also
uses an unbounded number of restricted names, to represent a
$\lambda$-term as a tree and to process it.

Sangiorgi translates the $\lambda$-calculus into a higher-order
calculus as an intermediary step
in~\cite{Sangiorgi:MSCS99,Sangiorgi-Walker:01}, but it is an
abstraction-passing calculus, which is strictly more expressive than a
process-passing calculus~\cite{Lanese-al:ICALP10}. Like in our work,
Sangiorgi fixes the evaluation strategy in the $\lambda$-calculus,
except that he uses CPS translations rather than abstract machines. In
the light of Danvy et al.'s functional
correspondence~\cite{Ager-al:PPDP03}, the two approaches appear
closely related, however it is difficult to compare our encoding with
Sangiorgi's, since we target different calculi, and we internalize the
bisimilarities in the abstract machines. Still, name restriction plays
an important role in Sangiorgi's encodings, since a local channel is
used for each application in a $\lambda$-term. The encoding is fully
abstract w.r.t. normal-form bisimilarity~\cite[Chapter
  18]{Sangiorgi-Walker:01} but not w.r.t. applicative
bisimilarity~\cite[Chapter 17]{Sangiorgi-Walker:01}. Indeed, a
translated $\lambda$-abstraction waits for the outside to provide an
access to an encoded argument to be applied to. However, the
environment may give access to a random process and not to a
translated $\lambda$-term. The encoding of~\cite{Sangiorgi-Walker:01}
does not protect itself against this unwanted behavior from the
environment. In contrast, the encoding of
Section~\ref{sec:full-abstr-applicative} and the one
in~\cite{Cai-Fu:MSCS11} are fully abstract w.r.t. applicative
bisimilarity, because they take this issue into account, as we explain
in Section~\ref{ssec:app-machine}.

\paragraph*{Outline}
Section~\ref{sec:hocore} describes \hocore. We present the main ideas
behind our encodings in Section~\ref{sec:KAM} on the KAM. We explain
how to internalize equivalences into abstract machines and how to
translate the resulting machines for three languages: call-by-name
$\lambda$-calculus (Sections~\ref{sec:full-abs}
and~\ref{sec:full-abstr-applicative}), call-by-value
$\lambda$-calculus (Section~\ref{sec:cbv}), and call-by-name
$\lambda\mu$-calculus (Section~\ref{sec:control}). For each of them,
we discuss normal-form and applicative bisimilarities, and 
contextual equivalence. Section~\ref{sec:conclusion} concludes this
paper.

Compared to the conference
publication~\cite{DBLP:conf/lics/BiernackaBLPPS17},
Sections~\ref{sec:cbv} and~\ref{sec:control} are new, and the proofs
have been inlined in Sections~\ref{sec:full-abs}
and~\ref{sec:full-abstr-applicative}. 

\section{The Calculus \hocore}
\label{sec:hocore}

\paragraph*{Syntax and semantics} \hocore~\cite{Lanese-al:IC11} is
a simpler version of HO$\pi$~\cite{Sangiorgi:IaC96} where name restriction
is removed.  We let $a$, $b$, \etc range over channel names, and $x$,
$y$, \etc range over process variables. The syntax of \hocore
processes is as follows.
\[P, Q \bnf \inp a x P \midd \out a P \midd P \parallel Q \midd x \midd
\nil\]

The process $\inp a x P$ is waiting for a message on $a$ which, when
received, is substituted for the variable $x$ in $P$. If $x$ does not
occur in $P$, then we write $\inp a {\_} P$. The process $\out a P$ is
sending a message on $a$.  Note that communication is higher
order---processes are sent---and asynchronous---there is no
continuation after a message output. The parallel composition of
processes is written $P \parallel Q$, and the process $\nil$ cannot
perform any action. Input has a higher precedence than parallel
composition, e.g., we write $\inp a x P \parallel Q$ for
$(\inp a x P) \parallel Q$. We implicitly consider that parallel
composition is associative, commutative, and has $\nil$ as a neutral
element. In an input $\inp a x P$, the variable $x$ is bound in
$P$. We assume bound names to be pairwise distinct and distinct from
free names, using $\alpha$-conversion if necessary; we follow this
convention for all binding constructs in this paper. We write $\fn{P}$
for the free channel names of~$P$.

Informally, when an output $\out a P$ is in parallel with an input
$\inp a x Q$, a communication on $a$ takes place, producing
$\subst P x Q$, the capture avoiding substitution of $x$ by $P$ in
$Q$. We define in Figure~\ref{fig:lts} the semantics of \hocore as a
LTS, omitting the rules symmetric to \hname{Par} and \hname{Tau}. The
labels (ranged over by $l$) are either $\tau$ for internal
communication, $\out a P$ for message output, or $\inpp a P$ for
process input. We label $a$ an input where the received process does
not matter (e.g., $\inp a {\_ } P \trans a P$).
  
 \begin{figure}
  \begin{tabular}{c@{\hspace*{2cm}}c}
  \inferrule[\hname{Out}]{ }{\out a P \trans{\out a P} \nil}
  &
  \inferrule[\hname{Inp}]{ }{\inp a x Q \trans{\inpp a P} \subst P x Q}
  \\[1cm]
  \inferrule[\hname{Par}]{P\trans l P'}{P\parallel Q \trans l P'\parallel Q}
  &
  \inferrule[\hname{Tau}]{P\trans{\out a R} P'\and Q\trans{\inpp a R} Q'}{P\parallel Q \trans\tau P'\parallel Q'}
  \end{tabular}
   \caption{\hocore LTS}
   \label{fig:lts}
 \end{figure}

 Weak transitions allow for
internal actions before and after a visible one. We write $\wtrans\tau$ for the
reflexive and transitive closure $\rtc{\trans\tau}$, and $\wtrans l$ for
$\wtrans\tau \trans l \wtrans\tau$ when $l \neq \tau$.

\paragraph*{Barbed equivalence} Our definition of observable action
(or barb) and barbed equivalence depends on a finite set $\hidden$ of
hidden names, which allows some observable actions to be
ignored. Instead of adding top-level name restrictions on these names,
as in~\cite{Lanese-al:IC11}, we prefer to preserve the semantics of
the calculus and simply hide some names in the equivalence. Hidden
names are not a computational construct and are not required for the
encoding of the KAM, but they are necessary to protect the encoding
from an arbitrary environment when proving full abstraction. We
emphasize that we do not need the full power of name restriction: the
set of hidden names is finite and static---there is no way to create
new hidden names.

We let $\obssact$ range over names $a$ and \emph{conames} $\outn a$.
\begin{defi}
  The process $P$ has a strong observable action on $a$ (resp.
  $\outn a$) \wrt $\hidden$, written $P \obsh a \hidden$ (resp.
  $P \obsh {\outn a} \hidden$), if $a \notin \hidden$ and
  $P \trans{\inpp a Q} R$ (resp. $P \trans{\out a Q} R$) for some $Q$
  and $R$. A process~$P$ has a weak observable action on $\obssact$
  \wrt $\hidden$, written $P \wkobsh \obssact \hidden$, if
  $P \wtrans\tau P' \obsh \obssact \hidden$ for some~$P'$. We write
  $\WkObsh P \hidden$ for the set of weak observable actions of $P$
  \wrt $\hidden$.
\end{defi}
\begin{defi}
  A symmetric relation $\relR$ is a barbed bisimulation \wrt $\hidden$ if $P \relR Q$
  implies
  \begin{itemize}
  \item $P \obsh \obssact \hidden$ implies $Q \wkobsh \obssact \hidden$;
  \item for all $R$ such that $\fn{R} \cap \hidden = \emptyset$, we
    have $P {\parallel} R ~\relR~ Q {\parallel} R$;
  \item if $P \trans\tau P'$, then there exists $Q'$ such that $Q \wtrans\tau
    Q'$ and $P' \relR Q'$.
  \end{itemize}
  Barbed equivalence \wrt $\hidden$, noted $\hhobisim \hidden$, is the
  largest barbed bisimulation \wrt~$\hidden$.
\end{defi}
A strong barbed equivalence can be defined by replacing
$\wkobsh \obssact \hidden$ with $\obsh \obssact \hidden$ in the first
item, and $\wtrans\tau$ with $\trans\tau$ in the
third. From~\cite{Lanese-al:IC11}, we know that strong barbed
equivalence is decidable when all names can be observed
($\hidden = \emptyset$), but undecidable with 4 hidden names. We lower
this bound to 2 in Theorem~\ref{thm:undecidable}.

\section{Encoding the Krivine Abstract Machine}
\label{sec:KAM}

We show in this section that \hocore may faithfully encode a call-by-name
$\lambda$-calculus through an operationally equivalent encoding of the KAM.

\subsection{Definition of the KAM}

The KAM~\cite{Krivine:HOSC07} is a machine for call-by-name evaluation of closed
$\lambda$-calculus terms. We present a substitution-based variant of the KAM for
simplicity, and to reuse the substitution of \hocore in the translation. A
\emph{configuration} $C$ of the machine is composed of the term $t$ being
evaluated, and a stack $\stk$ of $\lambda$-terms. Their syntax and the
transitions are as follows.

\begin{align*}
  C & \bnf \kam{t}{\stk} \tag{configurations}\\
  t,s &\bnf x \midd t \app s \midd \lam x t \tag{terms}\\
  \stk &\bnf t \cons \stk \midd \mts\tag{stacks} \\
  \\[-3mm]
  \kam{t \app s}{\stk} &\kred \kam{t}{s \cons \stk} \tag{\hname{Push}}\\
  \kam{\lam x t}{s \cons \stk} &\kred \kam{\subst{s}{x}{t}}{\stk} \tag{\hname{Grab}}
\end{align*}
\noindent A $\lambda$-abstraction $\lam x t$ binds $x$ in $t$; a term
is closed if it does not contain any free variables. We use $\mts$ to
denote the empty stack. In $\hname{Push}$, the argument $s$ of an
application is stored on the stack while the term $t$ in function
position is evaluated. If we get a $\lambda$-abstraction $\lam x t$,
then an argument $s$ is fetched from the stack (transition
$\hname{Grab}$), and the evaluation continues with $\subst s x t$, the
capture-avoiding substitution of $x$ by $s$ in $t$. If a configuration
of the form $\kam {\lam x t} \mts$ is reached, then the evaluation is
finished, and the result is $\lam x t$. Because we evaluate closed
terms only, it is not possible to obtain a configuration of the form
$\kam x \stk$.

\subsection{Translation into \hocore}

The translation of the KAM depends essentially on how we push and grab terms on
the stack. We represent the stack by two messages, one on name $\hp$ for its
head, and one on name~$c$ (for \emph{continuation}) for its
tail (henceforth, a stack $q$ is always encoded as a message on 
  $hd_{\mathsf q}$ for its head and one on $q$ for its tail). The empty
stack can be represented by an arbitrary, non-diverging, deterministic process,
e.g.,~$\nil$; here we use a third name to signal that the computation is
finished with $\out b \nil$. As an example, the stack
$1 \cons 2 \cons 3 \cons 4 \cons \mts$ is represented by
$\out \hp 1 \parallel \out c {\out \hp 2 \parallel \out c {\out \hp 3 \parallel
    \out c {\out \hp 4 \parallel \out c {\out b \nil}}}}$.

With this representation, pushing an element $e$ on a stack $p$ is done by
creating the process $\out \hp e \parallel \out c p$, while grabbing the head of
the stack corresponds to receiving on $\hp$. With this idea in mind, we define the
translations of the entities defining the KAM, starting with stacks.
\begin{align*}
  \tr{[]} &\eqdef \out b \nil \\
  \tr{t \cons \stk} &\eqdef \out{\hp}{\tr{t}} \parallel \out{c}{\tr{\stk}}
\end{align*}
In the translation of a configuration $\kam{t}{\stk}$, we reuse the name~$c$ to
store the stack, meaning that before pushing on $\stk$ or grabbing the head of
$\stk$, we have to get $\tr\stk$ by receiving on $c$.
\begin{equation*}
  \tr{\kam{t}{\stk}} \eqdef \tr{t} \parallel \out{c}{\tr{\stk}}
\end{equation*}
For instance, in the application case $\tr{t \app s}$, we start by
receiving the current stack $p$ on $c$, and we then run $\tr t$ in
parallel with the translation of the new stack
$\out \hp {\tr s} \parallel \out c p$. Henceforth, we assume the
variable $p$ does not occur free in the translated entities.
\begin{equation*}
  \tr{t \app s}     \eqdef \inp{c}{p}{\perfectparens{\tr{t} \parallel
                      \out{c}{\out{\hp}{\tr{s}} \parallel \out{c}{p}}}}
\end{equation*}
Similarly, in the $\lambda$-abstraction
case $\tr {\lam x t}$, we get the current stack $p$ on $c$, that we run in
parallel with $\inp \hp x {\tr t}$. If $p$ is not empty, then it is a process of
the form $\out \hp {\tr s} \parallel \out c {\tr{\stk}}$, and a communication on
$\hp$ is possible, realizing the substitution of $x$ by $s$ in $t$; the
execution then continues with $\tr{\subst s x t} \parallel \out c {\tr\stk}$.
Otherwise,~$p$ is $\out b \nil$, and the computation terminates.
\begin{align*}
  \tr{\lam x t} &\eqdef \inp{c}{p}{(\inp{\hp}{x}{\tr{t}} \parallel p)}\\
  \tr{x}        &\eqdef x
\end{align*}
Formally, the operational correspondence between the KAM and its translation is
as follows.
\begin{thm}
  \label{th:cor-KAM-trans}
  In the forward direction, if $C \kred^{*} C'$, then $\tr C \wtrans\tau
  \tr{C'}$. In the backward direction, if $\tr C \wtrans\tau P$, then there
  exists a $C'$ such that $C \kred^{*} C'$ and either
    \begin{itemize}
    \item $P = \tr{C'}$,
    \item or there exists $P'$ such that $P \trans\tau P' = \tr{C'}$,
    \item or $C' = \kam {\lam x t} \mts$ and $P = \inp{\hp}{x}{\tr{t}} \parallel
      \out b \nil$.
    \end{itemize}
\end{thm}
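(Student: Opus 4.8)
The plan is to reduce both directions to a single-step analysis, after first establishing that the translation commutes with substitution. So the first step is a substitution lemma, $\subst{\tr{s}}{x}{\tr{t}} = \tr{\subst{s}{x}{t}}$, proved by a routine induction on the structure of $t$: the base case $t = x$ uses $\tr{x} = x$, and the remaining cases follow immediately from compositionality of the translation.

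For the forward direction I would proceed by induction on the length of the reduction $C \kred^{*} C'$, the base case being immediate, so that it suffices to examine a single KAM transition. A $\hname{Push}$ step $\kam{t \app s}{\stk} \kred \kam{t}{s \cons \stk}$ is simulated by exactly one $\tau$-transition: the input $\inp{c}{p}{\cdots}$ of $\tr{t \app s}$ communicates with the stack output $\out{c}{\tr{\stk}}$, substituting $\tr{\stk}$ for $p$ and yielding precisely $\tr{\kam{t}{s \cons \stk}}$. A $\hname{Grab}$ step $\kam{\lam x t}{s \cons \stk} \kred \kam{\subst{s}{x}{t}}{\stk}$ is simulated by two $\tau$-transitions: first the $c$-communication brings the stack $\out{\hp}{\tr{s}} \parallel \out{c}{\tr{\stk}}$ alongside $\inp{\hp}{x}{\tr{t}}$, then the $\hp$-communication substitutes $\tr{s}$ for $x$, producing $\subst{\tr{s}}{x}{\tr{t}} \parallel \out{c}{\tr{\stk}}$, which equals $\tr{\kam{\subst{s}{x}{t}}{\stk}}$ by the substitution lemma. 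Composing these simulations along the reduction gives $\tr{C} \wtrans\tau \tr{C'}$.

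The backward direction is the delicate part, and its proof rests on the fact that the translation is essentially \emph{deterministic}: from any reachable process at most one $\tau$-transition is available. The key observation is that $\tr{\kam{t}{\stk}}$ always exposes a single active input---namely $\inp{c}{p}{\cdots}$ when $t$ is an application or an abstraction, the variable case not arising since we evaluate closed terms---whose only communication partner is the stack output on $c$, while the payloads carried in $\out{\hp}{\cdots}$ and $\out{c}{\cdots}$ are inert messages that cannot interact with their context. I would make this precise by a case analysis on $t$ and on whether the stack is empty, showing that the $\tau$-successor of $\tr{\kam{t}{\stk}}$ is either $\tr{\kam{t'}{\stk'}}$ for the configuration given by the matching KAM step, the one-step-from-complete intermediate process $\inp{\hp}{x}{\tr{t}} \parallel \out{\hp}{\tr{s}} \parallel \out{c}{\tr{\stk}}$ arising in the middle of a $\hname{Grab}$, or---when $t = \lam x t'$ and $\stk = \mts$---the stuck process $\inp{\hp}{x}{\tr{t'}} \parallel \out b \nil$ corresponding to a final configuration.

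Given this characterization, the backward statement follows by induction on the number of $\tau$-steps in $\tr{C} \wtrans\tau P$: determinism guarantees that the $\tau$-sequence tracks the unique KAM reduction out of $C$, and the three possible shapes of $P$ are exactly the three cases of the theorem, the intermediate $\hname{Grab}$ process being precisely the one that is one $\tau$-step short of a complete translation (the second bullet). The \emph{main obstacle} is verifying the determinism claim rigorously---in particular arguing that no spurious communication arises from the higher-order messages $\tr{s}$ stored on the stack---since this is what rules out $\tau$-transitions other than the ones mirroring the machine.
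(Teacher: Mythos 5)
Your proposal is correct and follows essentially the same route as the paper's (very terse) proof: the forward direction by direct simulation of each machine step, and the backward direction by establishing determinism of the translation together with the observation that a \hname{Push} step costs one communication while a \hname{Grab} step costs two, the intermediate \hname{Grab} process accounting for the second bullet and the stuck process $\inp{\hp}{x}{\tr{t}} \parallel \out b \nil$ for the third. Making the substitution lemma $\subst{\tr s}{x}{\tr t} = \tr{\subst s x t}$ explicit is a sensible addition that the paper leaves implicit.
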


\begin{proof}
  The proof is straightforward in the forward direction. In the
  backward direction, we show that the translation is deterministic
  (if $\tr C \wtrans\tau P \trans\tau Q_1$ and $\tr C \wtrans\tau P
  \trans\tau Q_2$, then $Q_1 = Q_2$) and we rely on the fact that the
  translation of a \hname{Push} step uses one communication, while we use two
  communications for a \hname{Grab} step.
\end{proof}

\noindent A direct consequence is we can observe on the \hocore side when the
reduction of a $\lambda$-term converges.

\begin{cor}
  \label{cor:converges-obs}
  $C \kred^{*} \kam {\lam x t} \mts$ for some $t$ iff $\tr{C} \wkobsh
  {\outn b} \emptyset$.
\end{cor}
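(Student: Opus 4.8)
The plan is to prove the corollary by appealing directly to Theorem~\ref{th:cor-KAM-trans}, treating the two directions of the ``iff'' separately and using Corollary's barb $\wkobsh{\outn b}\emptyset$ as the \hocore-side witness that evaluation has reached a final configuration. The key observation is that the name $b$ is used in the translation \emph{only} in $\tr{[]} = \out b \nil$, so an output barb on $b$ can arise exactly when the empty stack has surfaced and is ready to be consumed, which happens precisely at a configuration of the shape $\kam{\lam x t}{\mts}$.

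For the forward direction, I would assume $C \kred^{*} \kam{\lam x t}{\mts}$. By the forward part of Theorem~\ref{th:cor-KAM-trans}, $\tr C \wtrans\tau \tr{\kam{\lam x t}{\mts}}$. I then unfold this translation using the definitions: $\tr{\kam{\lam x t}{\mts}} = \tr{\lam x t} \parallel \out c {\tr{\mts}} = \inp c p (\inp \hp x {\tr t} \parallel p) \parallel \out c {\out b \nil}$. A single communication on $c$ (rule \hname{Tau}) substitutes $\out b \nil$ for $p$, yielding $\inp \hp x {\tr t} \parallel \out b \nil$, which exhibits a strong output barb on $b$ since $b \notin \emptyset$. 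Composing the weak transitions gives $\tr C \wkobsh {\outn b} \emptyset$.

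For the backward direction, I would assume $\tr C \wkobsh {\outn b} \emptyset$, i.e.\ $\tr C \wtrans\tau P \obsh{\outn b}\emptyset$ for some $P$. Applying the backward part of Theorem~\ref{th:cor-KAM-trans} to $\tr C \wtrans\tau P$ yields a $C'$ with $C \kred^{*} C'$ falling into one of three cases. The heart of the argument is to check that only the third case, $C' = \kam{\lam x t}{\mts}$ with $P = \inp \hp x {\tr t} \parallel \out b \nil$, is compatible with $P$ having an output barb on $b$. To rule out the first two cases I would argue that $b$ occurs as a free output name in $\tr{C'}$ only when $\tr{C'}$ itself contains the subprocess $\out b \nil$ coming from a $\tr{[]}$, and I would verify by inspection of the translation clauses that in a configuration $\kam t \stk$ any such $\out b \nil$ sits guarded underneath an input on $c$ (inside $\out c {\tr\stk}$), so it cannot be immediately fired as a barb; hence $\tr{C'}$ and its one-step $\tau$-successors have no output barb on $b$.

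The main obstacle is precisely this guardedness bookkeeping in the backward direction: I must make rigorous the claim that, for a reachable configuration, the translation can expose an unguarded $\out b \nil$ only after the final $c$-communication that consumes the empty stack, and that this coincides exactly with reaching $\kam{\lam x t}{\mts}$. This is where the explicit third clause of Theorem~\ref{th:cor-KAM-trans}, which pins down the shape $\inp \hp x {\tr t} \parallel \out b \nil$, does the real work, since it already isolates the unique process form in which the $b$-barb becomes observable. Once that case analysis is settled, concluding $C' = \kam{\lam x t}{\mts}$ and hence the desired right-hand side is immediate.
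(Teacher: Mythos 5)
Your proposal is correct and follows essentially the same route as the paper: both directions are reduced to Theorem~\ref{th:cor-KAM-trans}, with the forward direction obtained by unfolding $\tr{\kam{\lam x t}{\mts}}$ and performing the communication on $c$, and the backward direction by ruling out the first two cases of the theorem because $\out b \nil$ only ever occurs inside the message $\out c{\tr\stk}$ and hence cannot be a barb there. The paper leaves that last check implicit (``one can check''), and your guardedness argument is exactly the content of that check (note only that in the second case the process $P$ is a one-step $\tau$-\emph{predecessor} of $\tr{C'}$, not a successor, but the same inspection applies).
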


\begin{proof}
  If $C \kred^{*} \kam {\lam x t} \mts$ for some $t$, then by
  Theorem~\ref{th:cor-KAM-trans},
  $\tr C \wtrans \tau \tr{\kam {\lam x t} \mts} \trans\tau \inp{\hp}{x}{\tr{t}}
  \parallel \out b \nil$, and the result holds.

  Conversely, suppose $\tr{C} \wkobsh {\outn b} \emptyset$: there exists $P$ such that $\tr
  C \wtrans\tau P \obsh{\outn b} \emptyset$. One can check that in the first two cases of the
  backward direction of Theorem~\ref{th:cor-KAM-trans}, $P$ cannot exhibit $\outn b$
  as an observable action, so we are in the last case, as wished. 
\end{proof}

\noindent We can then improve over the result of~\cite{Lanese-al:IC11} about
undecidability of strong barbed equivalence by hiding $\hp$ and $c$.

\begin{thm}
  \label{thm:undecidable}
  Strong barbed equivalence is undecidable in \hocore with 2 hidden names.
\end{thm}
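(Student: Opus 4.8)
The plan is to reduce decidability of strong barbed equivalence in \hocore with two hidden names to a known undecidable problem about the KAM, namely the halting problem for call-by-name evaluation of closed $\lambda$-terms. The encoding $\tr{\cdot}$ of the KAM, together with Corollary~\ref{cor:converges-obs}, already gives a tight connection: whether $\tr C$ weakly observes $\outn b$ (with no hidden names) is exactly whether the source configuration $C$ converges. The idea is therefore to hide exactly the two internal communication names $\hp$ and $c$ used by the translation, so that the only remaining observable distinction between translated configurations is the termination barb $\outn b$, and to package this so that equality of behavior reduces to the (undecidable) convergence question.

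First I would recall that call-by-name evaluation of closed $\lambda$-terms (equivalently, reaching a configuration $\kam{\lam x t}{\mts}$ in the KAM) is undecidable: the KAM is Turing complete, so the halting problem for it is undecidable. Next I would set $\hidden = \{\hp, c\}$ and consider, for an arbitrary closed term $t$, the configuration $C_t = \kam{t}{\mts}$ and its translation $\tr{C_t}$. The key observation is that with $\hp$ and $c$ hidden, the only way $\tr{C_t}$ can exhibit any (weak) observable action is via the termination signal $\out b \nil$, since every other action in the translation happens on $\hp$ or $c$. By Corollary~\ref{cor:converges-obs}, $\tr{C_t} \wkobsh{\outn b}{\emptyset}$ holds iff $C_t$ converges; and since $b \notin \hidden$, this is the same as $\tr{C_t} \wkobsh{\outn b}{\hidden}$.

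I would then compare $\tr{C_t}$ against a fixed reference process $D$ that is chosen to never emit the barb $\outn b$ with respect to $\hidden$ (for instance a trivially diverging or inert process hiding all its activity on $\hp$ and $c$, or simply a process strong-barbed-equivalent to one that can never signal on $b$). The claim to establish is that $\tr{C_t}$ is strong barbed equivalent to $D$ \wrt $\hidden$ if and only if $t$ diverges. The forward-to-termination direction is immediate: if $t$ converges then $\tr{C_t}$ has the barb $\outn b$ while $D$ does not, so they are distinguished already on barbs. The converging-to-equivalence direction is where care is needed: I must argue that when $t$ diverges, $\tr{C_t}$ has no weak observable action at all \wrt $\hidden$, so that it is barbed-equivalent to the inert reference, using the backward direction of Theorem~\ref{th:cor-KAM-trans} to rule out any stray barb on $b$ arising in intermediate states of a non-terminating computation.

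The main obstacle I expect is the context-closure clause in the definition of barbed equivalence: barbed bisimulation quantifies over all parallel contexts $R$ with $\fn{R} \cap \hidden = \emptyset$, so establishing $\tr{C_t} \hobisim[\hidden] D$ is not simply a matter of matching barbs and $\tau$-transitions of the bare processes. I would therefore need to show that hiding $\hp$ and $c$ genuinely shields the translation from interference: because an external $R$ cannot mention the hidden names, it cannot inject messages on $\hp$ or $c$ nor intercept them, so its presence cannot create or destroy the $\outn b$ barb that is the sole observable distinction; the only interaction an environment can have is with $b$, which is emitted precisely when the source converges. Making this shielding argument precise—exhibiting a barbed bisimulation \wrt $\hidden$ that is closed under such contexts and that relates $\tr{C_t}$ to the inert process exactly when $t$ diverges—is the technical heart of the proof, after which undecidability follows by reduction from KAM divergence.
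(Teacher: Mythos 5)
This is essentially the paper's own proof: hide the two names $\hp$ and $c$, compare $\tr{\kam t \mts}$ against a fixed reference process with no observable actions, and invoke Corollary~\ref{cor:converges-obs} so that deciding the equivalence would decide convergence of closed $\lambda$-terms. One caveat on your hedge between a ``diverging or inert'' reference: since the equivalence is \emph{strong}, the third clause forces the reference to match every $\tau$-step of a diverging translation one-for-one, so it must itself diverge silently (the paper uses a process $P_\Omega$ looping on the hidden name $c$); an inert process such as $\nil$ would be distinguished from $\tr{\kam t \mts}$ whenever $t$ diverges.
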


\begin{proof}
  Assume we can decide strong barbed equivalence with two hidden names. Let $t$
  be a closed $\lambda$-term and
  $P_\Omega \eqdef \inp{c}{x}{\perfectparens{x \parallel \out c x}} \parallel
  \out c {\inp{c}{x}{\perfectparens{x \parallel \out c x}}}$. We can thus decide
  if $\tr{\kam t \mts}$ is strong barbed-equivalent to $P_\Omega$ when $\hp$ and
  $c$ are hidden, but $P_\Omega$ loops with no observable action, and $t$
  converges iff $\tr{\kam t \mts}$ has a weak observable action on $\outn b$ by
  Corollary~\ref{cor:converges-obs}. By deciding barbed equivalence, we can
  therefore decide whether the reduction of~$t$ converges, hence a contradiction.
\end{proof}

\section{Normal-Form Bisimilarity}
\label{sec:full-abs}

Our first full abstraction result is for normal-form
bisimilarity~\cite{Lassen:MFPS99}. We show how to internalize this equivalence
in an extension of the KAM such that it may be captured by a simple barbed
bisimilarity. We then translate this extended KAM into \hocore, and we finally
prove full abstraction.

\subsection{Normal-Form Bisimilarity}

Normal-form bisimilarity compares terms by reducing them to weak head normal
forms, if they converge, and then decomposes these normal forms into subterms
that must be bisimilar. Unlike the KAM, normal-form bisimilarity is defined on
open terms, thus we distinguish free variables, ranged over by $\fvr$, from
bound variables, ranged over by $x$. The grammars of terms ($t,s$) and values
($v$) become as follows.
\[
  t, s \bnf  \fvr \midd x \midd \lam x t \midd t \app s \qquad v \bnf  \fvr
  \midd \lam x t 
\]

Henceforth, we assume that $\lambda$-terms are well
formed, i.e., all variables ranged over by~$x$ are bound: $x$ is not a valid
term but $\fvr$ is. We write $\fv t$ for the set of free variables of $t$. A
variable $\fvr$ is said \emph{fresh} if it does not occur in any of the entities
under consideration. 

When evaluating an open term, we can obtain either a
$\lambda$-abstraction with an empty stack, or a
free variable and a stack. We inductively extend a relation $\relR$ on
$\lambda$-terms to stacks by writing $\stk_{1} \relR \stk_{2}$ if
$\stk_{1} = \stk_{2} = \mts$, or if $\stk_1 = t \cons \stk'_1$,
$\stk_2 = s \cons \stk'_2$, $t \relR s$, and $\stk'_1 \relR \stk'_2$.

\begin{defi}
  \label{def:nf-cbn}
  A symmetric relation $\relR$ is a normal-form bisimulation if $t \relR s$ implies:
  \begin{itemize}
  \item if $\kam t \mts \rtc\kred \kam {\lam x {t'}} \mts$, then there exists
    $s'$ such that $\kam s \mts \rtc\kred \kam{\lam x {s'}} \mts$ and
    $\subst \fvr x {t'} \relR \subst \fvr x {s'}$ for a fresh $\fvr$;
  \item if $\kam t \mts \rtc\kred \kam \fvr \stk$, then there exists $\stk'$
    such that $\kam s \mts \rtc\kred \kam \fvr {\stk'}$ and $\stk \relR \stk'$.
  \end{itemize}
  Normal-form bisimilarity $\nfbisim$ is the largest normal-form bisimulation. 
\end{defi}
Normal-form bisimilarity is not complete \wrt contextual equivalence
in $\lambda$-calculus: there exists contextually equivalent terms
which are not normal-form bisimilar~\cite[Example
5.5]{Sangiorgi:LICS92}. In this work, we evaluate terms to weak-head
normal forms, and the resulting bisimilarity characterizes
L\'evy-Longo tree
equivalence~\cite{Levy:LCCST75,Ong:PhD88,Sangiorgi:LICS92}. If we were
to evaluate to head normal forms instead, it would characterize B\"ohm
tree equivalence~\cite{Barendregt:84,Lassen:MFPS99}.

\subsection{Abstract Machine}
\label{ssec:am-cbn}

We extend the KAM so that it provides additional steps, identified by
labeled transitions, that capture the testing done by normal-form
bisimilarity. Roughly, the extended machine explores the L\'evy-Longo
tree of a $\lambda$-term, choosing arbitrarily a branch when several
are available, but signaling these choices to the outside so that they
can be mimicked.  To do so, we rely on \emph{flagged transitions},
\emph{terminating transitions}, and a restricted form of
\emph{non-determinism}.

Flagged transitions are usual transitions of the machine with some
additional information to convey to the environment that a particular
event is taking place. Machine bisimilarity, defined below, ensures
that bisimilar machines have matching flags. Transitions without flags
use a $\tau$ label. Terminating transitions are flagged transitions
that indicate the computation has stopped.  They are needed for
bisimilarity: as machine bisimilarity ignores $\tau$ labels, we use
terminating transitions to distinguish between terminated and
divergent machine runs. Finally, we allow some non-determinism in machines,
i.e., a given configuration may take two different transitions, only
if the transitions are flagged and have different flags. In other
words, the non-deterministic choice is made explicit to the
environment.

\begin{figure}[t]
\begin{align*}
\fneval {t \app s} \stk n & \fnred{\tau}  \fneval t {s \cons \stk} n \tag{\hname{Push}}\\
\fneval {\lam x t}{s \cons \stk } n & \fnred{\tau}  \fneval {\subst s x t} \stk n \tag{\hname{Grab}}\\
\fneval {\lam x t} \mts n & \fnred{\flambda} \fneval {\subst n x t} \mts
{n+1} \tag{\hname{Lambda}}\\
\fneval \fvr \stk n & \fnred{\fvr} \fncont \stk n \tag{\hname{Var}}
\\[3mm]
\fncont \mts n & \fnred{\fdone} \tag{\hname{Done}}\\
\fncont {t \cons \stk} n & \fnred{\fenter} \fneval t \mts n \tag{\hname{Enter}}\\
\fncont {t \cons \stk} n & \fnred{\fskip}  \fncont \stk n \tag{\hname{Skip}} 
\end{align*}
\caption{NFB Machine}
\label{fig:machine}
\end{figure}

We define the NFB machine in Figure~\ref{fig:machine}. When computation stops
with a $\lambda$-abstraction and an empty stack, we have to restart the machine
to evaluate the body of the abstraction with a freshly generated free variable
(rule \hname{Lambda}). To do so, we consider free variables as natural numbers,
and we keep a counter~$n$ in the machine which is incremented each time a fresh
variable is needed. For a configuration $\kam \fvr \stk$, normal-form
bisimilarity evaluates each of the $t_i$ in the stack (rule \hname{Var}). To
internalize this step, we could launch several machines in parallel, as
in~\cite{vanBakel-Vigliotti:CLC14}, where the translated $t_i$ are run in
parallel. This approach has two drawbacks: first, it is a further extension of
abstract machines---a machine no longer steps to a single machine state but to a
multiset of states. Second, when translating such extended machines into
\hocore, we want to prevent them from interacting with each other, but we cannot
rely on name restriction, as in~\cite{vanBakel-Vigliotti:CLC14}, to
encapsulate an unbounded number of translations.  Alternatively, one could evaluate
the elements of the stack sequentially, but this approach fails if one of the
elements of the stack diverges, as the later elements will never be
evaluated.

We thus consider a third approach, built upon flagged non-determinism:
the machine chooses arbitrarily an element of the stack to evaluate,
and signals this choice using flags (rules \hname{Enter},
\hname{Skip}, and \hname{Done}). The burden of evaluating every
element of the stack is thus relegated to the definition of machine
bisimilarity: as every flagged execution must be matched by an
execution with the same flags, every possible choice is explored.  It
is reminiscent of how operational game
semantics~\cite{Levy-Staton:CSLLICS14,Jaber:FOSSACS15} equates terms:
the possible interactions of a program with its environment are
represented with labeled transitions, the execution of a program is a
trace of such labels, and trace equivalence ensures that every
possible execution has been considered.

As before, we use $\sts$ to range over configurations, which are now
of two kinds. In \emph{evaluation mode}, $\fneval t \stk n$ is
reducing~$t$ within stack $\stk$ and with counter~$n$. The transitions
$\hname{Push}$ and $\hname{Grab}$ are as in the KAM, except for the
extra parameter. If we reach a $\lambda$-abstraction in the empty
context (transition $\hname{Lambda}$), then the machine flags
$\flambda$ and then restarts to evaluate the body, replacing the bound
variable by a fresh free variable, i.e., the current value $n$ of the
counter. If we reach a free variable~$\fvr$, i.e., a number, then we
flag $\fvr$ before entering the next mode (transition $\hname{Var}$).

In \emph{continuation mode} $\fncont \stk n$, the transition
$\hname{Done}$ simply finishes the execution if $\stk = \mts$, using
the flag $\fdone$. Otherwise, $\stk = t \cons \stk'$, and the machine
either evaluates~$t$ with flag $\fenter$ (and forgets about~$\stk'$),
or skips $t$ with a flag $\fskip$ to possibly evaluate a term in
$\stk'$. The machine may skip the evaluation of all the terms
in~$\stk$, but it would still provide some information, as it would
generate~$m$~$\fskip$ messages (followed by $\fdone$), telling us that
$\stk$ has $m$ elements. Note that the counter~$n$ is stored in
continuation mode just to be passed to the evaluation mode when one of
the $t_i$ is chosen with transition $\hname{Enter}$.

\begin{exa}
  To illustrate how the machine works, we show the transitions starting from the
  term $(\lam x x) \app (\lam y {y \app 0 \app \Omega})$, where
  $\Omega \eqdef (\lam x {x \app x})\app (\lam x {x \app x})$. The term is
  executed in the empty context, and with a counter initialized to a value
  greater than its free variables.
  \begin{align*}
    & \fneval {(\lam x x) \app (\lam y {y \app 0 \app \Omega})} \mts 1 \\
    \fnred{\tau}\; &
    \fneval {\lam x x}{\lam y {y \app 0 \app \Omega} \cons \mts} 1
    \tag{\hname{Push}} \\
     \fnred{\tau}\; &\fneval {\lam y {y \app 0 \app \Omega}} \mts 1
    \tag{\hname{Grab}} \\
     \fnred{\flambda}\; &\fneval {1 \app 0 \app \Omega} \mts 2 \tag{\hname{Lambda}} \\
     \fnred{\tau}\; &\fneval {1 \app 0}{\Omega \cons \mts} 2 \fnred{\tau}
                      \fneval 1 {0 \cons \Omega \cons \mts
      } 2 \tag{\hname{Push} - \hname{Push}}\\
     \fnred{1}\; &\fncont{0 \cons \Omega \cons \mts} 2 \tag{\hname{Var}}
  \end{align*}
  We then have three possibilities. First, we reduce the top of the
  stack, with the sequence
  \( \fncont{0 \cons \Omega \cons \mts} 2 \fnred{\fenter} \fneval 0
  \mts 2 \fnred{0} \fncont \mts 2 \fnred{\fdone} \).  Second, we
  evaluate~$\Omega$ with the sequence
  \( \fncont{0 \cons \Omega \cons \mts} 2 \fnred{\fskip} \fncont
  {\Omega \cons \mts} 2 \fnred{\fenter} \fneval \Omega \mts 2, \) and
  then the machine loops without generating any flag. Third, we skip
  both terms with
  \( \fncont{0 \cons \Omega \cons \mts} 2 \fnred{\fskip} \fncont
  {\Omega \cons \mts} 2 \fnred{\fskip} \fncont \mts 2 \fnred{\fdone}.
  \) Note that the three options generate different traces of flags.

\end{exa}

Because the rules $\hname{Grab}$ and $\hname{Push}$ are the same between the KAM
and the NFB machine, there is a direct correspondence between the two. 
\begin{lem}
  \label{lem:kam-nfam}
  For all $t$, $t'$, $\stk$, $\stk'$, $n$, $\kam t \stk \kred \kam {t'}{\stk'}$
  iff $\fneval t \stk n \fnred\tau \fneval {t'}{\stk'} n$.
\end{lem}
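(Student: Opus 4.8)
The plan is to prove the biconditional by comparing the two transition systems rule by rule, observing that the only rules in the NFB machine that carry a $\tau$ label in evaluation mode are exactly $\hname{Push}$ and $\hname{Grab}$, which are syntactically identical to the two KAM rules modulo the extra counter parameter~$n$. First I would establish the left-to-right direction: assuming $\kam t \stk \kred \kam {t'}{\stk'}$, I would perform case analysis on which KAM rule was applied. In the $\hname{Push}$ case we have $t = t_1 \app t_2$, $\stk' = t_2 \cons \stk$, and $t' = t_1$; the NFB rule $\hname{Push}$ fires on $\fneval {t_1 \app t_2} \stk n$ producing exactly $\fneval {t_1}{t_2 \cons \stk} n$ with the same $n$, as required. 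The $\hname{Grab}$ case is symmetric: $t = \lam x {t_0}$, $\stk = s \cons \stk'$, and $t' = \subst s x {t_0}$, and the NFB $\hname{Grab}$ rule delivers the matching step. Since the substitutions and stack manipulations are literally the same in both systems, each direction of these two cases is immediate.

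For the right-to-left direction, I would assume $\fneval t \stk n \fnred\tau \fneval {t'}{\stk'} n$ and argue that a $\tau$-labeled transition in evaluation mode can only arise from $\hname{Push}$ or $\hname{Grab}$. This requires inspecting the full set of NFB rules in Figure~\ref{fig:machine} and checking that every other rule either carries a non-$\tau$ flag ($\flambda$, $\fvr$, $\fdone$, $\fenter$, $\fskip$) or leaves evaluation mode for continuation mode; hence none of them can produce an evaluation-to-evaluation step labeled $\tau$. Once the transition is pinned to $\hname{Push}$ or $\hname{Grab}$, reading the same rule backwards in the KAM yields the corresponding $\kam t \stk \kred \kam {t'}{\stk'}$, with the counter simply discarded.

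The argument is genuinely routine, so there is no deep obstacle; the only point demanding care is the determinism-of-labels bookkeeping in the backward direction, namely confirming exhaustively that no NFB rule other than $\hname{Push}$ and $\hname{Grab}$ yields a $\tau$-labeled evaluation-mode step. This is what makes the correspondence clean: the flags on $\hname{Lambda}$, $\hname{Var}$, $\hname{Done}$, $\hname{Enter}$, and $\hname{Skip}$ guarantee that a $\tau$-step cannot be confused with the bisimilarity-testing machinery. I would therefore state the proof as a short case analysis noting that the two rules coincide and that all other rules are flagged, concluding the equivalence directly.
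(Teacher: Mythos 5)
Your proposal is correct and matches the paper's treatment: the paper states this lemma without an explicit proof, justifying it only by the remark that the rules $\hname{Push}$ and $\hname{Grab}$ are identical in the two machines (modulo the counter), which is exactly the case analysis you spell out. Your additional check in the backward direction—that every other NFB rule is flagged or leaves evaluation mode, so a $\tau$-labeled evaluation-to-evaluation step must be one of these two rules—is the right (and only) point needing verification, and it holds by inspection of Figure~\ref{fig:machine}.
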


We finally show that a notion of bisimilarity between configurations of an NFB
machine captures normal-form bisimilarity. To this end, we first define machine
bisimilarity, where we denote the flags of the machine by $\flag$.

\begin{defi}
  A symmetric relation $\relR$ is a machine bisimulation if $C_1 \relR C_2$
  implies:
  \begin{itemize}
  \item if $C_1 \fnred{\tau}^{*}\fnred{\flag} C'_1$, then there exists $C'_2$ such that $C_2
    \fnred{\tau}^{*}\fnred{\flag} C'_2$ and $C'_1 \relR C'_2$;
  \item if $C_1 \fnred{\tau}^{*}\fnred{\flag}$, then
    $C_2 \fnred{\tau}^{*}\fnred{\flag}$.
  \end{itemize}
  Machine bisimilarity $\ambisim$ is the largest machine bisimulation.
\end{defi}

Intuitively, machine bisimilarity ensures that every flag emitted by a machine
is matched by an identical flag from the other machine, up to internal
reductions. Note that a machine that diverges with $\tau$ labels can be related
to any other diverging machine or any machine stuck without a flag. We make sure
the latter case cannot occur in our machines by having only terminating
transitions, which are flagged, as stuck transitions. We can now state that
normal-form bisimilarity coincides with machine bisimilarity of NFB machines.

\begin{thm}
  \label{th:nf-am}
  $t \nfbisim s$ iff there exists $n > \max{(\fv t \cup \fv s)}$ such that
  $\fneval t \mts n \ambisim \fneval s \mts n$.  
\end{thm}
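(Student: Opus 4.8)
The plan is to prove the two implications separately, each by exhibiting an explicit candidate relation and checking the defining clauses; the bridge in both directions is Lemma~\ref{lem:kam-nfam}, which lets me read the $\tau$-transitions of the NFB machine in evaluation mode as KAM steps, together with the observation that a flag is emitted exactly when the KAM reaches a weak head normal form (flag $\flambda$ for $\lam x {t'}$ with an empty stack, flag $\fvr$ for a free variable $\fvr$ under a stack). Throughout I maintain the invariant that the counter strictly exceeds every free variable occurring in the configuration, so that the current value of the counter can always serve as the fresh variable required by Definition~\ref{def:nf-cbn}; this invariant is preserved by \hname{Lambda} (which consumes $n$ and increments to $n+1$) and by \hname{Enter} (whose stack elements come from the original term, with free variables below $n$).

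For the forward direction, I assume $t \nfbisim s$, fix a normal-form bisimulation $\relR$ containing $(t,s)$, and lift it to a machine relation $\mathcal{S}$ given by the symmetric closure of two families: evaluation pairs $(\fneval{t_1}{\mts}{n}, \fneval{t_2}{\mts}{n})$ with $t_1 \relR t_2$ and $n$ above their free variables, and continuation pairs $(\fncont{\stk_1}{n}, \fncont{\stk_2}{n})$ with $\stk_1 \relR \stk_2$ (the elementwise lifting) and $n$ above all free variables in the stacks. Checking that $\mathcal{S}$ is a machine bisimulation is a case analysis on the first flag emitted. From an evaluation pair, Lemma~\ref{lem:kam-nfam} and determinism of the KAM route both machines through identical $\tau$-steps to their normal forms, and the two clauses of Definition~\ref{def:nf-cbn} supply exactly the matching $\flambda$-transition (landing back in the first family with the fresh variable instantiated to $n$) or the matching $\fvr$-transition (landing in the second family). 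From a continuation pair the stacks have equal length, so \hname{Done} matches \hname{Done}, while the two non-deterministic moves \hname{Enter} and \hname{Skip} are matched in kind, using the head relation on the chosen element or the tail relation respectively. Divergence needs no separate treatment: if one machine never emits a flag then its KAM run diverges by Lemma~\ref{lem:kam-nfam}, and symmetry of $\relR$ forces the other to diverge too, so both clauses hold vacuously.

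For the backward direction, I define a term relation by $t \relR s$ iff $\fneval{t}{\mts}{n} \ambisim \fneval{s}{\mts}{n}$ for some $n > \max(\fv t \cup \fv s)$, and show it is a normal-form bisimulation. If $\kam{t}{\mts} \kred^* \kam{\lam x {t'}}{\mts}$, then by Lemma~\ref{lem:kam-nfam} the first machine reaches $\fneval{\lam x {t'}}{\mts}{n}$ and fires $\flambda$; since $\flambda$ is produced only by \hname{Lambda}, the matching move of $\fneval{s}{\mts}{n}$ forces $\kam{s}{\mts} \kred^* \kam{\lam x {s'}}{\mts}$, and the residual $\fneval{\subst{n}{x}{t'}}{\mts}{n+1} \ambisim \fneval{\subst{n}{x}{s'}}{\mts}{n+1}$ is precisely what yields $\subst{\fvr}{x}{t'} \relR \subst{\fvr}{x}{s'}$ with the fresh $\fvr$ taken to be $n$. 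The variable clause is analogous: a $\fvr$-flag forces $\kam{s}{\mts} \kred^* \kam{\fvr}{\stk'}$ with the same $\fvr$, leaving $\fncont{\stk}{n} \ambisim \fncont{\stk'}{n}$, from which I must extract $\stk \relR \stk'$.

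The main obstacle is this last extraction, and it is where the design of the continuation mode pays off. Since continuation mode has no $\tau$-transitions, each weak flagged move $\fnred{\tau}^*\fnred{\flag}$ out of $\fncont{\stk'}{n}$ is in fact a single flagged step, and each flag rigidly determines which rule fired. Matching $\fdone$ against $\fdone$ forces both stacks empty; matching \hname{Enter} forces both stacks non-empty with $\fneval{u}{\mts}{n} \ambisim \fneval{u'}{\mts}{n}$, i.e.\ $u \relR u'$; and matching \hname{Skip} peels off both heads, leaving $\fncont{\stk_0}{n} \ambisim \fncont{\stk_0'}{n}$ on the tails. A straightforward induction on the length of $\stk$ then gives the elementwise relation $\stk \relR \stk'$. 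Beyond this, the only real care required is the counter bookkeeping described above, ensuring that the value substituted for a bound variable always plays the role of a genuinely fresh free variable.
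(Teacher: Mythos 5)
Your proposal is correct and follows essentially the same route as the paper: both directions are proved by exhibiting the same candidate relations (terms related via $\ambisim$ on their initial configurations in one direction; the union of evaluation-mode and continuation-mode pairs lifted from $\nfbisim$ in the other) and both hinge on Lemma~\ref{lem:kam-nfam} plus the counter invariant. Your only addition is to spell out the induction on stack length extracting $\stk \relR \stk'$ from $\fncont \stk n \ambisim \fncont{\stk'} n$ via the \hname{Done}/\hname{Enter}/\hname{Skip} flags, a case the paper dismisses as ``similar.''
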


\begin{proof}
  To prove that machine bisimilarity implies normal-form bisimilarity, we show
  that
  \[ \mathord{\relR} \eqdef \{ (t, s) \midd \fneval t \mts n \ambisim
    \fneval s \mts n, n > \max{(\fv t \cup \fv s)}\} \] is a
  normal-form bisimulation. Suppose
  $\kam t \mts \rtc\kred \kam {\lam x {t'}} \mts$. By
  Lemma~\ref{lem:kam-nfam}, we have
  $\fneval t \mts n \fnredrtc\tau \fneval {\lam x {t'}} \mts n$, which
  can only be followed by the transition
  $\fneval {\lam x {t'}} \mts n \fnred\flambda \fneval {\subst n x
    {t'}} \mts {n+1}$.  Because
  $\fneval t \mts n \ambisim \fneval s \mts n$, there exists a
  configuration $\sts'$ such that
  $\fneval s \mts n \fnredrtc\tau \fnred\flambda \sts'$ and
  $\fneval {\subst n x {t'}} \mts {n+1} \ambisim \sts'$. The
  $\tau$-steps cannot change the value of $n$, and the flag $\flambda$
  ensures that there exists $s'$ such that
  $\fneval s \mts n \fnredrtc\tau \fneval {\lam x {s'}} \mts n
  \fnred\flambda C'$ with $C' = \fneval {\subst n x {s'}} \mts
  {n+1}$. Using Lemma~\ref{lem:kam-nfam} again, we deduce
  $\kam s \mts \rtc\kred \kam {\lam x {s'}} \mts$ with
  $\fneval {\subst n x {t'}} \mts {n+1} \relR \fneval {\subst n x
    {s'}} \mts {n+1}$, as wished. The case
  $\kam t \mts \rtc\kred \kam \fvr \stk$ is similar.

  For the reverse implication, we show that
  \begin{align*}
    \relR & \eqdef \{ (\fneval t \mts n, \fneval s \mts n) \midd t \nfbisim s, n >
            \max{(\fv t \cup \fv s)} \} \\
          &\cup \{ (\fncont \stk n, \fncont {\stk'} n) \midd
            \stk \nfbisim \stk', n > \max{(\fv\stk \cup \fv{\stk'})} \}
  \end{align*}
  is a machine bisimulation. Suppose
  $\fneval t \mts n \fnredrtc\tau \fnred\flambda \fneval {\subst n x
    {t'}} \mts {n+1}$. By Lemma~\ref{lem:kam-nfam}, we have
  $\kam t \mts \rtc\kred \kam {\lam x {t'}} \mts$. Because
  $t \nfbisim s$, there exists $s'$ such that
  $\kam s \mts \rtc\kred \kam {\lam x {s'}} \mts$ and
  $\subst n x {t'} \nfbisim \subst n x {s'}$. From
  Lemma~\ref{lem:kam-nfam}, we deduce
  $\fneval s \mts n \fnredrtc\tau \fnred\flambda \fneval {\subst n x
    {s'}} \mts {n+1}$, with
  $ \fneval {\subst n x {t'}} \mts {n+1} \relR \fneval {\subst n x
    {s'}} \mts {n+1}$, as wished. The other cases are similar. 
\end{proof}

\subsection{Translation into \hocore}
\label{ss:nfb-hocore}

\begin{figure}[t]
\begin{align*}
  \tr{t \app s} & \eqdef  \inp{c}{p}{\perfectparens{\tr{t} \parallel \out{c}{\out{\hp}{\tr{s}} \parallel \out{c}{p}}}}\\
  \tr{\lam x t} & \eqdef  \inp{c}{p}{\perfectparens{p \parallel \out b {\Restart} \parallel \inp{\hp}{x}{\inp b {\_}{\tr{t}}}}}\\
  \tr{x} & \eqdef x\\
  \tr{\fvr} &\eqdef \trint \fvr \\[2mm]
  \Restart & \eqdef  \inp{\flambda}{\_}{\inp k x}{(
               \out \hp x \parallel \out k {\inp {suc}{\_} x} \parallel 
               \out c {\tr\mts} \parallel \out b \nil
             )}\\
  \Rec  &\eqdef \inp{init}{\_}{\inp{rec} x {\perfectparens{x \parallel \out{rec
      } x \parallel \Contt}}} \\
  \Contt & \eqdef \inp c p {\perfectparens{p \parallel \out b {\inp {\fdone}{\_} {\nil}} \parallel 
    \inp \hp x {\inp b {\_}{\IChoice x}}}} \\
  \IChoice{P_t} & \eqdef   \inp {\fenter}{\_}{\inp c {\_} {\perfectparens{P_t \parallel \out
              c {\tr \mts}}}} +
        \inp{\fskip}{\_}{\out{init}{\nil}}
  \\[2mm]
  \tr{\mts} & \eqdef \inp b x x\\
  \tr{t \cons \stk} &\eqdef \out{\hp}{\tr{t}} \parallel \out{c}{\tr{\stk}}\\
  \trint 0   & \eqdef \inp z {\_}{\out{init} \nil} \\
  \trint {n+1}  & \eqdef \inp {suc}{\_}{\trint{n}} 
  \\[2mm]
  \tr{\fneval t \stk n} &\eqdef \tr{t} \parallel \out{c}{\tr \stk} \parallel \out k
  {\trint n} \parallel \Rec \parallel \out{rec}{\Rec} \\
  \tr{\fncont \stk n} &\eqdef \out{c}{\tr \stk} \parallel \out k
  {\trint n} \parallel \Contt \parallel \Rec \parallel \out{rec}{\Rec}
\end{align*}
\caption{Translation of the NFB machine into \hocore}
\label{fig:translation}
\end{figure}

In Figure~\ref{fig:translation}, we present the translation of the NFB
machine into \hocore, where we consider flags as channel
names. Configurations now contain a counter~$n$, which is represented
by a message on $k$ containing the value of~$n$ encoded as a
process. We use $\trint .$ to translate a natural number $n$ into a
process
$\underbrace{suc(\_).\ldots suc(\_)}_{n \text{ times}}.\inp z
{\_}{\out{init} \nil}$; the role of the final output on $init$ is
explained later. Since we cannot generate new names in \hocore, free
variables~$\fvr$ are also numbers, and we use the same translation for
them. We also use non-deterministic internal choice, encoded as
follows:
$P+Q \eqdef \out{ch} P \parallel \out {ch} Q \parallel \inp{ch} x
{\inp {ch} \osef x}$: both messages are consumed, and only one process
is executed. This encoding supposes that at most one choice is active
at a given time, as we use only one name $ch$ to encode all the
choices. We also use $n$-ary choices for $n > 2$ in
Section~\ref{ssec:trans-AM-AB}, which can be encoded in the same way.

A stack is represented as in the KAM, by messages on $\hp$ and $c$, and the
translation of an application $\tr{t \app s}$ is exactly the same as for the
KAM. The encoding of the empty context~$\mts$ is different, however, because
contexts are used to distinguish between execution paths at two points in the
machine: when evaluating a function $\lam x t$ in evaluation mode, and when
deciding whether the execution is finished in continuation mode. The empty context is
thus encoded as $\inp b x x$, waiting to receive the process to execute in the
empty case. For the non-empty case, this input on $b$ is absent and there are
instead messages on $\hp$ and~$c$. Thus the generic way to choose a branch is as
follows:
\begin{equation*}
  \out b {\text{do this if empty}} \parallel \inp \hp x {\inp c y {\inp b {\_} {\text{do this if non-empty}}}}.
\end{equation*}
In the non-empty case, the input on $b$ discards the message for the empty
behavior that was not used.

For $\lambda$-abstractions, the behavior for the empty case is described in the
process $\Restart$. More precisely, $\tr{\lam x t}$ receives the current stack
$\tr \stk$ on $c$ to run it in parallel with
$\out b {\Restart} \parallel \inp{\hp}{x}{\inp b {\_}{\tr{t}}}$. If $\tr \stk$
is of the form $\out \hp {\tr{t'}} \parallel {\out c {\tr {\stk'}}}$, then we
have the same behavior as with the KAM, with an extra
communication on $b$ to garbage collect the $\Restart$ process.  Otherwise,
$\tr \stk = \inp b x x$ and we obtain the following sequence of transitions.
\begin{align*}
  & \inp b x x \parallel \out b {\Restart} \parallel \inp{\hp}{x}{\inp b {\_}{\tr{t}}} \parallel \out k {\trint n} \\
  \trans\tau \; & \inp{\flambda}{\_}{\inp k x}{\perfectparens{\out \hp x \parallel \out k
    {\inp {suc}{\_} x} \parallel \out c {\tr\mts} \parallel \out b
    \nil}} \parallel \inp{\hp}{x}{\inp b
    {\_}{\tr{t}}} \parallel \out k {\trint n} \\
  \wtrans{\flambda} \; & \out \hp {\trint n} \parallel \out k
  {\inp {suc}{\_}{\trint n}} \parallel \out c {\tr\mts} \parallel \out b
  \nil \parallel \inp{\hp}{x}{\inp b {\_}{\tr{t}}}\\
  \trans{\tau} \; & \out k
  {\inp {suc}{\_}{\trint n}} \parallel \out c {\tr\mts} \parallel \out b
  \nil \parallel \inp b {\_}{\tr{\subst n x t}}  \\
  \trans{\tau} \; & \out k
  {\trint {n+1}} \parallel \out c {\tr\mts} \parallel \tr{\subst n x t}
\end{align*}
In the end, we have effectively restarted the machine to evaluate $\subst n x
t$, as wished.

In continuation mode, the branching is done by the process $\Contt$,
which is executed after applying the transition \hname{Var}. More
precisely, a free variable $\fvr$ is translated using~$\trint{.}$,
which signals first the value of $\fvr$ (with the names $suc$
and~$z$), and then sends a message on $init$ to enter the continuation
mode. The way the NFB machine chooses which $t_i$ to evaluate in a
stack $t_1 \cons \ldots \cons t_m \cons \mts$ is a recursive
mechanism, and recursion can be encoded in a higher-order calculus:
$\Rec \parallel \out {rec} \Rec$ reduces to
$\Contt \parallel \Rec \parallel \out {rec} \Rec$ when it receives a
message on $init$. The process $\Contt$ is doing a case analysis on
$\tr\stk$ when it is executed in parallel with $\out c {\tr \stk}$: if
$\stk = \mts$, then $\tr \stk = \inp b x x$ receives the message on
$b$ which flags~$\fdone$ and the machine stops. Otherwise,
$\tr \stk = \out \hp {\tr t} \parallel \out c {\tr {\stk'}}$, and we
have the following reductions:
\begin{align*}
  \Contt \parallel \out c {\tr \stk} &\trans\tau \out \hp {\tr t} \parallel \out c
  {\tr {\stk'}} \parallel \out b {\inp {\fdone}{\_} \nil} \parallel \inp \hp x {\inp b
    {\_} \IChoice x} \\ 
  &\mathbin{{\trans\tau}^2}  \out c {\tr {\stk'}} \parallel \IChoice {\tr t}
\end{align*}
At this point, $\IChoice {\tr t}$ either evaluates~$t$ with flag $\fenter$, or
flags~$\fskip$ and continues exploring~$\stk'$. In the former case, the current
stack~$\stk'$ is replaced by an empty stack, and in the latter, a message on
$init$ is issued to produce $\Contt \parallel \out c {\tr{\stk'}}$ after some
reduction steps.

\subsection{Operational Correspondence and Full Abstraction}
\label{ssec:fullabs-nf}

Establishing full abstraction requires first to state the
correspondence between the NFB machine and its translation. We call
\emph{machine process} a process obtained by reducing a translated
configuration.
\begin{defi}
  \label{def:machine-proc}
  A process $P$ is a machine process if there exists a configuration $\sts$ of
  the machine such that $\tr\sts \wtrans\tau P$. 
\end{defi}

We recall that $\flag$ range over the flags of the NFB machine, i.e.,
$\flambda$, $\fenter$, $\fskip$, $\fdone$, and~$\fvr$. We let
$\fortau$ range over flags and $\tau$. The translation maps the flags
to corresponding channel names, except for $\fvr$, which is
represented using the $suc$ and $z$ channel names. We call \emph{flag
  names} the channel names $\flambda$, $\fenter$, $\fskip$, $\fdone$,
$suc$, and~$z$. We write $\trans \fvr$ as a shorthand for $\fvr$
transitions $\trans {suc}$ followed by a transition $\trans z$, i.e.,
$(\trans{suc})^\fvr \trans z$, and $\wtrans \fvr$ stands for
$\wtrans\tau \trans \fvr \wtrans\tau$. We define $\hidden$ as the set
of names in the translation that are not flag names. We write
$\obss \obssact$ as a shorthand for $\obsh \obssact \hidden$,
$\wkobs \obssact$ for $\wkobsh \obssact \hidden$, $\WkObs P$ for
$\WkObsh P \hidden$, and $\hobisim$ for $\hhobisim \hidden$.

Given a machine process $P$ and a flag $\fortau$, we define a set
$\nexts \fortau P$ such that $Q \in \nexts \fortau P$ implies there is
a weak reduction between $P$ and~$Q$ with at least one transition
labeled $\fortau$, $Q$ is the translation of a configuration, and
there is no other machine translation in between.  Intuitively, it is
the set of translations that are reachable from $P$ just after a step
$\fortau$.

\begin{defi}
  \label{def:next-config}
  We define $\nexts \fortau P$ so that $Q \in \nexts \fortau P$
  implies $P \wtrans\tau \trans \fortau \wtrans\tau Q$,
  $Q = \tr {\sts'}$ for some $\sts'$, and for any~$P'$ such that
  $P' \neq P$, $P' \neq Q$, and $P \wtrans \fortau P' \wtrans\tau Q$
  or $P \wtrans \tau P' \wtrans \fortau Q$, we have
  $P' \neq \tr{\sts}$ for any $\sts$.
\end{defi}
The translation has been designed so that each machine process is
either stuck waiting for a message on a flag name, or has at most one
possible communication, except when a non-deterministic choice has to
be made. The latter case occurs when the NFB machine is exploring a
stack, with a translated configuration of the form
$\tr{\fncont{t \cons \stk} n}$: as explained at the end of
Section~\ref{ss:nfb-hocore}, we have
\[
  \tr{\fncont{t \cons \stk} n} \wtrans\tau \IChoice{\tr t} \parallel \out c
  {\tr \stk} \parallel \out k {\tr n} \parallel \Rec \parallel \out{rec}
  \Rec 
\]
We call the resulting process
$\Choice{t}{\stk}{n} \eqdef \IChoice{\tr t} \parallel \out c {\tr
  \stk} \parallel \out k {\tr n} \parallel \Rec \parallel \out{rec}
\Rec$ a \emph{choice process}, as it is about to make an arbitrary
choice in $\IChoice{\tr t}$. Unfolding the encoding of choice, we have
$\IChoice{\tr t} = \out {ch}{P_{\fenter, t}} \parallel
\out{ch}{P_\fskip} \parallel \inp{ch}{x}{\inp{ch}{\_} x}$, with
$P_{\fenter, t} \eqdef \inp {\fenter}{\_}{\inp c {\_}
  {\perfectparens{\tr t \parallel \out c {\tr \mts}}}}$ and
$P_\fskip \eqdef \inp{\fskip}{\_}{\out{init}{\nil}}$. Let
$P_{\stk, n} \eqdef \out c {\tr \stk} \parallel \out k {\tr n}
\parallel \Rec \parallel \out{rec} \Rec$. Then the two possible
transitions from $\Choice{t}{\stk}{n}$ lead to the following machine
translations:
\begin{align*}
  \Choice{t}{\stk}{n}
  & \trans\tau \out{ch}{P_\fskip} \parallel \inp{ch}{\_}{P_{\fenter,
    t}} \parallel P_{\stk,n} \trans \tau P_{\fenter, t} \parallel
    P_{\stk,n} \trans \fenter {\trans\tau}^2 \tr{\fneval t \mts n}\\
  \Choice{t}{\stk}{n}
  & \trans\tau \out{ch}{P_{\fenter,t}} \parallel
    \inp{ch}{\_}{P_\fskip} \parallel P_{\stk,n} \trans \tau
    P_{\fskip} \parallel 
    P_{\stk,n} \trans \fskip {\trans\tau}^2 \tr{\fncont \stk n}
\end{align*}
The first transition $\trans\tau$ is making the choice. From there,
one process flags $\fenter$, while the other flags $\fskip$: we know
which choice has been made by looking at the weak observable
actions. Note that $\nexts \tau {\Choice{t}{\stk}{n}}$ is empty,
because we need to flag either $\fenter$ or $\fskip$ to reach a
machine translation.

The next lemma expresses the fact that the translation is
deterministic, relying on flags in the case of choice processes. 
\begin{lem}[Determinism]
  \label{lem:determinist}
  Let $P$ be a machine process.
  \begin{itemize}
  \item If~$P$ is not a choice process, then either $P$ cannot reduce,
    or there exist a unique $P'$ and $\fortau$ such that
    $P \trans \fortau P'$;
  \item if $P$ is a choice process, then there exist $P_1$ and $P_2$
    such that $P \trans\tau P_1$, $P \trans\tau P_2$, and
    $\WkObs{P_1} \neq \WkObs{P_2}$. For all $\fortau$, $P'$ such that
    $P \trans\fortau P'$, we have $\fortau = \tau$ and either $P' =
    P_1$ or $P' = P_2$.
  \end{itemize}
\end{lem}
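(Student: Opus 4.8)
The plan is to prove the Determinism lemma by a case analysis on the shape of the machine process $P$. Recall that a machine process is, by Definition~\ref{def:machine-proc}, a process reachable by $\tau$-transitions from some $\tr\sts$. So the natural approach is to characterize, up to the associativity/commutativity of parallel composition, exactly which syntactic forms a machine process can take, and to check determinism for each one. I would first observe that the whole translation in Figure~\ref{fig:translation} is built so that, apart from the explicit internal choice, every active redex consumes a message on a uniquely determined channel: the names $c$, $\hp$, $b$, $k$, $rec$, $init$, and the flag names each carry at most one enabled output/input pair in any reachable state. Thus the core of the argument is to track, as the machine steps, an invariant describing which messages and guards are present.

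I would proceed by induction on the length of the reduction $\tr\sts \wtrans\tau P$, maintaining an invariant that classifies $P$ into finitely many ``phases''. Concretely, I would enumerate the reachable configurations: (i) freshly translated evaluation configurations $\tr{\fneval t \stk n}$, for which the next action is determined by the head constructor of $t$ (an input on $c$ in the application and abstraction cases, a guarded $\trint\fvr$ emitting on $suc$/$z$ in the variable case); (ii) the intermediate states arising inside the \hname{Lambda} restart sequence and inside the \hname{Var}/\hname{Enter}/\hname{Skip} continuation sequence, all of which I would read off directly from the reduction traces already displayed in Section~\ref{ss:nfb-hocore} and in the discussion of $\Choice{t}{\stk}{n}$ preceding the lemma; and (iii) the choice processes $\Choice{t}{\stk}{n}$ themselves. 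For every phase except (iii), I must verify that there is at most one channel on which an input and an output simultaneously sit, so that rule \hname{Tau} fires in exactly one way and yields a unique $(P',\fortau)$; the flag cases follow because the flag-name guards ($\flambda$, $\fenter$, $\fskip$, $\fdone$, $suc$, $z$) are consumed by \hname{Out}/\hname{Inp} against an environment, not internally, giving the labeled transition with its determined label.

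For the choice-process case (the second bullet), I would unfold $\IChoice{\tr t}$ exactly as done just before the lemma: it contains $\out{ch}{P_{\fenter,t}}$, $\out{ch}{P_\fskip}$, and the guard $\inp{ch}{x}{\inp{ch}{\_}x}$. The only enabled communication is on $ch$, and since two outputs on $ch$ are present, there are precisely two $\tau$-transitions, $P\trans\tau P_1$ and $P\trans\tau P_2$, according to which output is received first; these are $P_1 = \out{ch}{P_\fskip}\parallel\inp{ch}{\_}{P_{\fenter,t}}\parallel P_{\stk,n}$ and the symmetric $P_2$. To establish $\WkObs{P_1}\neq\WkObs{P_2}$, I would trace each forward to its flag, as in the two displayed reduction chains: $P_1$ can weakly exhibit $\fenter$ and then the observables of $\tr{\fneval t \mts n}$, whereas $P_2$ can weakly exhibit $\fskip$; since $\fenter$ and $\fskip$ are distinct observable names (they are flag names, hence outside $\hidden$), the weak-observable sets differ. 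Finally I must argue that these two $\tau$-steps are the \emph{only} transitions: no flag transition is available yet because the flag guards $\fenter$, $\fskip$ still sit under the $ch$-guard, and no other internal communication is possible, so every $\fortau, P'$ with $P\trans\fortau P'$ has $\fortau=\tau$ and $P'\in\{P_1,P_2\}$.

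The main obstacle I anticipate is bookkeeping completeness rather than any single hard idea: I must be certain the phase enumeration is exhaustive, i.e., that no reachable machine process secretly offers two competing communications on the same channel (which would break determinism) or on two different channels (which would offer two distinct $\tau$-steps outside the choice case). This requires care with the persistent recursion witnesses $\Rec\parallel\out{rec}\Rec$, which are always present in both evaluation and continuation configurations: I would check that $\Rec$'s guard is on $init$ and fires only when an $init$ message is produced (by $\trint 0$ or by $P_\fskip$), so it never races with the $c$/$\hp$/$b$ communications that drive the main reduction. Verifying that the $init$-triggered unfolding interleaves harmlessly with the surrounding state, and that the leftover $\out{rec}\Rec$ message cannot induce a spurious second redex, is the delicate point; once that non-interference is confirmed, determinism in each phase is a routine inspection of the LTS rules \hname{Out}, \hname{Inp}, \hname{Par}, and \hname{Tau}.
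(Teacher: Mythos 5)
Your proposal is correct and follows essentially the same route as the paper, whose proof is exactly the case analysis on the shape of reachable machine processes that you describe (the paper merely states it as "a straightforward case analysis on the shape of the machine processes"); your elaboration of the phase enumeration, the choice-process case via the two $ch$-communications, and the non-interference of $\Rec \parallel \out{rec}\Rec$ is a faithful filling-in of that sketch.
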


\begin{proof}
  The proof is a straightforward case analysis on the shape of the
  machine processes, checking that for each of them, the lemma holds.
\end{proof}

\noindent Lemma~\ref{lem:determinist} implies that $\nexts \fortau P$ is either
empty or a singleton for every $\fortau$ and $P$. In the following, we
write $\nexts \fortau P$ to assert that it is not empty and to
directly denote the corresponding unique machine translation.

A consequence is that a machine process may only have flag names as
possible observable actions. In particular, it cannot have a coname
$\outn a$ as an observable action.

\begin{cor}
  \label{cor:wkobs}
  For all machine process $P$,
  $\WkObs P \subseteq \{ \flambda, \fenter, \fskip, \fdone, suc, z
  \}$.
\end{cor}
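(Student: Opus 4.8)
The plan is to first reduce the statement to a single syntactic fact about outputs. By definition, $\hidden$ consists of exactly the non-flag names, so the names $a \notin \hidden$ are precisely the flag names $\flambda, \fenter, \fskip, \fdone, suc, z$. Consequently any element of $\WkObs P$ is either an input barb on one of these six names---already in the target set---or a coname $\outn a$ with $a$ a flag name. Hence it suffices to rule out the latter: I must show that no machine process has a weak output observable action on a flag name. Since $P$ is a machine process, $\tr\sts \wtrans\tau P$ for some $\sts$, and any $P'$ with $P \wtrans\tau P'$ again satisfies $\tr\sts \wtrans\tau P'$ and is therefore itself a machine process. So it is enough to prove that no machine process can perform a strong output transition $\trans{\out a Q}$ with $a$ a flag name.

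I would establish this through a syntactic invariant, read off Figure~\ref{fig:translation}: in the translation $\tr{\cdot}$ of every term, stack, and configuration, as well as in each auxiliary process $\Restart$, $\Rec$, $\Contt$, $\IChoice{\cdot}$, and in the numeral translation $\trint{\cdot}$, the channels $\flambda, \fenter, \fskip, \fdone, suc, z$ occur only as the subject of an input $\inp{\cdot}{}{}$, never of an output $\out{\cdot}{}$. I then show this property is preserved by internal reduction. A $\tau$-step consumes a matching output and input on a common channel and substitutes the transmitted process for a bound variable; but every output occurring in the translation is on a non-flag channel, so no $\tau$-step ever fires on a flag channel, and moreover each transmitted process is again a translation or one of the fixed auxiliaries above, all of which keep flag channels in input position. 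Substitution can therefore never introduce a flag output, and the invariant holds of every machine process.

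To conclude, a process satisfying the invariant admits no transition $\trans{\out a Q}$ with $a$ a flag name, hence exhibits no strong coname observable action on a flag name, and therefore no weak one either; with the reduction of the first paragraph this yields $\WkObs P \subseteq \{\flambda, \fenter, \fskip, \fdone, suc, z\}$. Alternatively, the same conclusion follows directly from Lemma~\ref{lem:determinist}, whose characterization of the transitions of a (choice or non-choice) machine process leaves flag inputs as the only non-$\tau$, non-hidden transitions. The main obstacle is the closure-under-reduction check in the invariant: one must verify that every output in the translation carries only a process---a translation or one of the fixed auxiliaries---that itself keeps the flag names in input position, so that no substitution can create a flag output. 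Because the flag channels also carry no internal output and thus never participate in a $\tau$-step, this verification is a finite and routine inspection rather than a genuine difficulty.
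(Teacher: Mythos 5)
Your argument is correct and matches the paper's: the paper states this corollary without a separate proof, as a direct consequence of Lemma~\ref{lem:determinist}, whose own justification is the same kind of syntactic case analysis on the shape of machine processes that you carry out. You merely make explicit the invariant the paper leaves implicit---that the flag names $\flambda, \fenter, \fskip, \fdone, suc, z$ occur only in input-subject position throughout Figure~\ref{fig:translation} and that this is preserved by $\tau$-steps (so no weak output barb $\outn \flag$ can ever arise), which together with the fact that $\hidden$ hides every non-flag name of the translation yields exactly the stated inclusion.
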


The second property we need is an operational correspondence between
the machine and its translation. 
\begin{lem}[Faithfulness]
  \label{lem:faithful}
  The following holds:
  \begin{itemize}
  \item $\sts \fnred{\fortau} \sts'$ iff
    $\nexts \fortau {\tr\sts} = \tr{\sts'}$;
  \item $\sts \fnred\fdone$ iff $\tr \sts \wtrans\tau \trans\fdone P$
    and $P \hobisim \nil$.
  \end{itemize}
\end{lem}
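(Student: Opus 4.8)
The plan is to prove the Faithfulness lemma by combining the Determinism lemma (Lemma~\ref{lem:determinist}) with a case analysis on the seven transition rules of the NFB machine. The lemma has two items; the first is the core statement relating each machine step $\sts \fnred{\fortau} \sts'$ to the reachability set $\nexts{\fortau}{\tr\sts}$, and the second is a refinement handling the terminating transition $\hname{Done}$, whose target configuration is empty and whose translation must be shown barbed-equivalent to $\nil$.

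\smallskip

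\emph{First} I would establish the left-to-right direction of the first item by inspecting each rule of Figure~\ref{fig:machine}. For the $\tau$-flagged rules $\hname{Push}$ and $\hname{Grab}$, this amounts to rerunning the reductions already traced informally in Section~\ref{ss:nfb-hocore}: a $\hname{Push}$ step corresponds to a single communication on $c$ followed by the rebuilding of the stack, and a $\hname{Grab}$ step to the communication on $\hp$ guarded by the extra $b$-handshake that garbage-collects $\Restart$. For the flagged rules I would reuse the explicit transition sequences already displayed in the paper: the $\hname{Lambda}$ case is exactly the four-step derivation culminating in $\out k {\trint{n+1}} \parallel \out c {\tr\mts} \parallel \tr{\subst n x t}$, i.e. $\tr{\fneval{\subst n x t}{\mts}{n+1}}$; the $\hname{Var}$ case is the unfolding of $\trint\fvr$ that emits the $\fvr$ flag (the $suc$/$z$ sequence) and fires the $init$ message to spawn $\Contt$; and the $\hname{Enter}$/$\hname{Skip}$ cases are precisely the two branches of the choice process $\Choice{t}{\stk}{n}$ computed just before the lemma, reaching $\tr{\fneval t \mts n}$ and $\tr{\fncont\stk n}$ respectively. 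In each case I verify that the intermediate processes along the weak reduction are \emph{not} themselves machine translations, so that the displayed target genuinely lies in $\nexts{\fortau}{\tr\sts}$ as required by Definition~\ref{def:next-config}.

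\smallskip

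\emph{For the converse} direction of the first item, I would argue that no \emph{other} reduction of $\tr\sts$ can reach a machine translation with a first flag $\fortau$: this is where Lemma~\ref{lem:determinist} does the work. Since every machine process that is not a choice process has at most one outgoing transition, and a choice process has exactly two $\tau$-transitions distinguished by their weak barbs (hence leading to distinct, $\fenter$- vs. $\fskip$-flagged continuations), the reduction graph from $\tr\sts$ is deterministic up to the single branching point. Thus $\nexts{\fortau}{\tr\sts}$ is a singleton whenever nonempty, and the target computed in the forward direction is the \emph{only} candidate; matching it against the right-hand side of each machine rule recovers $\sts \fnred{\fortau} \sts'$. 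The second item then follows from the first specialized to $\hname{Done}$: a $\hname{Done}$ step leaves the configuration $\fncont\mts n$, whose translation reaches, via the empty-stack behavior $\tr\mts = \inp b x x$ communicating with the $\out b{\inp\fdone{\_}\nil}$ inside $\Contt$, a process that flags $\fdone$ and is left with only inert residual outputs (messages on the hidden names $c$, $k$, $rec$, etc., together with the stuck $\Rec$). I would then check directly that this residual has no weak observable action on any flag name and exhibits no barb distinguishable from $\nil$ relative to $\hidden$, using Corollary~\ref{cor:wkobs} to bound its observables, whence it is barbed-equivalent to $\nil$.

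\smallskip

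\emph{The main obstacle} I expect is the bookkeeping in the backward direction: Definition~\ref{def:next-config} demands that none of the \emph{intermediate} processes $P'$ strictly between $\tr\sts$ and $Q$ be themselves of the form $\tr{\sts''}$, and verifying this requires knowing exactly which processes are machine translations and confirming that the handshake sequences (the $b$-synchronizations, the $init$-triggered unfolding of $\Rec$ into $\Contt$, the two-message encoding of internal choice) never transit through a genuine translated configuration mid-stream. This is purely a matter of carefully enumerating the finitely many process shapes reachable from each $\tr\sts$ and confirming the gap condition, but it is the step most prone to overlooked cases. The barbed-equivalence argument in the second item is the other delicate point, since it requires reasoning about $\hobisim$ rather than mere syntactic reduction, but it reduces to a small explicit bisimulation relating the $\fdone$-residual to $\nil$ under the hidden-name set $\hidden$.
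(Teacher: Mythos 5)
Your proposal is correct and follows essentially the same route as the paper, whose entire proof is the one-line remark that the ``only if'' direction is a case analysis on $\sts$ and the ``if'' direction a case analysis on $\tr\sts$; your plan simply fleshes out that case analysis, reusing the reduction traces already displayed in Section~\ref{ss:nfb-hocore} and invoking Lemma~\ref{lem:determinist} to organize the converse. The points you flag as delicate (the gap condition of Definition~\ref{def:next-config} and the $\hobisim\nil$ check for the $\fdone$ residual) are exactly the details the paper leaves implicit.
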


\begin{proof}
  The proof is by a straightforward case analysis on $C$ in the ``only
  if'' direction, and by case analysis on $\tr C$ in the ``if''
  direction.
\end{proof}

With these properties, we can prove that the translation from NFB
machines to \hocore processes is fully abstract. We define the process
complementing a flag as follows: for a flag $\flag \neq \fvr$, we
write~$\outn \flag$ for the process $\out \flag \nil$.  We define
$\outn \fvr$ inductively so that
$\outn{n + 1} \eqdef \out{suc}\nil \parallel \outn{n}$ and
$\outn{0} \eqdef \out{z}{\nil}$. The following lemma allows us to
reason up to $\tau$-transitions on the \hocore side.
\begin{lem}
  \label{lem:tau-implies-bisim}
  Let $P$, $P'$ be machine processes. If $P \wtrans\tau P'$ and
  $\WkObs P = \WkObs {P'}$, then $P \hobisim P'$.
\end{lem}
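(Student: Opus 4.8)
The plan is to exhibit a barbed bisimulation with respect to $\hidden$ that relates $P$ and $P'$. Since the definition of barbed bisimulation builds in closure under parallel composition with processes whose free names avoid $\hidden$, the natural candidate is the symmetric relation $\relR$ consisting of all pairs $(P_1 \parallel R, P_2 \parallel R)$ and $(P_2 \parallel R, P_1 \parallel R)$ such that $P_1$, $P_2$ are machine processes with $P_1 \wtrans\tau P_2$ and $\WkObs{P_1} = \WkObs{P_2}$, and $\fn{R} \cap \hidden = \emptyset$. Taking $R = \nil$ gives $(P, P') \in \relR$, and taking $P_1 = P_2$ (empty reduction) shows that $\relR$ contains the diagonal on processes of the form $Q \parallel R$ with $Q$ a machine process; diagonal pairs trivially satisfy all three clauses, so only the genuinely asymmetric pairs require work.

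The key structural observation, which I expect to be the main obstacle, is that a reduction $P_1 \wtrans\tau P_2$ with $\WkObs{P_1} = \WkObs{P_2}$ is forced to be deterministic and to take no choice branch. First, weak barbs are monotone along $\tau$-reduction: $A \wtrans\tau B$ implies $\WkObs{B} \subseteq \WkObs{A}$, since any weak observable of $B$ is reachable from $A$. Hence every intermediate process of the reduction has the same weak observable set as $P_1$. Now a choice process $\Choice{t}{\stk}{n}$ satisfies $\WkObs{\Choice{t}{\stk}{n}} = \{\fenter, \fskip\}$, while its two $\tau$-successors have weak observable sets $\{\fenter\}$ and $\{\fskip\}$, as computed just before Lemma~\ref{lem:determinist}. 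So taking a choice branch strictly removes one of $\fenter$, $\fskip$ from the weak observable set, and by monotonicity this flag is never recovered, contradicting $\WkObs{P_1} = \WkObs{P_2}$. Therefore no intermediate process strictly before $P_2$ is a choice process, and by the first item of Lemma~\ref{lem:determinist} each such process has a unique transition, which here must be the $\tau$-step taken; the reduction is thus the unique deterministic $\tau$-path from $P_1$ to $P_2$.

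It remains to verify the three clauses for an asymmetric pair $(P_1 \parallel R, P_2 \parallel R)$ with $P_1 \neq P_2$. For barbs, a strong observable of $P_1 \parallel R$ either comes from $R$, and is then exhibited identically by $P_2 \parallel R$, or comes from $P_1$, in which case by Corollary~\ref{cor:wkobs} it is a flag name lying in $\WkObs{P_1} = \WkObs{P_2}$, so $P_2 \parallel R$ exhibits it weakly. Closure under parallel composition is immediate, since $\fn{R \parallel R'} \cap \hidden = \emptyset$ whenever $\fn{R} \cap \hidden = \fn{R'} \cap \hidden = \emptyset$. For the $\tau$-clause, consider $P_1 \parallel R \trans\tau S$. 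A step internal to $R$, say $R \trans\tau R'$, is matched by $P_2 \parallel R \trans\tau P_2 \parallel R'$, and the pair remains in $\relR$. A step $P_1 \trans\tau P_1'$ is, by the determinism established above, the unique first step of the path to $P_2$, so matching with zero steps on the right places $(P_1' \parallel R, P_2 \parallel R)$ in $\relR$ because $P_1' \wtrans\tau P_2$ with $\WkObs{P_1'} = \WkObs{P_2}$. The only remaining possibility is a communication between $P_1$ and $R$; since $\fn{R} \cap \hidden = \emptyset$ and, by Corollary~\ref{cor:wkobs}, $P_1$ never outputs on a non-hidden name, such a communication must consist of $R$ emitting on a flag name that $P_1$ inputs, but by the first item of Lemma~\ref{lem:determinist} this input is the unique transition of the non-choice process $P_1$, leaving it no $\tau$-step and forcing $P_1 = P_2$, against $P_1 \neq P_2$; hence this case does not arise. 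The obligations for the reverse pair $(P_2 \parallel R, P_1 \parallel R)$ are discharged by first letting the $P_1$-side catch up along $P_1 \wtrans\tau P_2$ and then replaying the step of the $P_2$-side, landing on the diagonal. Thus $\relR$ is a barbed bisimulation with respect to $\hidden$, and $P \hobisim P'$.
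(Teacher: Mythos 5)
Your proposal is correct and takes essentially the same route as the paper, which exhibits the same candidate relation and dismisses the verification with ``one can check \ldots using Lemma~\ref{lem:determinist}''; your monotonicity-of-weak-barbs argument ruling out choice branches is precisely the content of that check. The only (trivially repaired) difference is that the paper closes the relation with the \emph{full} diagonal $\{(R,R)\}$ rather than your diagonal restricted to $Q \parallel R$ with $Q$ a machine process, which is what you need so that the residual of a flag communication on a diagonal pair---a process no longer reachable from a translated configuration by $\tau$-steps alone---still lands in the relation.
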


\begin{proof}
  One can check that
  $\{ (P \parallel R, P' \parallel R) \midd P \wtrans\tau P', \WkObs P
  = \WkObs{P'}\} \cup \{(R, R) \}$ is a barbed bisimulation using
  Lemma~\ref{lem:determinist}.
\end{proof}

\begin{lem}
  \label{lem:ho-lambda}
  If $\tr\sts \hobisim \tr{\sts'}$, then $\sts \ambisim \sts'$. 
\end{lem}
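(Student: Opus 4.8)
The plan is to show that if two translated machine configurations are barbed equivalent, then the underlying configurations are machine bisimilar. I would prove this by exhibiting a concrete machine bisimulation built from barbed equivalence on the \hocore side. Specifically, I would define
\[
  \mathord{\relR} \eqdef \{ (\sts, \sts') \midd \tr\sts \hobisim \tr{\sts'} \}
\]
and verify that it is a machine bisimulation. The key technical tools are already in place: Lemma~\ref{lem:faithful} (Faithfulness) relates each flagged machine step to a weak transition on the \hocore side reaching the next translated configuration, Lemma~\ref{lem:determinist} (Determinism) controls the shape of reachable machine processes, and Corollary~\ref{cor:wkobs} pins down that the only observable actions are on flag names.

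First I would take a machine transition $\sts \fnred{\flag} \sts'$ and translate it. By Faithfulness, $\nexts \flag {\tr\sts} = \tr{\sts'}$, which unfolds to $\tr\sts \wtrans\tau \trans\flag \wtrans\tau \tr{\sts'}$. The subtle point is that machine bisimilarity matches individual flags, whereas barbed bisimilarity matches weak observable actions and is closed under parallel composition. To transfer a flag match across $\hobisim$, I would place a suitable testing context in parallel—using the complementing process $\outn\flag$ defined just before the statement—so that a communication on the flag name becomes a $\tau$-step that barbed equivalence must replicate, while the barb machinery guarantees $\sts'$ is matched by some $\tr{\sts''}$ with $\tr{\sts'} \hobisim \tr{\sts''}$. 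For the $\fvr$ flags I would use the inductively defined $\outn\fvr$ tester built from $suc$ and $z$, exploiting that $\wtrans\fvr$ abbreviates the corresponding sequence of transitions. The terminating transition $\sts \fnred\fdone$ is handled by the second clause of Faithfulness, which characterizes it via $\tr\sts \wtrans\tau \trans\fdone P$ with $P \hobisim \nil$; since $\nil$ has no observable action, barbed equivalence forces the matching configuration to likewise reach a $\fdone$-terminating process, giving the second clause of machine bisimulation.

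The main obstacle will be the non-determinism introduced by choice processes, where Determinism tells us a single $\tau$-step splits into two continuations distinguished only by their later observable actions ($\WkObs{P_1} \neq \WkObs{P_2}$). Here a naive argument breaks: barbed equivalence may match the initial $\tau$-step by staying put or by choosing the ``wrong'' branch, so I cannot immediately read off which flag the other side will emit. The fix is to appeal to Lemma~\ref{lem:tau-implies-bisim}: because the two branches of a choice process keep the same weak observable actions as their predecessor up until the distinguishing flag is emitted, I can reason up to $\tau$-transitions and reduce the matching problem to the point just before the flag ($\fenter$ or $\fskip$) is played, where the flag-name barb disambiguates the branch. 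I would therefore feed each flag test through the context $\outn\flag$ and use that barbed equivalence preserves the weak observable action on that flag name, recovering the matching flagged transition on the other machine. Closing the argument amounts to checking that the matched \hocore process is again of the form $\tr{\sts''}$—which is exactly what $\nexts \flag {\cdot}$ guarantees via Faithfulness—so that the pair $(\sts', \sts'')$ stays in $\relR$.
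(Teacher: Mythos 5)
Your plan is correct and follows essentially the same route as the paper's proof: the same candidate relation $\{ (\sts, \sts') \midd \tr\sts \hobisim \tr{\sts'} \}$, the same use of the $\outn\flag$ tester in parallel combined with Corollary~\ref{cor:wkobs} to force the matching side to consume the flag, Faithfulness to recover a machine transition via $\nexts\flag{\cdot}$, Lemma~\ref{lem:tau-implies-bisim} to close the gap between the matched process and the next translated configuration, and the second clause of Faithfulness for $\fdone$. The only part left implicit is the explicit argument that no choice process intervenes between the matched process and $\tr{\sts'_3}$ (so that determinism yields $\WkObs{P} = \WkObs{\tr{\sts'_3}}$), but you correctly identify both the obstacle and the tool that resolves it.
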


\begin{proof}
  We prove that
  \(\mathord{\relR} \eqdef \{ (\sts, \sts') \midd \tr\sts \hobisim
  \tr{\sts'} \} \) is a machine bisimulation. Let $\sts_1 \relR
  \sts'_1$.

  Suppose $\sts_1 \fnredrtc\tau \sts_2 \fnred \flag \sts_3$. By
  Lemma~\ref{lem:faithful}, we have
  \[
    \tr{\sts_1} \parallel \outn \flag \wtrans\tau \tr{\sts_2} \parallel
    \outn \flag \wtrans \tau \tr{\sts_3}.
  \]
  Because $\tr {\sts_1} \hobisim \tr{\sts'_1}$, there exists $P$ such
  that $\tr{\sts'_1} \parallel \outn \flag \wtrans\tau P$ and
  $\tr{\sts_3} \hobisim P$. Since $\WkObs{\tr{\sts_3}} = \WkObs P$, we
  have in particular $\neg{(P \obss{\outn \flag})}$ by
  Corollary~\ref{cor:wkobs}.  Consequently, there exists~$P'$ such that
  $\tr{\sts'_1} \wtrans\tau P'$, $P' \obss \flag$, and
  $P' \parallel \outn \flag \wtrans\tau P$. Consider
  $\nexts \flag {P'} = \tr{\sts'_3}$ for some~$\sts'_3$, which must exists since
  \(P'\) still has a barb on \(F\). We have
  $\tr{\sts'_1} \wtrans\tau P' \wtrans \flag \tr{\sts'_3}$, which
  implies $\sts'_1 \fnredrtc\tau \fnred \flag \sts'_3$ by
  Lemma~\ref{lem:faithful}. To show that $\sts_3 \relR \sts'_3$, we
  need to prove that $\tr{\sts_3} \hobisim \tr{\sts'_3}$. Because
  $\tr{\sts_3} \hobisim P$, it is enough to prove
  $P \hobisim \tr{\sts'_3}$.

  Suppose there is a choice process $Q$ so that
  $P' \parallel \outn \flag \wtrans \tau Q \wtrans\tau
  \tr{\sts'_3}$. It means that a communication on
  $\flag' \in \{ \fenter, \fskip \}$ happens, as it is the only
  possibility for a choice process to reach a translated
  configuration. The communication cannot be with $\outn \flag$,
  because that output is consumed by $P'$. Therefore, it implies that
  $P'\wkobs{\outn{\flag'}}$, but $P'$ is a machine process, so it is
  not possible by Corollary~\ref{cor:wkobs}.

  In the end, we have both $P' \parallel \outn \flag \wtrans\tau P$
  and $P' \parallel \outn \flag \wtrans\tau \tr{\sts'_3}$, and there
  is no choice process between $P' \parallel \outn \flag$ and
  $\tr{\sts'_3}$. As this is the only source of non-determinism, by
  Lemma~\ref{lem:determinist}, we either have $P \wtrans\tau
  \tr{\sts'_3}$ or $\tr{\sts'_3} \wtrans\tau P$. 

  We now prove that $\WkObs{\tr{\sts'_3}} = \WkObs{P}$, to be able to
  conclude with Lemma~\ref{lem:tau-implies-bisim}. The only way to have
  $\WkObs{\tr{\sts'_3}} \neq \WkObs{P}$ is if a choice has been made
  between $\tr{\sts'_3}$ and $P$, i.e., if there is a choice process $Q$
  making at least one $\tau$-transition between the two. The case
  $P \wtrans\tau Q \trans \tau \wtrans\tau \tr{\sts'_3}$ has been
  ruled out, because there is no choice process between
  $P' \parallel \outn \flag$ and $\tr{\sts'_3}$. The other possibility
  is $\tr{\sts'_3} \wtrans \tau Q \trans\tau \wtrans\tau P$, where \(Q\) is a
  choice process. Because a
  choice has been made, we have either
  $\WkObs P \subseteq \{ \fenter \}$ or
  $\WkObs P \subseteq \{ \fskip \}$. However, we have
  $\WkObs P = \WkObs {\tr{C_3}}$, and the translation of a
  configuration cannot have only a single flag of a non-deterministic
  choice as weak observable action, which leads to a
  contradiction. Consequently, we have
  $\WkObs P = \WkObs{ \tr{\sts'_3}}$, hence
  $P \hobisim \tr{\sts'_3}$ by Lemma~\ref{lem:tau-implies-bisim}.\\

  Suppose $\sts_1 \fnredrtc\tau \sts_2 \fnred \fdone$; then
  $\tr{\sts_1} \parallel \outn \fdone \wtrans\tau \tr{\sts_2}
  \parallel \outn \fdone \wtrans \tau P$ with $P \hobisim \nil$ by
  Lemma~\ref{lem:faithful}. Therefore, there exists $P'$ such that
  $\tr{\sts'_1} \parallel \outn \fdone \wtrans\tau P'$ and
  $P \hobisim P' \hobisim \nil$. With Lemma~\ref{lem:faithful}, we
  have $\sts'_1 \fnredrtc\tau \fnred\fdone$, as wished.
\end{proof}

\begin{lem}
  \label{lem:lambda-ho}
  If $\sts \ambisim \sts'$, then $\tr\sts \hobisim \tr{\sts'}$.
\end{lem}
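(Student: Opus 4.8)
The plan is to establish the converse of Lemma~\ref{lem:ho-lambda} by exhibiting a barbed bisimulation \wrt $\hidden$ that relates $\tr\sts$ and $\tr{\sts'}$ whenever $\sts \ambisim \sts'$. Because barbed equivalence quantifies over parallel environments, the candidate relation must already be closed under composition with an arbitrary $R$ whose free names avoid $\hidden$. I therefore take
\[
  \mathord{\relR} \eqdef \{ (P \parallel R, P' \parallel R) \midd P \mathrel{\mathcal S} P',\ \fn R \cap \hidden = \emptyset \},
\]
where $\mathcal S$ relates two machine processes $P, P'$ for which there are configurations $\sts, \sts'$ with $\sts \ambisim \sts'$, $\tr\sts \wtrans\tau P$, $\tr{\sts'} \wtrans\tau P'$, and $\WkObs{\tr\sts} = \WkObs P$, $\WkObs{\tr{\sts'}} = \WkObs{P'}$, so that by Lemma~\ref{lem:tau-implies-bisim} each of $P, P'$ is barbed-equivalent to the translation of the configuration it came from. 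Closure under $\parallel R$ makes the contextuality clause immediate: composing a pair with a further $R'$ merely re-indexes it by the environment $R \parallel R'$, whose free names still avoid $\hidden$.

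For the barb clause, a strong observable action of $P \parallel R$ is an action of one component. If it comes from $R$, the same $R$ provides it on the right. If it comes from $P$, then by Corollary~\ref{cor:wkobs} it is an input barb on a flag name, hence corresponds to a flagged transition the machine is about to take; $P \mathrel{\mathcal S} P'$ together with Lemma~\ref{lem:faithful} and $\sts \ambisim \sts'$ yields a matching weak barb of $P'$, and a fortiori of $P' \parallel R$.

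The heart of the argument is the $\tau$-clause. A transition $P \parallel R \trans\tau Q$ arises in one of three ways. A $\tau$-step internal to $R$ is matched by the identical step on the right, staying in $\relR$. A communication between $P$ and $R$ can, by Corollary~\ref{cor:wkobs} and $\fn R \cap \hidden = \emptyset$, only be an output of $R$ on a flag name consumed by an input of $P$ on that name; since every flag input in the translation is of the form $\inp\flag{\_}{\cdot}$, the message is discarded and the step coincides with a flagged machine transition of $P$. Via Lemma~\ref{lem:faithful} this step is some $\sts \fnred\flag \sts_2$, which $\sts \ambisim \sts'$ matches by $\sts' \fnredrtc\tau \fnred\flag \sts'_2$ with $\sts_2 \ambisim \sts'_2$; translating back gives a weak reduction of $P' \parallel R$ that consumes the same output of $R$ and lands in a pair related by $\relR$. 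Finally, a $\tau$-step internal to $P$ is controlled by Lemma~\ref{lem:determinist}: if $P$ is not a choice process the step preserves $\WkObs P$ and we stay in $\relR$, matching with zero steps on the right and invoking Lemma~\ref{lem:tau-implies-bisim}; if $P$ is a choice process the step commits to $\fenter$ or $\fskip$, and we must reproduce the same commitment on the $P'$ side.

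The main obstacle is exactly this choice-process case, because committing a choice is the one internal step that changes the barbs ($\WkObs{P_1} \neq \WkObs{P_2}$ by Lemma~\ref{lem:determinist}), so it cannot be absorbed by the up-to-$\tau$ reasoning of Lemma~\ref{lem:tau-implies-bisim}. I expect to handle it by observing that a choice process on the left originates, via $\ambisim$ and Lemma~\ref{lem:faithful}, from a configuration $\fncont{t \cons \stk} n$ whose partner is forced to reach a corresponding choice process offering the same two flags; the explicit-nondeterminism design (distinct flags $\fenter$ and $\fskip$ for the two branches) guarantees that each commitment of $P$ is matched by the commitment of $P'$ to the same flag, after which both sides reduce deterministically to translations of $\ambisim$-related configurations and re-enter $\relR$. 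Checking that every intermediate machine process met along these reductions is accounted for, either by $\mathcal S$ directly or up to $\tau$ through Lemma~\ref{lem:tau-implies-bisim}, is the bulk of the remaining bookkeeping.
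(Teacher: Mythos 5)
Your overall strategy is the paper's: exhibit a barbed bisimulation built from pairs of processes lying near translations of machine-bisimilar configurations, pre-composed with an arbitrary $R$ with $\fn R \cap \hidden = \emptyset$, and drive the case analysis with Lemmas~\ref{lem:determinist}, \ref{lem:faithful} and~\ref{lem:tau-implies-bisim}. The problem is that the relation you define is too small, and the diagram does not close within it; the "remaining bookkeeping" you defer is precisely where the relation has to be enlarged. Concretely, your $\mathcal S$ only admits processes $P$ with $\tr\sts \wtrans\tau P$, i.e., reachable from a translation by $\tau$-steps. But after a communication on a flag name with $R$, the continuation $P''$ satisfies $P \trans\flag P'' \wtrans\tau \tr{\sts''}$: it reduces \emph{to} the next translation and is not $\tau$-reachable \emph{from} any translation, so the pair you land in is outside $\relR$. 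This is why the paper's relation carries both orientations, ``$P \wtrans\tau \tr\sts$ and $Q \wtrans\tau \tr{\sts'}$, or $\tr\sts \wtrans\tau P$ and $\tr{\sts'} \wtrans\tau Q$''. Appealing to Lemma~\ref{lem:tau-implies-bisim} to identify $P''$ with $\tr{\sts''}$ would be reasoning up to $\hobisim$, an up-to technique for barbed bisimulation that the paper neither states nor proves sound, so you cannot silently rely on it.

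Two further omissions. First, your side condition $\WkObs{\tr\sts} = \WkObs P$ excludes exactly the post-commitment states of a choice process: once $\Choice t \stk n$ has taken its first $\tau$-step, $\WkObs{P_1} \subsetneq \WkObs{\tr{\fncont{t \cons \stk} n}}$ by Lemma~\ref{lem:determinist}, so the pair of same-branch committed processes is again outside your relation even though you correctly identify that both sides must commit to the same flag. The paper avoids this by comparing the two sides to \emph{each other} ($\WkObs P = \WkObs Q$) rather than each to its originating translation. Second, the transition $\fncont \mts n \fnred\fdone$ has no target configuration: the processes reached after flagging $\fdone$ are barbed-equivalent to $\nil$ but are not (near) translations of any $\sts$, so they cannot be placed in $\mathcal S$ at all; the paper adds a separate component $\{ (P \parallel R, Q \parallel R) \midd P \hobisim \nil \hobisim Q, \fn R \cap \hidden = \emptyset \}$ for them, which your relation lacks. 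With the relation enlarged in these three ways your argument becomes the paper's proof.
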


\begin{proof}
  We prove that 
  \begin{align*}
    \relR & \eqdef  \left\{ 
    \begin{gathered}
      (P \parallel R, Q \parallel R) \midd \sts \ambisim \sts',
      \WkObs{P} = \WkObs{Q}, \fn R \cap \hidden = \emptyset \\
      P \wtrans\tau \tr{\sts}, Q \wtrans\tau \tr{\sts'} \mbox{ or
      } 
      \tr{\sts} \wtrans\tau P, \tr{\sts'} \wtrans\tau Q \\
    \end{gathered}
    \right\}\\
          & \cup \{ (P \parallel R, Q \parallel R) \midd P \hobisim \nil
            \hobisim Q, \fn R \cap \hidden = \emptyset \}
  \end{align*}
  is a barbed bisimulation. The congruence condition holds by
  definition of $\relR$, and the observable actions are the same
  because of the condition $\WkObs{P} = \WkObs{Q}$.  The
  reductions from $R$ are easy to match; we thus only check the
  reductions involving $P$ or~$Q$.
    
  If $P \trans\tau P'$, then we show that there exists $Q'$ such that
  $Q \parallel R \wtrans\tau Q' \parallel R$ and
  $P' \parallel R \relR Q' \parallel R$. If $P \wtrans\tau \tr\sts$
  and $Q \wtrans\tau \tr{\sts'}$, then taking $Q' = \tr{\sts'}$
  works. If $\tr\sts \wtrans\tau P$ and $\tr{\sts'} \wtrans\tau Q$,
  then we distinguish two cases. If $P$ is not a choice process, then
  $\WkObs{P} = \WkObs{P'}$, and we can simply choose $Q' =
  Q$. Otherwise, the transition $P \trans\tau P'$ is making a choice
  and $\WkObs{P'} \subsetneq \WkObs{P}$. But $\WkObs{P} = \WkObs{Q}$
  implies that $Q$ also reduces to a choice process. Then it is
  possible to choose the same branch as $P'$ and to define $Q'$
  accordingly.

  Suppose $P \parallel R \trans\tau P'$ with a communication on a
  flag $\flag$---a communication on another name is not possible
  because $\fn R \cap \hidden = \emptyset$. Then $R \obss{\outn
    \flag}$ and $P \obss \flag$, which is possible only in the case
  $\tr \sts \wtrans\tau P$. Therefore we also have $\tr{\sts'}
  \wtrans\tau Q$ by definition of $\relR$.

  Suppose $\flag \neq \fdone$: then we also have
  $P \trans \flag P'' \wtrans\tau \tr{\sts''}$ for some $P''$ with
  $\tr{\sts''} = \nexts \flag {P}$. By Lemma~\ref{lem:determinist},
  there exist $R'$ such that $P' = P'' \parallel R'$. We have
  $\tr\sts \wtrans \flag \tr{\sts''}$, which implies
  $\sts \fnredrtc\tau \fnred \flag \sts''$ by
  Lemma~\ref{lem:faithful}. Because $\sts \ambisim \sts'$, there
  exists~$\sts'''$ such that
  $\sts' \fnredrtc\tau \fnred \flag \sts'''$ and
  $\sts'' \ambisim \sts'''$. By Lemma~\ref{lem:faithful}, this implies
  $\tr{\sts'} \wtrans \flag \tr{\sts'''}$; in particular, there exists
  $Q''$ such that
  $\tr{\sts'} \wtrans\tau \trans \flag Q'' \wtrans\tau
  \tr{\sts'''}$. By Lemma~\ref{lem:determinist}, we also have
  $Q \wtrans\tau \trans \flag Q''$, therefore
  $Q \parallel R \wtrans\tau Q'' \parallel R'$. To show that
  $P'' \parallel R' \relR Q'' \parallel R'$, what is left to check is
  $\WkObs{P''} = \WkObs{Q''}$. We have $P'' \wtrans\tau \tr{\sts''}$,
  so there is no choice process between $P''$ and $\tr{\sts''}$,
  because a transition with label $\fenter$ or $\fskip$ is necessary
  to go from a choice process to the translation of a
  configuration. Therefore, we have $\WkObs{P''} =
  \WkObs{\tr{\sts''}}$, and similarly $\WkObs{Q''} =
  \WkObs{\tr{\sts'''}}$. Because $\sts'' \ambisim \sts'''$, any action
  from $\sts''$ is matched by $\sts'''$ and conversely, which implies
  $\WkObs{\tr{\sts''}} = \WkObs{\tr{\sts'''}}$ by
  Lemma~\ref{lem:faithful}. In the end, we have $\WkObs{P''} =
  \WkObs{Q''}$, and therefore $P'' \parallel R' \relR Q''
  \parallel R'$, as wished.

  If $\flag = \fdone$, then $P \trans\fdone P''$ for some $P''$ such
  that $P'' \hobisim \nil$. By Lemma~\ref{lem:determinist}, there
  exists~$R'$ such that $P'= P'' \parallel R'$. We have
  $\tr\sts \wtrans\tau \trans\fdone P''$, so
  $\sts \fnredrtc\tau \fnred\fdone$ holds by
  Lemma~\ref{lem:faithful}. Because $\sts \ambisim \sts'$, we have
  $\sts' \fnredrtc\tau \fnred \fdone$, so by Lemma~\ref{lem:faithful},
  there exists $Q''$ such that
  $\tr{\sts'} \wtrans\tau \trans\fdone Q''$ and $Q'' \hobisim
  \nil$. By Lemma~\ref{lem:determinist}, we also have
  $Q \wtrans\tau \trans \fdone Q''$, therefore we have
  $Q \parallel R \wtrans\tau Q'' \parallel R'$. The resulting
  processes $P'' \parallel R'$ and $Q'' \parallel R'$ are in the
  second set defining $\relR$.
\end{proof}

\begin{thm}
  \label{th:am-ho-nf}
  $\sts \ambisim \sts'$ iff $\tr\sts \hobisim \tr{\sts'}$.
\end{thm}

\begin{proof}
  By Lemmas~\ref{lem:ho-lambda} and~\ref{lem:lambda-ho}.
\end{proof}

As a result, we can deduce full abstraction between \hocore and the
$\lambda$-calculus with normal-form bisimilarity. 
\begin{cor}
  $t \nfbisim s$ iff there exists $n > \max{(\fv t \cup \fv s)}$ such that
  $\tr{\fneval t \mts n} \hobisim \tr{\fneval s \mts n}$.
\end{cor}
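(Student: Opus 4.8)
The plan is to obtain this statement as a purely formal composition of the two main results of this section. Theorem~\ref{th:nf-am} characterizes normal-form bisimilarity on terms via machine bisimilarity of the corresponding initial NFB configurations, and Theorem~\ref{th:am-ho-nf} characterizes machine bisimilarity of configurations via barbed equivalence of their translations. Since the right-hand side of the first and the left-hand side of the second are literally the same relation $\ambisim$, chaining them should yield the claim directly.

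Concretely, for the forward direction I would assume $t \nfbisim s$ and apply Theorem~\ref{th:nf-am} to obtain some $n > \max{(\fv t \cup \fv s)}$ with $\fneval t \mts n \ambisim \fneval s \mts n$. Instantiating Theorem~\ref{th:am-ho-nf} at the configurations $\sts = \fneval t \mts n$ and $\sts' = \fneval s \mts n$ then gives $\tr{\fneval t \mts n} \hobisim \tr{\fneval s \mts n}$ for that same $n$, which is exactly the witness demanded by the existential in the conclusion. For the converse I would take the witness $n$ supplied by the hypothesis, use Theorem~\ref{th:am-ho-nf} in the other direction to turn $\tr{\fneval t \mts n} \hobisim \tr{\fneval s \mts n}$ into $\fneval t \mts n \ambisim \fneval s \mts n$, and then feed this into Theorem~\ref{th:nf-am}, whose existential is now satisfied by this very $n$, to conclude $t \nfbisim s$.

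The only point requiring a moment's care is that the existential quantifier over $n$ threads through correctly. This is immediate because Theorem~\ref{th:am-ho-nf} is a pointwise biconditional between $\ambisim$ and $\hobisim$ for a fixed pair of configurations: it neither introduces nor discharges any counter value, so whatever $n$ is chosen on one side is preserved on the other. I therefore do not expect any genuine obstacle here—all the difficulty has already been absorbed into establishing Theorems~\ref{th:nf-am} and~\ref{th:am-ho-nf}—and the corollary follows by transitivity of the two biconditionals.
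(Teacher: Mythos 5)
Your proposal is correct and is exactly the paper's proof: the corollary is obtained by chaining Theorem~\ref{th:nf-am} with Theorem~\ref{th:am-ho-nf}, and your observation that the witness $n$ threads through unchanged because Theorem~\ref{th:am-ho-nf} is a pointwise biconditional on fixed configurations is the right (and only) point to check.
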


\begin{proof}
  By Theorems~\ref{th:nf-am} and~\ref{th:am-ho-nf}.
\end{proof}

The proofs of Lemmas~\ref{lem:ho-lambda} and~\ref{lem:lambda-ho} rely
on the translation being deterministic (Lemma~\ref{lem:determinist}) and
faithful (Lemma~\ref{lem:faithful}). In particular, flagged
non-determinism in machines is represented by choice processes
in \hocore, so that these processes make the choice observable with different
flags before reducing to the next translated configurations. The
machine translations defined in the upcoming sections satisfy these
properties, and the correspondence between machines and their
translation in \hocore can be proved in a similar fashion.

\section{Applicative Bisimilarity}
\label{sec:full-abstr-applicative}

Proving full abstraction \wrt normal-form bisimilarity requires
minimal interactions---synchronizations on flags---between a
machine process and the outside. Achieving full abstraction \wrt
applicative bisimilarity is intuitively more difficult, since this
bisimilarity tests $\lambda$-abstractions by applying them to an
arbitrary argument. Internalizing such bisimilarity is simple using
higher-order flags: one may think of the following transition
to test the result of a computation:
\begin{equation*}
\kam {\lam x t} \mts \fnred{s} \kam {\subst s x t} \mts
\end{equation*}
Although \hocore has higher-order communications, we cannot use them
to obtain a fully abstract encoding of such a machine for two
reasons. First, allowing interactions where the environment provides a
term may allow arbitrary processes to be received, including processes
that are not in the image of the translation, thus potentially
breaking invariants of the translation. Second, the translation of the
KAM has to hide the names it uses for the translation to be fully
abstract; it is thus impossible for the context to use such names and
to provide translated $\lambda$-terms to be tested.

We thus propose in this section to internalize applicative
bisimilarity using ordinary flags: when the abstract machine reaches a
value, it switches to a different mode where it non-deterministically
builds a test term step by step, using flags to indicate its choices
so as to ensure that a bisimilar machine builds the same term. The
translation of such a machine into \hocore is then similar to the
translation of the NFB machine.

Using simple flags to generate terms step by step implies we need to
deal with binders. In particular, and anticipating on the \hocore
translation, we no longer can rely on the definition of binding and
substitution from \hocore, as we cannot write a process that inputs a
translation of $t$ and outputs a translation of $\lambda x.t$ using an
\hocore binding for $x$. We thus switch to a pure data description of
bindings, using de Bruijn indices. As such terms still need to be
executed, we first recall the definition of the KAM with de Bruijn
indices and the definitions of contextual equivalence and applicative
bisimilarity for $\lambda$-terms with de Bruijn indices. We then
present the machine internalizing applicative bisimilarity, its
translation into \hocore, and show they are fully abstract. We finally
conclude this section by showing how contextual equivalence is
internalized in an abstract machine, generating contexts instead of
terms.

\subsection{The KAM and Behavioral Equivalences}
\label{ssec:KAM-ab}

In the $\lambda$-calculus with de Bruijn indices, a variable is a number that
indicates how many $\lambda$'s are between the variable and its binder. For
example, $\lam x x$ is written $\lamB 0$ and $\lam{xy}{x \app y}$ is written
$\lamB {\lamB {1 \app 0}}$. The syntax of terms ($t,s$), closures ($\clos$),
environments~($\envt,\envtt$), and values ($v$) is as follows.
\[
  t,s \bnf n \midd t \app s \midd \lamB t \qquad \clos \bnf (t, e) \qquad \envt,\envtt \bnf
  \clos \cons e \midd \mte \qquad v \bnf (\lamB t, e) \\  
\]
A \emph{closure} $\clos$ is a pair $(t, \envt)$ where $\envt$ is an
environment mapping the free variables of $t$ to closures;
environments are used in lieu of substitutions. A term $t$ is closed
if $\fv t = \emptyset$, and a closure $(t, \envt)$ is closed if the
number of elements of $\envt$ is bigger than the highest free variable
of~$t$, and $\envt$ is composed only of closed closures. 

The semantics is given by the original, environment-based KAM, where a
configuration~$C$ is now composed of the closed closure $(t, e)$ being
evaluated, and a stack $\stk$ of closures. The transitions rules are
as follows.
\[ 
C \bnf \kBeval {t} \envt {\stk} \mbox{ (configurations)} \qquad \stk \bnf \clos \cons
\stk \midd \mts \mbox{ (stacks)}
\]
\begin{align*}
\kBeval {t \app s} \envt \stk & \kBred{\tau}  \kBeval t \envt {(s,\envt) \cons \stk} \tag{\hname{Push}}\\
\kBeval 0 {(t, \envt) \cons \envtt} \stk & \kBred{\tau} \kBeval t \envt \stk
\tag{\hname{Zero}}\\
\kBeval {n+1}{(t, \envt) \cons \envtt} \stk & \kBred{\tau} \kBeval n \envtt \stk
\tag{\hname{Env}}\\
\kBeval {\lamB t} \envt {\clos \cons \stk} & \kBred{\tau}  \kBeval t {\clos \cons \envt}
\stk \tag{\hname{Grab}}
\end{align*}
In $\hname{Push}$, the argument $s$ of an application is
stored on the stack with its environment~$e$ while the term $t$ in function
position is evaluated. If we get a $\lambda$-abstraction $\lamB t$ (transition
$\hname{Grab}$), then an argument $\clos$ is moved from the stack to the top of
the environment to remember that $\clos$ corresponds to the de Bruijn index 0,
and the evaluation continues with~$t$. Looking up the closure corresponding to a
de Bruijn index in the environment is done with the rules $\hname{Env}$ and
$\hname{Zero}$. Because we evaluate closed closures only, it is not possible to
obtain a configuration of the form $\kBeval {n} \mte \stk$. If a configuration of
the form $\kBeval {\lamB t} \envt \mts$ is reached, then the evaluation is finished,
and the result is $(\lamB t, \envt)$.

\paragraph*{Behavioral equivalences} Contextual equivalence compares closed
terms by testing them within all contexts. A context $\ctx$ is a term with a
hole $\mtctx$ at a variable position; plugging a term $t$ in $\ctx$ is written
$\inctx \ctx t$. A context is closed if $\fv\ctx=\emptyset$. Contextual
equivalence is then defined as follows.
    
\begin{defi}
  Two closed terms $t$ and $s$ are contextually equivalent, written
  $t \ctxequiv s$, if for all closed contexts $\ctx$,
  \(\kBeval {\inctx \ctx t} \mte \mts \kBredrtc\tau \kBeval {\lamB
    {t'}} \envt \mts\) for some $t'$ and $\envt$ iff
  $\kBeval {\inctx \ctx s} \mte \mts \kBredrtc\tau \kBeval {\lamB
    {s'}} \envtt \mts$ for some $s'$ and $\envtt$.
\end{defi}
Contextual equivalence is characterized by applicative
bisimilarity~\cite{Abramsky-Ong:IaC93}, which reduces closed terms to values
that are then applied to an arbitrary argument.

\begin{defi}
  \label{def:app-cbn}
  A symmetric relation $\relR$ on closed closures is an applicative bisimulation
  if $(t, \envt) \relR (s, \envtt)$ and
  $\kBeval t \envt \mts \kBredrtc\tau \kBeval{\lamB {t'}}{\envt'} \mts$ implies that
  there exist $s'$ and $\envtt'$ such that
  $\kBeval s \envtt \mts \kBredrtc\tau \kBeval {\lamB {s'}}{\envtt'} \mts$, and for all
  closed $t''$, we have
  $(t', (t'', \mte) \cons \envt') \relR (s', (t'', \mte) \cons \envtt')$.
  
  Applicative bisimilarity $\appbisim$ is the largest applicative
  bisimulation.
\end{defi}
We can prove full abstraction between \hocore and the $\lambda$-calculus by
either internalizing contextual equivalence or applicative bisimilarity. We
choose the latter, as it is closer to normal-form bisimilarity. We briefly show
in Section~\ref{ssec:ctxequiv} how the machine of Section~\ref{ssec:app-machine}
can be adapted to handle contextual equivalence.

\subsection{Argument Generation for the Applicative Bisimilarity}
\label{ssec:app-machine}

After evaluating a term thanks to the KAM, we want to produce a closed
argument to pass it to the resulting value, and then restart the
evaluation process. The approach of \cite{Cai-Fu:MSCS11} consists in
waiting for a name from the outside, giving access to a translated
$\lambda$-term to be applied to. If the environment does not provide
access to a well-formed translation, the process simulating
$\beta$-reduction remains stuck. In contrast, our machine directly
generates a well-formed argument: we represent a $\lambda$-term as a
syntax tree, with de Bruijn indices at the leaves, and applications
and $\lambda$-abstractions at the nodes. We start generating from the
leftmost variable, and we then go left to right, meaning that in an
application, we create the term in function position before the
argument. These choices are completely arbitrary, as doing the
opposite---starting from the rightmost variable and go right to
left---is also possible. To be sure that we produce a valid, closed,
$\lambda$-term, we have to check that each de Bruijn index~$n$ has at
least $n+1$ $\lambda$-abstractions enclosing it, and that each
application node has two children.

To do so, we consider machine states with four components: the term $t$ being
constructed, a counter $\acc$ giving the minimal number of
$\lambda$-abstractions required to close the term, a stack~$\rho$ used to build
applications, whose syntax is
\[
  \rho \bnf (t, \acc) \cons \rho \midd \mtr \tag{application stacks}
\]
and which is explained in more detail later, and finally the closure $\clos$
for which the argument is being built. This last element is never modified by
the building process, and is just used to restart the machine in evaluation mode
when the argument is finished. We distinguish two kinds of states: the index
state $\kBint n \rho \clos$, where only de Bruijn indices can be built, and
the term state $\kBtm t \acc \rho \clos$, where any term can be produced. The
transitions for these states are given in Figure~\ref{fig:am-ab}, and we call
the extension of the KAM with these transitions the AB machine. 

\begin{figure}
\begin{align*}
\kBeval {\lamB t} \envt \mts & \kBred{\farg} \kBint 0 \mtr {(t, \envt)}
\tag{\hname{Arg}}\\
\\
\kBint n \rho \clos & \kBred{\fsuc} \kBint {n+1} \rho \clos
\tag{\hname{Suc}}\\
\kBint n \rho \clos & \kBred{\fvar} \kBtm  n {n+1} \rho \clos
\tag{\hname{Var}}\\
\kBtm t {\acc+1} \rho \clos & \kBred{\flam} \kBtm {\lamB t}{\acc} \rho \clos
\tag{\hname{Lambda}}\\
\kBtm t 0 \rho \clos & \kBred{\flam} \kBtm {\lamB t} 0 \rho \clos
\tag{\hname{Lambda0}}\\
\kBtm t {\acc} \rho \clos & \kBred{\fappfun} \kBint 0 {(t, \acc) \cons \rho} \clos
\tag{\hname{AppFun}}\\
\kBtm s {\acc_1} {(t, \acc_2) \cons \rho} \clos & \kBred{\fapp} \kBtm {t \app
  s}{\maxx {\acc_1}{\acc_2}} \rho \clos
\tag{\hname{App}}\\
\kBtm t 0  \mtr {(s, \envt)} & \kBred{\fdone} \kBeval s {(t, \mte) \cons \envt} \mts 
\tag{\hname{Restart}}
\end{align*}
\caption{AB machine: argument generation}
\label{fig:am-ab}
\end{figure}

The transition $\hname{Arg}$ starts the building process when we reach
a $\lambda$-abstraction in evaluation mode with the empty continuation
$\mts$. We begin with the index 0 and with the empty stack $\mtr$. The
value of the index can then be increased with the transition
$\hname{Suc}$. When we reach the needed value for the index, the
transition $\hname{Var}$ switches to the term mode, indicating that we
need at least $n+1$ $\lambda$-abstractions to close the term. We use two modes
to prevent a $\hname{Suc}$ transition on a term which is not an index.

In term mode, we can add $\lambda$-abstractions to the term,
decreasing $\acc$ if $\acc > 0$ with transition $\hname{Lambda}$, or
leaving $\acc$ at $0$ with transition $\hname{Lambda0}$; the
abstractions we introduce when $\acc=0$ do not bind any variable. Once
we are done building a term $t$ in function position of an
application, we use transition $\hname{AppFun}$ to build the argument
$s$. We start again in index mode, but we store on top of $\rho$ the
term $t$ with its counter $\acc_2$. When we finish $s$ with a counter
$\acc_1$, we build the application with transition $\hname{App}$,
which takes the maximum of $\acc_1$ and $\acc_2$ as the new minimal
number of $\lambda$-abstractions needed above $t \app s$. Note that
the $\hname{App}$ transition is allowed only if $\rho$ is not empty,
meaning that at least one $\hname{AppFun}$ has been done
before. Finally, we can conclude the term building process with
transition $\hname{Restart}$ only if $\acc=0$, meaning that all the
variables of the term are bound, and if $\rho$ is empty, meaning that
there is no application waiting to be finished.

\begin{exa}
  Figure~\ref{fig:ex-generation} presents how we generate the term
  $\lamB {\lamB {(\lamB {\underline 0}) \app (1 \app \lamB 0)}}$; we start with
  the underlined 0.
  \begin{figure}

\[    
\begin{array}{cccccccccc}
  \kBint 0 \mtr \clos
  & \kBred\fvar & \kBtm 0 1 \mtr \clos
  & \kBred\flam & \kBtm {\lamB 0} 0 \mtr \clos
  & \kBred\fappfun & \kBint 0 {(\lamB 0, 0) \cons \mtr} \clos
  \\
  & & \begin{tikzpicture}  
    \Tree [.$0$ ]
  \end{tikzpicture}
  & &
    \begin{tikzpicture}
      \tikzset{level distance=17pt}
      \Tree [.$\lambda$
      [.$0$ ] ]
    \end{tikzpicture}
  & &
      \begin{tikzpicture}
        \tikzset{level distance=17pt}
        \Tree [.{ } 
        [.$\lambda$ [.$0$
        ] ]
        \edge[draw=none];
        [.{ } \edge[draw=none]; $0$
        ] ]
      \end{tikzpicture} \\[4mm]
\end{array}
\]
\[
\begin{array}{cccccccccc}
  \kBred\fsuc \kBred\fvar
  & \kBtm 1 2 {(\lamB 0, 0) \cons \mtr} \clos
  & \kBred\fappfun & \kBint 0 {(1, 2) \cons (\lamB 0, 0) \cons \mtr} \clos
  \\
  & \begin{tikzpicture}
    \tikzset{level distance=17pt}
    \Tree [.{ } 
    [.$\lambda$ [.$0$
    ] ]
    \edge[draw=none];
    [.{ } \edge[draw=none]; $1$
    ] ]
  \end{tikzpicture}
  & &  \begin{tikzpicture}
    \tikzset{level distance=17pt}
    \Tree [.{ } 
    [.$\lambda$ [.$0$
    ] ]
    \edge[draw=none];
    [.{ } $1$
    \edge[draw=none];
    $0$ 
    ] ]
  \end{tikzpicture}\\[4mm]
\end{array}
\]
\[
  \begin{array}{cccccccc}
    \kBred\fvar\kBred\flam
    &\kBtm {\lamB 0} 0 {(1, 2) \cons (\lamB 0, 0) \cons \mtr} \clos
    & \kBred\fapp &\kBtm {1 \app {\lamB 0}} 2 {(\lamB 0, 0) \cons \mtr} \clos \\
    & \begin{tikzpicture}
      \tikzset{level distance=17pt}
      \Tree [.{ } 
      [.$\lambda$ [.$0$
      ] ]
      \edge[draw=none];
      [.{ } $1$
      \edge[draw=none];
      [.$\lambda$ $0$ ]
      ] ] 
    \end{tikzpicture}
    & &
   \begin{tikzpicture}
     \tikzset{level distance=17pt}
     \Tree [.{ }
     [.$\lambda$ [.$0$
     ] ]
     \edge[draw=none];
     [.{@} $1$
     [.$\lambda$ $0$ ]
     ] ] 
   \end{tikzpicture}\\[4mm]
  \end{array}
\]
\[
  \begin{array}{ccccccccc}
    \kBred\fapp
    & \kBtm {(\lamB 0) \app (1 \app {\lamB 0})} 2 \mtr \clos
    & \kBred\flam & \kBtm {\lamB {(\lamB 0) \app (1 \app {\lamB 0})}}
                    1 \mtr \clos
    \\
    & \begin{tikzpicture}
      \tikzset{level distance=17pt}
      \Tree [.{@} 
      [.$\lambda$ [.$0$
      ] ]
      [.{@} $1$
      [.$\lambda$ $0$ ]
      ] ] 
    \end{tikzpicture}
    & &
   \begin{tikzpicture}
     \tikzset{level distance=17pt}
     \Tree [.$\lambda$ [
     .{@} 
     [.$\lambda$ [.$0$
     ] ]
     [.{@} $1$
     [.$\lambda$ $0$ ]
     ] ] ]
   \end{tikzpicture}\\[4mm]
  \end{array}
\]
\[
  \begin{array}{ccccc}
    \kBred\flam& \kBtm {\lamB {\lamB {(\lamB 0) \app (1 \app {\lamB
                 0})}}} 0 \mtr \clos & \kBred \fdone  
    \\
  & \begin{tikzpicture}
    \tikzset{level distance=17pt}
    \Tree [.$\lambda$
    [.$\lambda$ [
    .{@} 
    [.$\lambda$ [.$0$
    ] ]
    [.{@} $1$
    [.$\lambda$ $0$ ]
    ] ] ] ]
  \end{tikzpicture}
  \end{array}
\]
\caption{Example of argument generation}
\label{fig:ex-generation}
\end{figure}

\end{exa}

Any closed term $t$ can be generated with the AB machine, and it is possible to
define the sequence of flags $\Seq t$ that will be raised. We
write $()$ for the empty sequence, and
$(\flag_1, \ldots, \flag_n, (\flag'_1, \ldots, \flag'_m), \flag_{n+1}, \ldots, \flag_l)$ for the
sequence $(\flag_1, \ldots, \flag_n, \flag'_1, \ldots, \flag'_m, \flag_{n+1}, \ldots, \flag_l)$.
\begin{defi}\label{def:seq-flag}
  Given a term $t$, we define $\Seq t$ as
  \begin{align*}
    \Seq t & \eqdef  (\Seqbis t, \fdone) &     \Seqbis n & \eqdef (\Seqint n, \fvar)     \\
    \Seqbis {t \app s} & \eqdef (\Seqbis t, \fappfun, \Seqbis s,
                         \fapp) & \Seqint 0 & \eqdef () \\
    \Seqbis {\lamB t} & \eqdef (\Seqbis t, \flam) &  \Seqint {n+1} & \eqdef  (\Seqint n, \fsuc) 
  \end{align*}

\end{defi}
We write $\sts \kBred{\Seq t} \sts'$ for
$\sts \kBred{\flag_1} \ldots \kBred{\flag_m} \sts'$ where
$\Seq t = (\flag_1, \ldots, \flag_m)$.
\begin{lem}
  \label{lem:abam-all-t}
  If $t'$ is closed, then
  $\kBint 0 \mtr {(t, \envt)} \kBred{\Seq {t'}} \kBeval t {(t', \mte) \cons \envt}
  \mts$.
\end{lem}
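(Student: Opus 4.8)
The plan is to reduce the lemma to a stronger statement about the partial flag sequence $\Seqbis{\cdot}$ that is amenable to structural induction, and then close the argument with a single \hname{Restart} transition. The generalization must quantify over the application stack $\rho$ and the closure $\clos$: in the application case the stack grows (via \hname{AppFun}) before the induction hypothesis is reused on the argument, whereas $\clos$ is merely carried along untouched until the very end.

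The crux is to identify what the counter $\acc$ actually measures. I would introduce the meta-level function giving the least number of enclosing $\lambda$'s needed to close a term, namely $\ell(n) = n+1$, $\ell(t_1 \app t_2) = \max(\ell(t_1), \ell(t_2))$, and $\ell(\lamB s) = \max(\ell(s) - 1, 0)$, and note that $\ell(t') = 0$ holds exactly when $t'$ is closed. The generalized claim I would prove, for every $\rho$ and $\clos$, is
\[
  \kBint 0 \rho \clos \kBred{\Seqbis{t'}} \kBtm{t'}{\ell(t')}{\rho}{\clos}.
\]

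I would proceed by structural induction on $t'$. For $t' = n$, a preliminary sub-induction on $n$ establishes $\kBint 0 \rho \clos \kBred{\Seqint n} \kBint n \rho \clos$ by iterating \hname{Suc}, after which \hname{Var} yields $\kBtm n {n+1} \rho \clos$, and $n+1 = \ell(n)$. For $t' = \lamB s$, the induction hypothesis reaches $\kBtm s {\ell(s)} \rho \clos$; I would then apply \hname{Lambda} when $\ell(s) > 0$ and \hname{Lambda0} when $\ell(s) = 0$, in both cases landing on the counter $\max(\ell(s)-1,0) = \ell(\lamB s)$. For $t' = t_1 \app t_2$, I would chain the induction hypothesis on $t_1$, one \hname{AppFun} step pushing $(t_1, \ell(t_1))$ onto $\rho$, the induction hypothesis on $t_2$ under this enlarged stack, and finally one \hname{App} step, whose resulting counter $\maxx{\ell(t_2)}{\ell(t_1)}$ is precisely $\ell(t_1 \app t_2)$.

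Finally, instantiating the generalized claim with $\rho = \mtr$ and $\clos = (t, \envt)$ and using that $t'$ closed forces $\ell(t') = 0$ gives $\kBint 0 \mtr {(t, \envt)} \kBred{\Seqbis{t'}} \kBtm{t'}{0}{\mtr}{(t, \envt)}$; one \hname{Restart} step, enabled exactly because the counter is $0$ and $\rho = \mtr$, produces $\kBeval t {(t', \mte) \cons \envt} \mts$, which together with $\Seq{t'} = (\Seqbis{t'}, \fdone)$ is the desired run. The only delicate point, and the part I expect to require care, is pinning down the invariant $\acc = \ell(t')$ and checking it survives the \hname{Lambda}/\hname{Lambda0} dichotomy and the $\maxx{\cdot}{\cdot}$ in \hname{App}; the remaining steps are a routine traversal of the transitions dictated by the recursive definition of the flag sequence.
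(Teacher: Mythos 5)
Your proof is correct and follows essentially the same route as the paper: a generalization over $\rho$ and $\clos$, a sub-induction on $n$ for the \hname{Suc} steps, a structural induction threading the counter invariant through the \hname{Lambda}/\hname{Lambda0} and \hname{AppFun}/\hname{App} cases, and a final \hname{Restart}. The only difference is cosmetic---the paper states the invariant as $\acc = \max(\fv{t'})+1$ for non-closed $t'$ and $0$ otherwise, whereas you define the same quantity by structural recursion as $\ell(t')$, which arguably makes the inductive bookkeeping slightly more transparent.
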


\begin{proof}
  Let $\clos = (t, e)$. First, for all $n$, we have
  $\kBint 0 \rho \clos {\kBred \fsuc}^n \kBred\fvar \kBtm n {n+1} \rho \clos$.
  Then we show by induction on $t'$ that
  $\kBint 0 \rho \clos  \;\;\kBred{\Seqbis {t'}} \;\; \kBtm {t'}{\acc} \rho \clos$ where
  $\acc = \max{(\fv{t'})}+1$ if $t'$ is not closed, and $\acc=0$ otherwise. The
  case $t = n$ is concluded with the previous observation. If
  $t' = t'_1 \app t'_2$, then
  \begin{align*}
    \kBint 0 \rho \clos  \;\;& \kBred{\Seqbis {t'_1}}  \;\;\kBtm {t'_1}{\acc_1} \rho \clos
                            \mbox { (by induction)} \\
                          & \kBred  \fappfun \;\;\kBint 0 {(t'_1, \acc_1) \cons
                            \rho} \clos \\
                          & \kBred{\Seqbis {t'_2}}  \;\;\kBtm {t'_2}{\acc_2} {(t'_1, \acc_1) \cons
                            \rho} \clos \mbox { (by induction)} \\
                          & \kBred  \;\;\fapp \kBtm {t'_1 \app t'_2}{\max{(\acc_1,
                            \acc_2)}} \rho \clos 
  \end{align*}
  By case analysis on $(\acc_1, \acc_2)$, one can check that
  $\max{(\acc_1, \acc_2)}$ is the desired value. If $t' = \lamB {t''}$, then by
  induction, we have
  $\kBint 0 \rho \clos \;\;\kBred{\Seqbis {t''}} \;\;\kBtm {t''}{\acc} \rho \clos$,
  and then
  $ \kBtm {t''}{\acc} \rho \clos \kBred\flam \kBtm {t'}{\acc'} \rho \clos$ where
  $\acc'$ is as wished depending on~$\acc$.

  This implies that for a closed term $t'$, we have
  $\kBint 0 \mtr \clos \;\;\kBred{\Seqbis {t'}} \;\;\kBtm {t'} 0 \mtr \clos$, and the
  last transition $\kBred\fdone$ gives what we want. 
\end{proof}

This lemma allows us to prove the correspondence between the AB machine and
applicative bisimilarity. 

\begin{thm}
  $(t, \envt) \appbisim (s, \envtt)$ iff $\kBeval t \envt \mts
  \ambisim \kBeval s \envtt \mts$.
\end{thm}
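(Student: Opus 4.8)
The plan is to prove the two implications separately, following the template of Theorem~\ref{th:nf-am}: for $\appbisim \Rightarrow \ambisim$ I would exhibit a machine bisimulation built from $\appbisim$, and for $\ambisim \Rightarrow \appbisim$ I would show that the relation identifying machine-bisimilar evaluation configurations is an applicative bisimulation. Throughout I rely on the fact that the $\tau$-transitions of the AB machine in evaluation mode (\hname{Push}, \hname{Zero}, \hname{Env}, \hname{Grab}) are exactly the reductions of the de Bruijn KAM, so $\kBeval t \envt \mts \kBredrtc\tau \kBeval{\lamB{t'}}{\envt'}\mts$ holds in the AB machine iff it holds in the KAM; moreover, in evaluation mode a value $\kBeval{\lamB t}\envt\mts$ admits \hname{Arg} as its only transition, and generation mode contains no $\tau$-transition, every generation step carrying a flag.

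For $\appbisim \Rightarrow \ambisim$, I would take the candidate relation to be the symmetric closure of the union of: (i) pairs $(\kBeval t \envt \mts, \kBeval s \envtt \mts)$ with $(t,\envt)\appbisim(s,\envtt)$; and (ii) pairs of generation states sharing identical generation components and carrying bodies of bisimilar values, i.e. $(\kBint n \rho {(t',\envt')}, \kBint n \rho {(s',\envtt')})$ and $(\kBtm u \acc \rho {(t',\envt')}, \kBtm u \acc \rho {(s',\envtt')})$ such that $(t', (t'',\mte)\cons \envt')\appbisim (s', (t'',\mte)\cons \envtt')$ for every closed $t''$. To check it is a machine bisimulation: on pairs of type (i), the only evaluation-mode flag is $\farg$, so a flagged run has the form $\kBredrtc\tau\kBred\farg$ ending in $\kBint 0 \mtr {(t',\envt')}$; by Definition~\ref{def:app-cbn} the other side converges to a value and matches $\farg$, landing in a type-(ii) pair. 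On pairs of type (ii) the two configurations share the same generation component, hence offer exactly the same flagged transitions (\hname{Suc}/\hname{Var}, \hname{Lambda}/\hname{Lambda0}, \hname{AppFun}, \hname{App}, \hname{Restart}), all of which leave the carried closures untouched and so stay in type (ii) --- except \hname{Restart}, which fires only when $\acc=0$ and $\rho=\mtr$, guaranteeing (by the closedness invariant underlying Lemma~\ref{lem:abam-all-t}) that the built term $u$ is closed; the resulting $\kBeval{t'}{(u,\mte)\cons\envt'}\mts$ and $\kBeval{s'}{(u,\mte)\cons\envtt'}\mts$ are then related by type (i), precisely because of the universal clause defining type (ii).

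For $\ambisim \Rightarrow \appbisim$, I would show that $\relR = \{((t,\envt),(s,\envtt)) \mid \kBeval t \envt \mts \ambisim \kBeval s \envtt \mts\}$ is an applicative bisimulation. Assume $\kBeval t \envt \mts \kBredrtc\tau \kBeval{\lamB{t'}}{\envt'}\mts$; appending \hname{Arg} yields a run $\kBredrtc\tau\kBred\farg\kBint 0 \mtr {(t',\envt')}$, which by machine bisimilarity is matched by $\kBeval s \envtt \mts \kBredrtc\tau \kBred\farg C$ with $\kBint 0 \mtr {(t',\envt')}\ambisim C$. Since $\farg$ is produced only by \hname{Arg}, the matching run passes through a value, so $\kBeval s \envtt \mts \kBredrtc\tau \kBeval{\lamB{s'}}{\envtt'}\mts$ and $C = \kBint 0 \mtr {(s',\envtt')}$. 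Now fix any closed $t''$. By Lemma~\ref{lem:abam-all-t}, $\kBint 0 \mtr {(t',\envt')}\kBred{\Seq{t''}}\kBeval{t'}{(t'',\mte)\cons\envt'}\mts$, a sequence of flagged transitions with no intervening $\tau$. Chaining the machine-bisimulation clause along $\Seq{t''}$ forces the other side to follow the same flags, and Lemma~\ref{lem:abam-all-t} applied to $(s',\envtt')$ identifies the endpoint as $\kBeval{s'}{(t'',\mte)\cons\envtt'}\mts$. Hence $\kBeval{t'}{(t'',\mte)\cons\envt'}\mts \ambisim \kBeval{s'}{(t'',\mte)\cons\envtt'}\mts$, i.e. $(t',(t'',\mte)\cons\envt')\relR(s',(t'',\mte)\cons\envtt')$, which is exactly the clause of Definition~\ref{def:app-cbn}.

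The \emph{main obstacle}, in both directions, is the argument-generation phase. In the forward direction I must argue that two bisimilar machines necessarily build the \emph{same} argument: this is where flagged non-determinism pays off, since every generation choice is externally observable, but it requires the invariant --- implicit in Lemma~\ref{lem:abam-all-t} --- that a term state reached with $\acc=0$ and $\rho=\mtr$ carries a closed term, so that \hname{Restart} produces a legal closure. In the backward direction the delicate point is to propagate machine bisimilarity through the whole flag word $\Seq{t''}$ rather than a single flag; this is sound precisely because generation mode has no $\tau$-transitions, so the $\tau^{*}$-padding in the bisimulation clause is empty and each flag is matched one-to-one, and because following a fixed flag sequence from a generation state is deterministic and, by Lemma~\ref{lem:abam-all-t}, reconstructs exactly the intended closed argument on both sides.
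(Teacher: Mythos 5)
Your proposal is correct and follows essentially the same route as the paper: the same candidate relations (the paper states your type-(ii) side condition as $(\lamB{t'},\envt')\appbisim(\lamB{s'},\envtt')$, which is equivalent to your unfolded universal clause), the same use of the \hname{Arg} transition to bridge evaluation and generation modes, and the same appeal to Lemma~\ref{lem:abam-all-t} together with flag-determinism of the generation phase to force both machines to build the same argument $t''$.
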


\begin{proof}
  To prove that machine bisimilarity implies applicative bisimilarity, we show
  that
  \[
    \mathord{\relR} \eqdef \{ ((t, \envt), (s, d)) \midd \kBeval t e \mts \ambisim
    \kBeval s \envtt \mts \}
  \]
  is an applicative bisimulation. Suppose
  $\kBeval t \envt \mts \kBredrtc\tau \kBeval {\lamB {t'}}{\envt'} \mts$. Because
  we have also
  $\kBeval {\lamB {t'}}{\envt'} \mts \kBred\farg \kBint 0 \mtr {(t', \envt')}$, there
  exist $s'$ and $\envtt'$ such that
  \[\kBeval s \envtt \mts \kBredrtc\tau \kBeval {\lamB {s'}}{\envtt'} \mts \kBred\farg
    \kBint 0 \mtr {(s', \envtt')}\] and
  $\kBint 0 \mtr {(t', \envt')} \ambisim \kBint 0 \mtr {(s',
    \envtt')}$. Let $t''$ be a closed term. By
  Lemma~\ref{lem:abam-all-t}, we have
  $\kBint 0 \mtr {(t', \envt')} \kBred{\Seq {t''}} \kBeval {t'}{(t'',
    \mte) \cons \envt'} \mts$, which can only be matched by
  $\kBint 0 \mtr {(s', \envtt')} \kBred{\Seq {t''}} \kBeval {s'}{(t'',
    \mte) \cons \envtt'} \mts$. From
  $\kBint 0 \mtr {(t', \envt')} \ambisim \kBint 0 \mtr {(s',
    \envtt')}$, we get
  $\kBeval {t'}{(t'', \mte) \cons \envt'} \mts \ambisim \kBeval
  {s'}{(t'', \mte) \cons \envtt'} \mts$. It implies that
  $(t', (t'', \mte) \cons \envt') \relR (s', (t'', \mte) \cons
  \envtt')$ holds, as wished.

  For the reverse implication, we show that
  \begin{align*}
    \mathord{\relR} & \eqdef \{ (\kBeval t \envt \mts, \kBeval s \envtt \mts) \midd (t, \envt)
                      \appbisim (s, \envtt) \} \\
                    & \cup \{ (\kBint n \rho {(t, e)}), \kBint n \rho
                      {(s, d)}) \midd (\lamB t, e) \appbisim (\lamB s, d) \}\\
                    & \cup \{ (\kBtm {t'} \acc \rho {(t, e)}), \kBtm {t'} \acc \rho
                      {(s, d)}) \midd (\lamB t, e) \appbisim (\lamB s, d) \}
  \end{align*}
  is a machine bisimulation. Suppose
  $\kBeval t \envt \mts \relR \kBeval s \envtt \mts$. The only
  possible flagged transition is
  $\kBeval t \envt \mts \mathrel{\kBredrtc\tau \kBred\farg} \kBint 0 \mtr
  {(t', e')}$, which implies in particular
  $\kBeval t \envt \mts \kBredrtc\tau \kBeval {\lamB {t'}}{e'} \mts$.
  Because $(t, \envt) \appbisim (s, \envtt)$, there exists
  $(s', \envtt')$ such that
  $\kBeval s d \mts \kBredrtc\tau \kBeval {\lamB {s'}}{d'} \mts$ and
  $(\lamB {t'}, e') \appbisim (\lamB{s'}, d')$. We also have
  $\kBeval {\lamB {s'}}{\envtt'} \mts \kBred\farg \kBint 0 \mtr
  {(s', \envtt')}$, hence
  $\kBeval s \envtt \mts \mathrel{\kBredrtc\tau \kBred\farg} \kBint 0 \mtr
  {(s', \envtt')}$ and
  $\kBint 0 \mtr {(t', e')} \relR \kBint 0 \mtr {(s', \envtt')}$
  hold, as wished.

  To cover the last two sets of the bisimulation, let
  $\kBarg {t'} \acc \rho {(t, \envt)} \relR \kBarg {t'} \acc \rho {(s,
    \envtt)}$ with $\mathsf{arg} \in \{\mathsf{ind}, \mathsf{tm} \}$.
  If $\kBarg {t'} \acc \rho {(t, e)} \kBred \flag \sts$ with
  $\flag \neq \fdone$, then
  $\kBarg {t'} \acc \rho {(s, \envtt)} \kBred \flag \sts'$ where~$\sts'$
  is the same as $\sts$ except for the closures. These are not changed
  by the transition and are still applicative bisimilar; we have
  therefore $\sts \relR \sts'$.  Otherwise, we have
  $\kBtm {t'} \acc \rho {(t, \envt)} \kBred \fdone \kBeval t {(t',
    \mte) \cons e} \mts$, and also
  $\kBtm {t'} \acc \rho {(s, \envtt)} \kBred \fdone \kBeval s {(t',
    \mte) \cons \envtt} \mts$. The resulting configurations are in
  $\relR$ because $(\lamB t, \envt) \appbisim (\lamB s, \envtt)$
  implies
  $(t, (t', \mte) \cons \envt) \appbisim (s, (t', \mte) \cons \envtt)$
  for all closed $t'$.
\end{proof}

\subsection{Translation into \hocore}
\label{ssec:trans-AM-AB}

We detail each component of the translation of the AB machine into
\hocore, starting with the evaluation mode, i.e., the KAM.
\begin{align*}
  \tr{\kBeval t e \stk} 
  & \eqdef \tr t \parallel \out{\env}{\tr e} \parallel \out c
    {\tr \stk} \parallel \Prec
  & \tr {(t, e)}
  & \eqdef \out{\clos_1}{\tr t} \parallel \out{\clos_2}{\tr e} \\ 
  \tr {\clos \cons \stk}
  & \eqdef \out \hp {\tr \clos}
    \parallel \out c {\tr \stk}
  & \tr {\clos \cons e}
  & \eqdef \out \he {\tr \clos} \parallel \out {\env}{\tr e} \\
  \tr \mts & 
             \eqdef \ldots
  & \tr \mte & \eqdef \nil
\end{align*}

We follow the same principles as in Section~\ref{sec:KAM}: a non-empty
stack~$\stk$ or environment~$e$ is represented by a pair of messages,
respectively on~$\hp$ and~$c$, and $\he$ and $\env$ (we use the longer \(\env\)
for legibility reasons). A closure is
represented by two messages, one containing the term on $\clos_1$ and
the other the environment on $\clos_2$. The process representing
the empty environment~$\mte$ should never be executed, because all the
closures we manipulate are closed; as a result, we can choose any
process to represent it, e.g.,~$\nil$. The empty stack $\tr \mts$ and
the process $\Prec$ are used to generate an argument and are defined
and explained later.
\begin{align*}
  \tr {t \app s} & \eqdef \appeval {\tr t}{\tr s} \\
  \appeval{P_t}{P_s} &\eqdef \inp c x {\inp {\env} y {(P_t \parallel \out c {\out \hp
                         {\out {\eta_1}{P_s} \parallel \out {\eta_2}{y}} \parallel
                         x} \parallel \out{env} y )}}\\
  \tr {\lamB t} & \eqdef \lameval{\tr t} \\
  \lameval{P_t} &\eqdef \inp c x {(x \parallel
                  \inp \hp y {\inp {env} z 
                  {(P_t \parallel \out
                  {env}{\out \he {y} \parallel
                  \out{env} z})}} )}
  \\
  \tr {n+1} & \eqdef \intevalf{\tr n} \\
  \intevalf{P_n} &\eqdef \inp{env} x {(x \parallel \inp \he \osef {P_n})}\\
  \tr 0 & \eqdef \inp \env x {(x \parallel \inp \he y {\inp \env \osef
          {(y \parallel \inp {\clos_1}{y_1}{\inp {\clos_2}{y_2}{(y_1 \parallel
          \out \env {y_2})}})}})} 
\end{align*}

The encoding of $t \app s$ simulates the rule $\hname{Push}$: we receive the
current stack and environment $e$ to create the new stack with $(s, e)$ on
top. Because we receive the current environment to put it on the stack, we have
to recreate it on $\env$, unchanged. In the encoding of $\lamB t$, we capture
the stack and environment, and if the stack is non-empty, we fetch its head
$\clos$ to create a new environment with $\clos$ on top. Finally, a de Bruijn
index $n>0$ goes through the current environment, until we reach the correct
closure (case $n=0$). In that case, we receive the head $\clos$ and tail of the
environment, we discard the tail as it is no longer useful, and we restore the term
and environment stored in $\clos$.\\

If $\lamB t$ is run in the environment $e$ and the empty stack $\mts$, then we
obtain
\( \tr \mts \parallel \stuck {(t,e)} \),
where
\(
\stuck {(t,e)} \eqdef \inp \hp z {(\tr t \parallel \out \env {\out \he z \parallel
    \out \env {\tr e}})}.
\)
The process $\tr \mts$ has to start the argument generating process, and the result has
then to be sent on $\hp$ for the evaluation to restart. The process $\stuck {(t,e)}$ remains
stuck during the whole generation process. We now explain how $\kBint n
\rho \clos$ and $\kBtm t \acc \rho \clos$ are encoded, starting with $\acc$ and
$\rho$.
\begin{align*}
  \tracc{0}
  &\eqdef\inp{zero} x {\inp {suc} \osef x}
  & \tr \mtr
  &\eqdef \inp{mt} x {\inp {cs} \osef x} \\ 
  \tracc{\acc+1}
  & \eqdef \suk{\tracc \acc}
  & \tr {(t, \acc) \cons \rho}
  &\eqdef  \ConsR {\tr{(t, \acc)}}{\tr \rho} \\
  \suk{P_\acc}
  & \eqdef \out {suk}{P_\acc} \parallel \inp{zero} \osef {\inp {suc} x
    x}
  & \ConsR {P_{hd}}{P_\rho}
  & \eqdef \out \hr{P_{hd}} \parallel \out r {P_\rho} \parallel
    \inp{mt} \osef {\inp {cs} x x}\\
  & & \tr {(t, \acc)}
  & \eqdef \out {w_1}{\trBeval t} \parallel \out {w_2}{\tracc {\acc}}
\end{align*}

The machine distinguishes cases based on whether $\acc$ is $0$ or not,
to know if we should apply the transition $\hname{Lambda}$ or
$\hname{Lambda0}$. In the encoding of these rules (see the definition
of $\rawLambdaa$ below), we send on name $zero$ the expected behavior
if $\acc=0$, and on $suc$ what to do otherwise. The translation of the
counter receives both messages, executes the corresponding one (e.g.,
the one on $zero$ for the encoding of $0$), and discards the
other. Apart from that, $\acc$ is translated as nested sending on name \(suk\).
Similarly, the translation of $\rho$ combines the regular
encodings of pairs and stacks, but also indicates whether $\rho$ is
empty or not, to know if we can apply the transitions $\hname{App}$
and $\hname{Restart}$.
\begin{align*}
  \tr \mts & \eqdef \inp{\farg} \osef {(\out{ind}{\trBeval 0} \parallel
             \out{k}{\tracc 1} \parallel \out
             r {\tr \mtr} \parallel \out{initInd} \nil)}\\[5mm]
  \tr{\kBint n \rho \clos} & \eqdef \out {ind} {\trBeval n} \parallel \out
                                    {k}{\tracc {n+1}} \parallel \out r {\tr \rho} \parallel 
                                    \Prec \parallel \out{initInd} \nil \parallel
                                    \stuck \clos\\
  \RecInt & \eqdef \inp {initInd} \osef {\inp {recind} x {(x \parallel \out {recind} x
            \parallel \Succ + \Var)}}\\
  \Succ & \eqdef \inp{\fsuc} \osef {\inp {ind} x {\inp {k} y {(\out {ind}{\intevalf
          x} \parallel \out{k}{\suk y} \parallel \out {initInd} \nil)}}} \\
  \Var & \eqdef \inp{\fvar} \osef {\inp {ind} x {(\out {tm} x \parallel \out {initTm}
         \nil)}} \\[5mm]
  \tr{\kBtm t {\acc} \rho \clos} & \eqdef \out {tm} {\trBeval t} \parallel \out
                                   {k}{\tracc {\acc}} \parallel \out r {\tr
                                   \rho} \parallel \Prec \parallel \out{initTm}
                                   \nil \parallel 
                                   \stuck \clos
\end{align*}

After flagging $\farg$, the process $\tr\mts$ starts the argument generation
process in index mode: the index being built is sent on $ind$ (here, initialized
with $\tr 0$) and the stack on $r$. We also build on $k$ the
counter $\acc$ corresponding to the index: at any point, we have $\acc
= n+1$. We keep two messages on $ind$ and $k$ encoding almost the same
information, but with different encodings, as the index $n$ and the
counter $\acc$ are used differently.

The message on $initInd$ triggers the recursive process $\RecInt$,
which non-deterministically chooses between $\Succ$ and
$\Var$. Executing $\Succ$ flags $\fsuc$, increases the values of the
index (thanks to $\rawinteval$) and the counter (with $\rawsuk$), and
relaunches the $\RecInt$ process with a message on
$initInd$. Executing $\Var$ flags $\fvar$, moves the index from $ind$
to $tm$, and initiates the term mode by sending a message on $initTm$,
which triggers the recursive process $\RecTm$.
\begin{align*}
  \RecTm & \eqdef initTm(\osef).rectm(x).\\
         & \hspace{1em}\left(
           \begin{aligned}
             x \parallel \out {rectm} x \parallel \inp r y {
               y &\parallel \overline{cs}\left\langle
                 \begin{aligned}
                   &\inp k z {\inp \hr {y_1}{\inp r {y_2}{\\
                         & (\Lambdaa z \parallel \out r y)+ \AppFun z y+ \App{z}{y_1}{y_2}}}}
                 \end{aligned}
               \right\rangle \\
               &\parallel \overline{mt} \left\langle \inp k z {\left(
                     \begin{aligned}
                       z &\parallel
                       \overline{zero}\left\langle
                         \begin{aligned}
                           & \quad(\Lambdaa z \parallel \out r {\tr \mtr}) \\
                           & + \AppFun z {\tr \mtr}+ \Done 
                         \end{aligned}
                       \right\rangle
                       \\
                       &\parallel \outn{suc}\left\langle\inp{suk} \osef {
                           \begin{aligned}
                             &\quad(\Lambdaa{z} \parallel \out r {\tr\mtr})
                             \\&+ \AppFun z {\tr\mtr}
                           \end{aligned}
                         \right\rangle}
                     \end{aligned}
                   \right)}
               \right\rangle}
             \end{aligned}
  \right)
\end{align*}

The goal of $\RecTm$ is to non-deterministically choose between the
four transitions available in term mode, namely $\kBred\flam$,
$\kBred\fappfun$, $\kBred\fapp$, and $\kBred\fdone$. However, some of
these transitions have requirements: $\kBred\fapp$ needs
$\rho \neq \mtr$ and $\kBred\fdone$ needs $\rho = \mtr$ and
$\acc = 0$. The process $\RecTm$ is therefore doing a case analysis to
check these conditions. First, it captures $\tr\rho$ on $r$: if
$\rho \neq \mtr$, it executes the message on $cs$, which makes a
choice between $\flam$, $\fappfun$, and~$\fapp$, which are represented
by respectively $\rawLambdaa$, $\rawAppFun$, and $\rawApp$. If
$\rho=\mtr$, then we do a case analysis on $\acc$. If $\acc=0$, then
we can do either $\flam$, $\fappfun$, or $\fdone$ (represented by
$\Done$), otherwise, only $\flam$ or~$\fappfun$ are possible.
\begin{align*}
  \Lambdaa{P_\acc}
  & \eqdef \inp{\flam} \osef {\inp{tm} x
    {\left(\begin{aligned}
          \out {tm}{\lameval x} \parallel P_\acc
          & \parallel \out{zero}{\out {k}{P_\acc} \parallel \out{initTm} \nil} \\
          & \parallel  \out{suc}{\inp {suk} y {(\out{k} y \parallel
              \out{initTm} \nil)}}
        \end{aligned}\right)}} \\
  \AppFun{P_\acc}{P_\rho}
  & \eqdef \inp \fappfun \osef {\inp {tm} x}{
    \left(
    \begin{aligned}
      & \out r {\ConsR  {\out \hr {\out {w_1}{x} \parallel
            \out{w_2}{P_\acc}}}{P_\rho}} \parallel \\
      & \out{ind}{\trBeval 0} \parallel \out{k}{\tracc
        1} \parallel \out{initInd} \nil)
    \end{aligned}
        \right)} \\
  \App{P_\acc}{P_{hd}}{P_\rho}
  & \eqdef \fapp(\osef).tm(x_2).\\
  & \hspace{1em} \left(P_{hd} \parallel 
    \inp{w_2}{y}{\inp{w_1}{x_1}{\left(
    \begin{aligned}
      &\out{max1}{y} \parallel \out{max2}{P_\acc} \parallel
      \out{init1}{y} \parallel \out{init2}{P_\acc} \parallel \\
      &\inp{resu}{z}{
        \left(
          \begin{aligned}
            &\out{tm}{\appeval{x_1}{x_2}} \parallel
            \out r {P_\rho} \parallel \\
            &\out{k}{z} \parallel \out{initTm} \nil
        \end{aligned}
        \right)}
    \end{aligned}
        \right)}}
        \right)
\end{align*}
The process $\rawLambdaa$ adds a $\lambda$-abstraction to the term
in~$tm$, updating $\acc$ (represented by~$P_\acc$) accordingly: if
$\acc=0$, then it is restored unchanged on~$k$, otherwise, it is
decreased by 1 by releasing the message in $suk$. The process
$\rawAppFun$ pushes on the stack~$P_\rho$ the current term~$t_1$ on
$tm$ and its counter $\acc_1$ (represented by $P_\acc$), which is the
term in function position of an application. It then relaunches the
index mode to build the argument $t_2$ with its counter~$\acc_2$. The
process $\rawApp$ can then build the application itself, by computing
the maximum between~$\acc_1$ and~$\acc_2$ with the processes $\RecMax$
and $\rawMax$.
\begin{align*}
  \RecMax & \eqdef \inp{init1}{x_1}{\inp{init2}{x_2}{\inp{recmax}{y}{(y \parallel
            \out{recmax} y \parallel \Max {x_1}{x_2})}}} \\
  \Max{P_1}{P_2} & \eqdef P_1
                   \begin{aligned}[t]
                     & \parallel \out{zero}{\inp{max2} x {\out{resu}
                         x}} \\
                     & \parallel \outn{suc}\left\langle\inp{suk}{x_1}{\left( 
                           \begin{aligned}
                             P_2 &\parallel \out{zero}{\inp{max1} x {\out{resu} x}} \\
                             & \parallel
                             \out{suc}{\inp{suk}{x_2}{(\out{init1}{x_1} \parallel \out{init2}{x_2})}} 
                           \end{aligned}
                         \right)}\right\rangle
                   \end{aligned}
\end{align*}
We compute the maximum between $\acc_1$ and $\acc_2$ by removing the layers of
successors common to $\acc_1$ and $\acc_2$, until we reach~$0$ for one of
them. If we reach~$0$ for $\acc_1$ first, then $\acc_2$ is the max, otherwise it
is $\acc_1$. We store the initial values of $\acc_1$ and $\acc_2$ in
respectively $max1$ and $max2$, and the decomposition occurs in $\rawMax$, where
$init1$ is initialized with~$\acc_1$ and $init2$ with $\acc_2$. If
$P_1 = \tracc{\acc_1} = \tracc 0$, then we send $\acc_2$ (stored in $max2$) on
$resu$. Otherwise, $P_1 = \out{suk}{P'_1}$, and we do a case analysis on the
process $P_2=\tracc{\acc_2}$. If $P_2=\tracc 0$, then we send $\acc_1$ on
$resu$, otherwise $P_2 = \out{suk}{P'_2}$, and we restart $\RecMax$ by
sending~$P'_1$ and $P'_2$ on $init1$ and $init2$, respectively. Once the max is
known on $resu$, then $\rawApp$ builds the application $t_1 \app t_2$ and
relaunches $\RecTm$.

\begin{align*}
  \Done & \eqdef \inp \fdone \osef {\inp {tm} x {(\out \hp {\out
          {\clos_1} x \parallel \out {\clos_2}{\tr \mte}} \parallel \out c
          {\tr \mts})}}\\
  \Prec & \eqdef \RecInt \parallel \out{recind}{\RecInt} \parallel \RecTm \parallel
          \out{rectm}{\RecTm} \parallel \RecMax \parallel \out{recmax}{\RecMax}\\[-3mm]
\end{align*}

Finally, the process $\Done$ ends the argument generation phase, and restarts
the computation by restoring the empty continuation and by passing the term in
$tm$ to $\stuck \clos$. The process $\Prec$ contains all the processes necessary
to encode the different recursive mechanisms.

\paragraph*{Full abstraction} A case analysis on the configurations of
the machine shows that the translation of the AB machine satisfies the
conditions of Section~\ref{ssec:fullabs-nf}, and we can prove a full
abstraction result between AB machines and \hocore similar to
Theorem~\ref{th:am-ho-nf}. We deduce a full
abstraction result between the $\lambda$-calculus with applicative
bisimilarity and \hocore.

\begin{cor}
  \label{cor:fullabs-ab}
  For all closed closures $(t, \envt)$ and $(s, \envtt)$,
  $(t, \envt) \appbisim (s, \envtt)$ iff
  $\tr{\kBeval t \envt \mts} \hobisim \tr{\kBeval s \envtt \mts}$.
\end{cor}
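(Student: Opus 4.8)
The plan is to mirror the structure already used for normal-form bisimilarity in Section~\ref{sec:full-abs}, composing two correspondences. First I would invoke the theorem proved just above this corollary, namely that $(t, \envt) \appbisim (s, \envtt)$ iff $\kBeval t \envt \mts \ambisim \kBeval s \envtt \mts$, which reduces the statement to showing that machine bisimilarity of AB machines coincides with barbed equivalence of their \hocore translations:
\[
  \sts \ambisim \sts' \quad\text{iff}\quad \tr\sts \hobisim \tr{\sts'}.
\]
This second equivalence is precisely the analogue of Theorem~\ref{th:am-ho-nf}, and the closing paragraph of Section~\ref{ss:nfb-hocore} already signals that the translations of the upcoming machines satisfy the same determinism and faithfulness properties, so the proof of Theorem~\ref{th:am-ho-nf} can be replayed verbatim once those two properties are reestablished for the AB translation.

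Concretely, I would first verify the analogues of Lemma~\ref{lem:determinist} (Determinism) and Lemma~\ref{lem:faithful} (Faithfulness) for the AB machine translation in Figure of Section~\ref{ssec:trans-AM-AB}. For determinism, the argument is a case analysis on the shape of a machine process: each translated configuration reduces deterministically, except at the non-deterministic choice points, which in the AB machine occur in the processes $\RecInt$ (the choice $\Succ + \Var$) and $\RecTm$ (the ternary or quaternary choices between $\rawLambdaa$, $\rawAppFun$, $\rawApp$, and $\Done$). As with the NFB machine, these choices are made observable by distinct flags before any translated configuration is reached, so each choice process has exactly two (or, for $n$-ary choices, $n$) $\tau$-successors distinguishable by their weak observable actions. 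I would also record the analogue of Corollary~\ref{cor:wkobs}: the only weak observable actions of an AB machine process are the flag names $\farg$, $\fsuc$, $\fvar$, $\flam$, $\fappfun$, $\fapp$, $\fdone$, which is needed so that the environment cannot interfere through hidden names. For faithfulness, I would check that each machine transition $\sts \fnred\flag \sts'$ corresponds to $\nexts \flag {\tr\sts} = \tr{\sts'}$ and that the $\fdone$ transition leads to a process barbed-equivalent to the restarted evaluation configuration.

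Once determinism and faithfulness hold, the two inclusions of $\sts \ambisim \sts' \Leftrightarrow \tr\sts \hobisim \tr{\sts'}$ follow exactly as in Lemmas~\ref{lem:ho-lambda} and~\ref{lem:lambda-ho}: one direction exhibits $\{(\sts,\sts') \mid \tr\sts \hobisim \tr{\sts'}\}$ as a machine bisimulation, extracting a matching flagged reduction on the right from a matching barb on the left and using Corollary's analogue to rule out spurious interactions with $\outn\flag$; the other direction exhibits a barbed bisimulation closed under parallel contexts $R$ with $\fn R \cap \hidden = \emptyset$, reasoning up to $\tau$ via Lemma~\ref{lem:tau-implies-bisim}. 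The main obstacle, and the only place where genuinely new work is required rather than transcription, is the determinism check for $\RecTm$: unlike the binary choice of the NFB machine, $\RecTm$ performs a nested case analysis on whether $\rho = \mtr$ and whether $\acc = 0$ before selecting among up to three flagged branches, threading auxiliary communications on $cs$, $mt$, $zero$, and $suc$. I would need to confirm that all these bookkeeping communications are deterministic $\tau$-steps with no observable flag, so that the only externally visible non-determinism remains the flagged choice itself, and that the $n$-ary choice encoding (which the paper notes generalizes the binary one via a single channel $ch$) indeed makes at most one choice active at a time along every reachable machine process. Additionally, the processes $\RecMax$ and $\rawMax$ implementing the $\maxx{\acc_1}{\acc_2}$ computation introduce a sequence of internal reductions that must be shown confluent and flag-free, so that an $\hname{App}$ step still corresponds to a single $\nexts\fapp{\cdot}$; verifying this is routine but is where the bulk of the case analysis lies.
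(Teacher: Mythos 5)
Your proposal matches the paper's own argument: the paper likewise composes the machine-level correspondence theorem of Section~\ref{ssec:app-machine} with an analogue of Theorem~\ref{th:am-ho-nf}, obtained by a case analysis on configurations showing that the AB translation satisfies the determinism and faithfulness conditions of Section~\ref{ssec:fullabs-nf}, after which the proofs of Lemmas~\ref{lem:ho-lambda} and~\ref{lem:lambda-ho} carry over. Your added detail on where the case analysis concentrates (the nested branching in $\RecTm$, the $n$-ary choice encoding, and the flag-free internal $\RecMax$ computation) is consistent with, and more explicit than, what the paper records.
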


\subsection{Internalizing Contextual Equivalence}
\label{ssec:ctxequiv}

Corollary~\ref{cor:fullabs-ab} is enough to deduce full abstraction
w.r.t. contextual equivalence, since
$t \ctxequiv s \iff (t,\mte) \appbisim (s,\mte)$. However, it is
possible to prove a similar result directly, by internalizing
contextual equivalence in an abstract machine. We do so to show how
the resulting contextual equivalence machine differs from the AB
machine. 

\begin{figure}
\begin{align*}
\ctxind n {\acc} \rho & \kBred{\fsuc} \ctxind {n+1}{\acc+1} \rho 
\tag{\hname{Suc}}\\
\ctxind n {\acc} \rho & \kBred{\fvar} \ctxtm  n {\acc} \rho 
\tag{\hname{Var}}\\
\ctxtm t {\acc+1} \rho  & \kBred{\flam} \ctxtm {\lamB t}{\acc} \rho 
\tag{\hname{Lambda}}\\
\ctxtm t 0 \rho  & \kBred{\flam} \ctxtm {\lamB t} 0 \rho 
\tag{\hname{Lambda0}}\\
\ctxtm t {\acc} \rho  & \kBred{\fpush} \ctxind 0 1 {(t, \acc) \cons \rho} 
\tag{\hname{AppPush}}\\
\ctxtm t {\acc_1} {(s, \acc_2) \cons \rho}  & \kBred{\fappl} \ctxtm {t \app
  s}{\maxx {\acc_1}{\acc_2}} \rho 
\tag{\overleftarrow{\hname{App}}}\\
\ctxtm t {\acc_1} {(s, \acc_2) \cons \rho}  & \kBred{\fappr} \ctxtm {s \app
  t}{\maxx {\acc_1}{\acc_2}} \rho 
\tag{\overrightarrow{\hname{App}}}\\
\ctxtm t 0  \mtr  & \kBred{\fstart} \kBeval t \mte \mts 
\tag{\hname{Start}} \\
\\
\kBeval {\lamB t} \envt \mts & \kBred\fdone \tag{\hname{Done}}
\end{align*}
\caption{Contextual equivalence machine}
\label{fig:am-ctx}
\end{figure}

Figure~\ref{fig:am-ctx} gives the transitions of the contextual
equivalence machine, except for the $\kBred\tau$ transitions, which
are the same as in Section~\ref{ssec:KAM-ab}. In contrast with the AB
machine, the contextual equivalence machine produces a context first,
and then reduces the resulting term; consequently, the starting point
is a state $\ctxtm t 0 \mtr$, where $t$ is the closed term we want to
plug in the context. When the context is finished, the transition
$\kBred\fstart$ switches to the evaluation mode. Also, the evaluation
part of the machine is not executed several times, since $\ctxequiv$
is not coinductive. We flag $\fdone$ when the evaluation terminates,
to distinguish a terminating term from a diverging one.

Creating a context $\ctx$ is almost the same as generating an argument
in the AB machine, except that we want to plug a closed term $t$
inside. We build $\inctx \ctx t$ by starting the generation process
from $t$; $t$ can be anywhere in $\inctx \ctx t$, not necessarily at
the leftmost position, so we cannot do the generation process going
left to right in an application, as with the AB machine
(Section~\ref{ssec:app-machine}). Instead, after producing a term $t$
and with a term $s$ on the stack, we can do either the transition
$\hname{AppLeft}$ to build $t \app s$, or $\hname{AppRight}$ to build
$s \app t$.

\begin{exa}
  We show how to generate the context $\lamB {(0 \app 0) \app (\mtctx \app 0)}$
  around $t$.
  \begin{align*}
    \ctxtm t 0 \mtr & \kBred\fpush \kBred\fvar \ctxtm 0 1 {(t, 0) \cons \mtr} \\
                    & \kBred\fappr \ctxtm {t \app 0} 1 \mtr \\
                    & \kBred\fpush\kBred\fvar \ctxtm 0 1 {(t \app 0, 1) \cons \mtr} \\
                    & \kBred\fpush \kBred\fvar \ctxtm 0 1 {(0, 1) \cons (t \app 0, 1) \cons \mtr} \\
                    & \kBred\fappr \ctxtm {0 \app 0} 1 {(t \app 0, 1) \cons \mtr} \\
                    & \kBred\fappl \ctxtm {(0 \app 0) \app (t \app 0)} 1 \mtr \\
                    & \kBred\flam \ctxtm {\lamB {(0 \app 0) \app (t \app 0)}} 0 \mtr
  \end{align*}
\end{exa}

The translation of the contextual equivalence machine into \hocore and the full
abstraction proofs are similar to the AB machine ones.
\begin{thm}
  \label{cor:fullabs-ab-thm}
  If $t$ and $s$ are closed terms, then $t \ctxequiv s$ iff
  $\tr{\ctxtm t 0 \mtr} \hobisim \tr{\ctxtm s 0 \mtr}$.
\end{thm}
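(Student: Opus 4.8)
The plan is to follow the two-stage route already used for the AB machine (Section~\ref{ssec:app-machine} and Corollary~\ref{cor:fullabs-ab}): first relate contextual equivalence to machine bisimilarity of the contextual-equivalence machine of Figure~\ref{fig:am-ctx}, and then transfer machine bisimilarity to barbed equivalence through the \hocore translation. Concretely, I would prove $t \ctxequiv s$ iff $\ctxtm t 0 \mtr \ambisim \ctxtm s 0 \mtr$, then $\sts \ambisim \sts'$ iff $\tr\sts \hobisim \tr{\sts'}$, and chain the two equivalences.

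For the first stage the key preliminary is a context-generation lemma analogous to Lemma~\ref{lem:abam-all-t}: for every closed context $\ctx$ there is a flag sequence, \emph{independent of the plugged term}, such that $\ctxtm t 0 \mtr \kBred{} \cdots \kBred{\fstart} \kBeval {\inctx \ctx t} \mte \mts$. This is proved by induction on $\ctx$, tracking the counter $\acc$ and the application stack $\rho$ exactly as in Lemma~\ref{lem:abam-all-t}, the two rules $\overleftarrow{\hname{App}}$ and $\overrightarrow{\hname{App}}$ accounting for whether the hole lies left or right of each application node. Given this lemma, the reverse direction is direct: from $\ctxtm t 0 \mtr \ambisim \ctxtm s 0 \mtr$, the generation trace of any closed $\ctx$ drives both machines to $\kBeval{\inctx \ctx t}\mte\mts$ and $\kBeval{\inctx \ctx s}\mte\mts$ after $\fstart$; since the only remaining flag is $\fdone$, emitted exactly when evaluation reaches a value (rule \hname{Done}), machine bisimilarity forces $\inctx \ctx t$ to converge iff $\inctx \ctx s$ does, which is precisely $t \ctxequiv s$.

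For the forward direction I would exhibit a machine bisimulation containing the aligned generation states (the same partial context built around $t$ and around $s$, with identical $\acc$ and identically shaped $\rho$) together with the evaluation pairs $(\kBeval{\inctx \ctx t}\mte\mts, \kBeval{\inctx \ctx s}\mte\mts)$ reached after $\fstart$. Each flagged choice during generation (say $\flam$ versus $\fpush$, or $\overleftarrow{\hname{App}}$ versus $\overrightarrow{\hname{App}}$) is available in both states because availability depends only on $\acc$ and $\rho$, so the choices are matched symbolically and the states stay in lockstep; the single $\fdone$ flag of the evaluation phase is matched using $t \ctxequiv s$. Since the machine does not restart after \hname{Done}, the evaluation phase needs no coinductive unfolding, and the bisimulation conditions—stated at the level of $\tau$-closures followed by a flag—only require matching the reachability of $\fdone$, so no intermediate $\tau$-reduct need be related. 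For the second stage I would translate the machine as in Section~\ref{ssec:trans-AM-AB}, re-establish determinism and faithfulness (the analogs of Lemmas~\ref{lem:determinist} and~\ref{lem:faithful}), and conclude $\sts \ambisim \sts'$ iff $\tr\sts \hobisim \tr{\sts'}$ exactly as in Theorem~\ref{th:am-ho-nf} via the analogs of Lemmas~\ref{lem:ho-lambda} and~\ref{lem:lambda-ho}.

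The main obstacle is the first stage. Unlike applicative bisimilarity, contextual equivalence tests convergence under a \emph{single} generated context rather than coinductively, so the bisimulation relation must keep the generation and evaluation phases separate and argue carefully that flag-matching over all generation traces exactly enumerates all closed contexts while the lone $\fdone$ flag faithfully encodes convergence. On the translation side the difficulty is more routine but delicate: the extra non-determinism introduced by $\overleftarrow{\hname{App}}$ versus $\overrightarrow{\hname{App}}$ and by the case analyses on $\acc$ and $\rho$ enlarges the choice processes, so one must check that every such choice is made observable by a distinct flag and that determinism holds otherwise, exactly the property the determinism lemma of Section~\ref{ssec:fullabs-nf} was designed to capture.
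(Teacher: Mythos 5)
Your proposal follows exactly the route the paper takes: the paper gives no explicit proof here, stating only that ``the translation of the contextual equivalence machine into \hocore{} and the full abstraction proofs are similar to the AB machine ones,'' i.e., the two-stage decomposition through machine bisimilarity with a context-generation lemma mirroring Lemma~\ref{lem:abam-all-t} and the determinism/faithfulness lemmas of Section~\ref{ssec:fullabs-nf}. Your write-up is a correct and more detailed elaboration of that same argument, including the right observations about the single $\fdone$ flag encoding convergence and the extra non-determinism from the two application rules.
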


\section{Call-by-Value}
\label{sec:cbv}

We adapt our techniques to the call-by-value setting, by internalizing
equivalences into the CK machine.

\subsection{The CK machine} As in call-by-name, we internalize
normal-form bisimilarity into a substitution-based abstract machine,
namely the CK machine~\cite{Felleisen-al:09}. 
\begin{align*}
  C & \bnf \kam{t}{\stk} \midd \kamC \stk v  \tag{configurations}\\
  v & \bnf x \midd \lam x t \tag{values} \\
  t,s &\bnf  v \midd t \app s \tag{terms}\\
  \stk &\bnf \starg t \cons \stk \midd \stfun v \cons \stk \midd \mts\tag{stacks} \\
  \\[-3mm]
  \kam{t \app s} \stk & \kred \kam t {\starg s \cons \stk} \tag{\hname{Fun}}\\
  \kam v \stk & \kred \kamC \stk v \tag{\hname{Switch}}\\
  \kamC {\starg t \cons \stk} v & \kred \kam t {\stfun v \cons \stk} \tag{\hname{Arg}}\\
  \kamC {\stfun {\lam x t} \cons \stk} v  & \kred \kam {\subst v x t}
                                            \stk \tag{\hname{Beta}} 
\end{align*}
The machine distinguishes configurations of the form $\kam t \stk$,
which evaluates $t$ in the stack~$\stk$, from configurations
$\kamC \stk v$, which decide how computation should proceed depending
on the stack $\stk$. In left-to-right call-by-value evaluation, we
evaluate the function before reducing its argument. What we remember
on the stack reflects this order: $\starg t$ means that we evaluate
the function and remembers its argument $t$, while $\stfun v$ means
that we evaluate the argument of an already computed function $v$.

The $\hname{Fun}$ machine step focuses on the term in function
position. When it is evaluated, we look at the stack by switching to
the continuation mode (rule $\hname{Switch}$). Either the top of the
stack is an argument which needs to be evaluated (rule $\hname{Arg}$),
or it is a $\lambda$-abstraction ready to be applied to its argument
(rule $\hname{Beta}$). When we reduce closed terms, we cannot get a
configuration of the form $\kamC {\stfun x \cons \stk} v$.

The encoding of the CK machine in \hocore is given in
Figure~\ref{fig:ck-hocore}. Unlike in the KAM, the stack has some
control on how the reduction proceeds. It is reflected in the encoding
of $\kamC \stk v$, where the encoded value $\trv v$ is sent on a
channel name $\nval$, waiting to be consumed by the stack. The top of
the stack decides what becomes of $\trv v$: in the case of $\starg t$,
the value $\trv v$ is forwarded to the top of the stack and $\tr t$
can reduce. For $\stfun{\lam x t}$, the encoded $\lambda$-abstraction
$\inp \nval x {\tr t}$ is receiving $\trv v$, and the computation
continues as $\subst {\trv v} x {\tr t}$: as for the KAM, we encode
$\beta$-reduction by a \hocore communication.

The encoding of values considered as terms $\tr v$ is uniform, and
recreates the encoding of $\kamC \stk v$ by capturing the current
stack and running it in parallel with $\out \nval {\trv v}$. The
encoding of the application $\tr {t \app s}$ is as expected: it runs
$\tr t$ and pushes $\tr s$ on the current stack. Like for the KAM, the
encoding of the CK machine needs only two names, $c$ and $\nval$.

We have a one-to-one correspondence between the CK-machine and its
translation, a result stronger than for the KAM
(cf. Theorem~\ref{th:cor-KAM-trans}): the administrative
$\hname{Switch}$ step, which does not exist in the KAM, corresponds to
a communication in the \hocore translation.

\begin{thm}
  If $C \kred C'$, then $\tr C \trans\tau \tr{C'}$. If $\tr C
  \trans\tau P$, then there exists $C'$ such that $P = \tr{C'}$. 
\end{thm}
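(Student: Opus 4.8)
The plan is to prove both implications by a case analysis on the shape of the configuration $C$, which fixes the applicable machine rule. The forward direction is a direct unfolding of the translation, exhibiting the single communication that realizes each step; the backward direction additionally uses the fact that the translation is deterministic, so that any $\tau$-transition from $\tr C$ is forced to be the one produced by the forward direction. The essential structural fact to keep in mind is that the encoding reuses only the two names $c$ and $\nval$, and that each machine rule is matched by exactly one communication, which is what yields the one-to-one (as opposed to the weak) correspondence of Theorem~\ref{th:cor-KAM-trans}.

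For the forward direction I would treat the four rules in turn. For $\hname{Fun}$, unfolding $\tr{t \app s}$ exposes an input on $c$ that captures the current stack and pushes $\tr s$ onto it; the unique communication on $c$ yields exactly $\tr{\kam t {\starg s \cons \stk}}$. For $\hname{Switch}$, the translation of the value $\tr v$ captures the stack on $c$ and emits $\out \nval {\trv v}$, so a single communication on $c$ produces $\tr{\kamC \stk v}$; this is precisely the administrative step absent from the KAM, here realized by a genuine communication. For $\hname{Arg}$ and $\hname{Beta}$, the value output $\out \nval {\trv v}$ is consumed by the top of the stack: in the $\starg t$ case a communication on $\nval$ relaunches $\tr t$ with $\stfun v$ recorded, giving $\tr{\kam t {\stfun v \cons \stk}}$, and in the $\stfun{\lam x t}$ case the communication on $\nval$ substitutes $\trv v$ for $x$ in $\tr t$, realizing $\beta$-reduction and yielding $\tr{\kam{\subst v x t} \stk}$. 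In each case exactly one $\tau$-transition connects $\tr C$ and $\tr{C'}$, and the result is syntactically (up to the implicit associativity and commutativity of parallel composition) the translation of the next configuration.

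For the backward direction I would first record that the translation is deterministic: inspecting each configuration shape, at most one communication is ever enabled in $\tr C$, because $c$ and $\nval$ are each used by a single matching input/output pair at a time. A configuration $\kam t \stk$ offers a communication only on $c$, driven by the head of $\tr t$ (whether $t$ is an application or a value); a configuration $\kamC \stk v$ offers a communication only on $\nval$, between $\out \nval {\trv v}$ and the stack top; and when the stack is empty, i.e.\ $C = \kamC \mts v$, no communication is enabled at all, matching the fact that this is a final configuration with no machine step. Consequently, if $\tr C \trans\tau P$, then $C$ is not final, so some rule applies and $C \kred C'$; by the forward direction $\tr C \trans\tau \tr{C'}$, and determinism forces $P = \tr{C'}$, as required.

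The main obstacle is establishing determinism cleanly, namely checking rule by rule that no spurious communication can fire: that the output on $\nval$ can only be absorbed by the intended stack-top process and never by a stale input, and that the input on $c$ in the translation of an application or of a value has no competing output until the stack has been captured. Because the CK encoding reuses only the two names $c$ and $\nval$ and the correspondence is one-to-one, this is a short case check rather than the more delicate step-counting argument that was needed to match a \hname{Grab} step to two communications in Theorem~\ref{th:cor-KAM-trans}.
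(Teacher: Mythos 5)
Your proposal is correct and follows exactly the argument the paper intends (it leaves this proof implicit, but the analogous Theorem~\ref{th:cor-KAM-trans} for the KAM is proved the same way: a rule-by-rule unfolding for the forward direction and determinism of the translation for the backward one, here simplified because every rule, including \hname{Switch}, corresponds to exactly one communication). The only step you gloss over is the substitution lemma $\subst{\trv v}{x}{\tr t} = \tr{\subst v x t}$ needed in the \hname{Beta} case, which is routine since bound $\lambda$-variables are translated as \hocore variables.
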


\begin{figure}
\begin{align*}
  \tr {\kam t \stk} & \eqdef \tr t \parallel \out c {\tr \stk} \\
  \tr {\kamC \stk v} & \eqdef \tr \stk \parallel \out \nval {\trv v} \\
  \tr \mts & \eqdef \out b \nil \\
  \tr {\starg t \cons \stk} & \eqdef \Arg{\tr t}{\tr \stk}\\
  \tr {\stfun v \cons \stk} & \eqdef \Fun{\trv v}{\tr \stk}\\
  \Arg {P_t}{P_\stk} & \eqdef \inp \nval x {(P_t \parallel \out c
                       {\Fun x {P_\stk}})} \\
  \Fun {P_v}{P_\stk} & \eqdef P_v \parallel \out c {P_\stk}\\
  \\
  \trv {x} & \eqdef x \\
  \trv {\lam x t} & \eqdef \inp \nval x {\tr t} \\
  \tr {v} & \eqdef \inp c p {(p \parallel \out \nval {\trv v})} \\
  \tr {t \app s} & \eqdef \inp c p {(\tr t \parallel \out c {\Arg {\tr
                   s} p})} 
\end{align*}
\caption{Encoding of the CK machine in \hocore}
\label{fig:ck-hocore}
\end{figure}

\subsection{Bisimilarities} To internalize normal-form bisimilarity,
we distinguish free variables~$\fvr$ from bound variables $x$, as in
Section~\ref{sec:full-abs}. We assume $\lambda$-terms to be
well-formed, so that every bound variable has its binder.
\[
  t, s \bnf  \fvr \midd x \midd \lam x t \midd t \app s \qquad
  v \bnf \fvr \midd \lam x t
\]
The evaluation of an open term may result either in a value in an
empty stack, or in a \emph{stuck term} of the form
$\kamC {\stfun \fvr \cons \stk} v$: because the function is a free
variable, the $\beta$-reduction cannot be triggered. Normal-form
bisimilarity should be defined on these two kinds of normal forms.

For values, a first possibility is to proceed as in call-by-name,
i.e., to equate identical free variables, and to relate the bodies of
$\lambda$-abstractions. This variant (discussed in
Remark~\ref{rem:cbv-as-cbn}) is too discriminating in the plain
$\lambda$-calculus: e.g., it distinguishes a free variable $\fvr$ from
its $\eta$-expansion $\lam x {\fvr \app x}$ although these terms are
contextually equivalent in call-by-value. Lassen~\cite{Lassen:LICS05}
proposes instead to compare values by applying them to a fresh
variable. For instance, applying $\fvr$ and $\lam x {\fvr \app x}$ to
a fresh variable $\fvr'$ produces the same term $\fvr \app \fvr'$ on
both sides after evaluation.

Normal-form bisimilarity compares two stuck terms
$\kamC {\stfun \fvr \cons \stk} v$ and
$\kamC {\stfun \fvr \cons {\stk'}}{v'}$ by relating $v$ and $v'$, and
by testing $\stk$ and $\stk'$ with a fresh variable $\fvr'$. To reflect
this test on stacks, we define normal-form bisimilarity not on terms,
but on configurations.

\begin{defi}
  \label{def:nf-cbv}
  A symmetric relation $\relR$ on configurations is a normal-form
  bisimulation if $C \relR C'$ implies:
  \begin{itemize}
  \item if $C \rtc{\kred} \kamC \mts v$, then there exists $v'$ such
    that $C' \rtc{\kred} \kamC \mts {v'}$ and for all fresh $\fvr$, we
    have 
    $\kamC {\stfun v \cons \mts} \fvr \relR \kamC {\stfun {v'} \cons
      \mts} \fvr$;
  \item if $C \rtc{\kred} \kamC {\stfun \fvr \cons \stk} v$, then
    there exist $\stk'$ and $v'$ such that
    $C' \rtc{\kred} \kamC {\stfun \fvr \cons \stk'}{v'}$, and for all
    fresh $\fvr'$, we have
    $\kamC {\stfun v \cons \mts}{\fvr'} \relR \kamC {\stfun {v'} \cons
      \mts}{\fvr'}$ and
    $\kamC \stk {\fvr'} \relR \kamC {\stk'} {\fvr'}$.
  \end{itemize}
  Normal-form bisimilarity $\nfbisim$, is the largest normal-form bisimulation.
\end{defi}
In both clauses, we compare values by considering configurations of
the form $\kamC {\stfun v \cons \mts} \fvr$, which corresponds to the
application $v \app \fvr$, as wished. Normal-form bisimilarity is
extended to terms so that $t \nfbisim s$ if
$\kam t \mts \nfbisim \kam s \mts$. This equivalence can be related to
B\"ohm tree equivalence up to infinite $\eta$ expansion through a
continuation passing style transformation~\cite{Lassen:LICS05}. It is
not complete \wrt the contextual equivalence of the call-by-value
$\lambda$-calculus~\cite[Example 3.2]{Lassen:LICS05}.

We internalize applicative bisimilarity as in call-by-name
(Section~\ref{sec:full-abstr-applicative}), by considering a
$\lambda$-calculus with de Bruijn indices and an environment-based CK
machine, i.e., the CEK abstract
machine~\cite{Felleisen-Friedman:FDPC3,Felleisen-al:09}. The
definition of the bisimilarity differs from call-by-name only in the
fact that the testing argument must be a
$\lambda$-abstraction. Consequently, we just use the
argument-generating rules from Figure~\ref{fig:am-ab} on top of the
CEK machine, and change the rule $\hname{Restart}$ so that it can
trigger only if the constructed term is a
$\lambda$-abstraction. Internalizing contextual equivalence in
call-by-value simply consists in adding the unmodified rules for
generating a context (Section~\ref{ssec:ctxequiv}) on top of the CEK
machine. In the rest of this section, we discuss only the more
interesting case of the NFB machine and its translation in \hocore.

\subsection{NFB Machine}
We extend the CK machine into a NFB machine in
Figure~\ref{fig:am-cbv}. Configurations include a counter $n$ to
generate fresh variables, as in call-by-name. The first four machine
steps correspond to the CK machine, while the last three ones compare
normal forms.

\begin{figure}
\begin{align*}
  \fnevalcbv{t \app s} \stk n
  & \fnred{\tau} \fnevalcbv t {\starg s \cons \stk} n \tag{\hname{Fun}}\\
  \fnevalcbv v \stk n
  & \fnred{\tau} \fncontcbv \stk v n \tag{\hname{Switch}}\\
  \fncontcbv {\starg t \cons \stk} v n
  & \fnred{\tau} \fnevalcbv t {\stfun v \cons \stk} n \tag{\hname{Arg}}\\
  \fncontcbv {\stfun{\lam x t} \cons \stk} v n
  & \fnred{\tau} \fnevalcbv {\subst v x t} \stk n \tag{\hname{Beta}} 
  \\[3mm]
  \fncontcbv \mts v n & \fnred{\flambda} \fncontcbv {\stfun v \cons \mts}
  n {n+1} \tag{\hname{Val}}\\
  \fncontcbv {\stfun \fvr \cons \stk} v n
  & \fnred\fvr\fnred\fval \fncontcbv {\stfun v \cons \mts}
  n {n+1} \tag{\hname{Stuck-Val}}\\
  \fncontcbv {\stfun \fvr \cons \stk} v n
  & \fnred\fvr\fnred\fctx \fncontcbv \stk n {n+1} \tag{\hname{Stuck-Context}}
\end{align*}
\caption{NFB machine for call by value}
\label{fig:am-cbv}
\end{figure}

If we get a value, we flag $\flambda$ and apply it to a fresh variable
(transition $\hname{Val}$). In the case of a stuck term
$\kamC {\stfun \fvr \cons \stk} v$, we have to choose whether we test
$v$ (transition $\hname{Stuck-Val}$) or $\stk$ (transition
$\hname{Stuck-Context}$). In both cases, we flag the free
variable~$\fvr$. The test for values is the same as in the transition
$\hname{Val}$. If we choose the stack $\stk$, we simply restart the
machine in continuation mode with $\stk$ and a fresh variable.

We prove that the resulting machine equivalence coincides with
normal-form bisimilarity using the same proof technique as in
call-by-name.
\begin{thm}
  $t \nfbisim s$ iff there exists $n > \max{(\fv t \cup \fv s)}$ such
  that $\fnevalcbv t \mts n \ambisim \fnevalcbv s \mts n$.
\end{thm}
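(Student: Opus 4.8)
The plan is to establish the correspondence at the level of configurations, namely $C \nfbisim C'$ iff $C_n \ambisim C'_n$ for some $n > \max(\fv C \cup \fv{C'})$, where $C_n$ abbreviates $\fnevalcbv t \stk n$ when $C = \kam t \stk$ and $\fncontcbv \stk v n$ when $C = \kamC \stk v$; the theorem is then the instance $C = \kam t \mts$, $C' = \kam s \mts$. The first ingredient is the call-by-value analog of Lemma~\ref{lem:kam-nfam}: since the $\hname{Fun}$, $\hname{Switch}$, $\hname{Arg}$, and $\hname{Beta}$ transitions of the NFB machine are exactly the CK-machine transitions carrying the counter unchanged, we have $C \kred C'$ iff $C_n \fnred\tau C'_n$, and these are the only transitions that move without emitting a flag.

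For the implication from machine to normal-form bisimilarity, I would show that $\mathord{\relR} \eqdef \{(C, C') \mid C_n \ambisim C'_n,\ n > \max(\fv C \cup \fv{C'})\}$ is a normal-form bisimulation. If $C \rtc\kred \kamC \mts v$, the correspondence lemma gives $C_n \fnredrtc\tau \fncontcbv \mts v n$, from which the only possible transition is $\fnred\flambda$ (rule $\hname{Val}$); machine bisimilarity then forces $C'_n \fnredrtc\tau \fncontcbv \mts {v'} n \fnred\flambda \fncontcbv {\stfun {v'} \cons \mts} n {n+1}$, whence $C' \rtc\kred \kamC \mts {v'}$ by the lemma, and the reached states yield $\kamC {\stfun v \cons \mts} n \relR \kamC {\stfun {v'} \cons \mts} n$ with the counter value $n$ serving as the fresh variable. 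If instead $C \rtc\kred \kamC {\stfun \fvr \cons \stk} v$, the machine reaches this stuck configuration and emits $\fvr$ followed by a non-deterministic choice of $\fval$ (rule $\hname{Stuck-Val}$) or $\fctx$ (rule $\hname{Stuck-Context}$); matching the two flag sequences separately yields the value test and the stack test demanded by Definition~\ref{def:nf-cbv}.

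For the reverse implication, I would check that $\{(C_n, C'_n) \mid C \nfbisim C',\ n > \max(\fv C \cup \fv{C'})\}$ is a machine bisimulation: each flagged run of $C_n$ is routed through the correspondence lemma to a reduction of $C$ reaching a value or a stuck term, the matching clause of $\nfbisim$ supplies a witness for $C'$, and the lemma translates it back into a matching flagged run of $C'_n$. The counter bookkeeping is identical to the call-by-name case: the fresh variable is the current counter $n$, and $n+1$ remains fresh for the continuation, so all the freshness side-conditions are preserved.

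The main obstacle is the stuck-term case, which has no call-by-name counterpart: a single stuck normal form $\kamC {\stfun \fvr \cons \stk} v$ must be probed by two tests---on the value $v$ and on the stack $\stk$---that Definition~\ref{def:nf-cbv} requires to use the same witnesses $v'$ and $\stk'$ on the other side. The machine internalizes this conjunction as the flagged choice between $\hname{Stuck-Val}$ and $\hname{Stuck-Context}$ after the common $\fvr$ flag. In the machine-to-normal-form direction the two choices are matched by two independent applications of machine bisimilarity, so I must argue that both recover the same $v'$ and $\stk'$; this is exactly where I would invoke determinism of the CK machine, which guarantees that $C'$ has a unique weak-head normal form and hence that both matching runs reduce $C'$ to the same stuck configuration $\kamC {\stfun \fvr \cons \stk'}{v'}$. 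Some care is also needed with the two-flag transitions $\fnred\fvr \fnred\fval$ and $\fnred\fvr \fnred\fctx$, whose shared intermediate state is where the choice is made.
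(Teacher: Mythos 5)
Your proposal is correct and follows essentially the proof the paper intends: the theorem is proved ``using the same proof technique as in call-by-name,'' i.e., by exhibiting the two candidate relations at the configuration level and routing every flagged run through the call-by-value analogue of Lemma~\ref{lem:kam-nfam}, exactly as you do. Your identification of the one genuinely new point---the stuck-term case, where determinism of the underlying CK machine is what forces the two flagged branches ($\fval$ and $\fctx$, after the common $\fvr$ flag) to land on the same witness $\kamC {\stfun \fvr \cons \stk'}{v'}$---is the right adaptation of the call-by-name argument.
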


\begin{rem}
  \label{rem:cbv-as-cbn}
  In extensions of the $\lambda$-calculus where functions are not the
  only values (e.g., with booleans or integers as primitive
  constructs), a value is no longer equivalent to its $\eta$-expansion
  in general, and we would have to consider the finer normal-form
  bisimilarity which distinguishes values based on their kind. To
  internalize it in the plain $\lambda$-calculus, we would replace
  $\hname{Val}$ with the transitions $\hname{Done}$ and
  $\hname{Lambda}$ below, and $\hname{Stuck-Val}$ with the new step
  below.
  \begin{align*}
    \fncontcbv \mts \fvr n & \fnred{\fvr}\fnred{\fdone} \tag{\hname{Done}}\\
    \fncontcbv \mts {\lam x t} n & \fnred{\flambda} \fnevalcbv {\subst n x t}
                                   \mts {n+1} \tag{\hname{Lambda}}\\
    \fncontcbv {\stfun \fvr \cons \stk} v n
                           & \fnred{\fvr}\fnred{\fval} \fncontcbv \mts v n
                             \tag{\hname{Stuck-Val}}
  \end{align*}
  If we get a free variable, we signal it and we are done. We
  instantiate $\lambda$-abstractions with a fresh variable. Finally,
  the transition $\hname{Stuck-Val}$ goes to $\fncontcbv \mts v n$,
  after which we immediately apply either $\hname{Done}$ or
  $\hname{Lambda}$ depending on $v$.
\end{rem}

We present the encoding of the call-by-value NFB machine in
Figure~\ref{fig:cbv-nfb-hocore}. We follow the same principles as in
Section~\ref{ss:nfb-hocore}. For example, we still use an output on
$b$ to describe what to do when the stack is empty---run the
$\Restart$ process; the message on $b$ is simply discarded when the
stack is not empty. The $\Restart$ process flags $\flambda$ and then
recreates the configuration where the value in $\nval$ is applied to a
fresh variable, represented by the counter~$n$ stored in~$k$.

\begin{figure}
\begin{align*}
  \tr {t \app s} & \eqdef \inp c p {(\tr t \parallel \out c {\Arg {\tr
                   s} p})} \\
  \tr {v} & \eqdef \inp c p {(p \parallel
            \out \nval {\trv v} \parallel \out b \Restart)} \\
  \trv {\lam x t} & \eqdef \inp \nval x {\inp b \osef {\tr t}} \\
  \trv {\fvr} & \eqdef \trint \fvr \\
  \\
  \trint 0  & \eqdef \inp z {\_}{\inp \nval x {\inp c {p} {\inp k y
                    {\inp b \osef {\IChoice{x,p,y}}}}}} \\
  \trint {n+1} & \eqdef \inp {suc}{\_}{\trint{n}} \\
  \Restart & \eqdef  \inp{\flambda}{\_}{\inp \nval x {\inp k y {(
             \Fun x {\tr\mts} \parallel \out \nval y \parallel \out k
             {\inp {suc}{\_} y}
             \parallel \out b \nil
             )}}}\\
  \IChoice{P_v,P_\stk,P_n} & \eqdef \inp {\fenter}{\_}{(\Fun{P_v}{\tr
                             \mts} \parallel \out \nval {P_n}
                             \parallel \out k {\inp {suc} \osef
                             {P_n}} \parallel \out b \Restart )} \\
                 & + \inp{\fskip}{\_}{(P_\stk \parallel \out
                             \nval {P_n} \parallel \out k {\inp {suc}
                             \osef {P_n}} \parallel \out b \Restart )} \\
  \\
  \tr \mts & \eqdef \inp b x x \\
  \tr {\starg t \cons \stk} & \eqdef \Arg{\tr t}{\tr \stk}\\
  \tr {\stfun v \cons \stk} & \eqdef \Fun{\trv v}{\tr \stk}\\
  \Arg {P_t}{P_\stk} & \eqdef \inp \nval x {\inp b \osef {(P_t \parallel \out c
                       {\Fun x {P_\stk}})}} \\
  \Fun {P_v}{P_\stk} & \eqdef P_v \parallel \out c {P_\stk}\\
  \\
  \tr{\fnevalcbv t \stk n} &\eqdef \tr{t} \parallel \out{c}{\tr \stk} \parallel \out k
  {\trint n} \\
  \tr{\fncontcbv \stk v n} &\eqdef {\tr \stk} \parallel \out \nval
                             {\trv v} \parallel \out k {\trint n} \parallel \out b \Restart
\end{align*}
\caption{Encoding of the call-by-value NFB machine into \hocore}
\label{fig:cbv-nfb-hocore}
\end{figure}


A free variable $\fvr$ is observable in a stuck configuration
$\fncontcbv {\stfun f \cons \stk} v n$; in that case, it signals
itself with a sequence of inputs on $suc$ ended by an input on $z$,
and then it non-deterministically chooses between testing $v$ or
$\stk$. The $\IChoice{P_v,P_\stk,P_n}$ process is making that
choice, applying $P_n$ to $P_v$ in the first case, or running $P_n$ in
$P_\stk$ in the second one. Unlike in call-by-name, we do not have to
go recursively through the stack: as a result, although the CK machine
is more complex than the KAM, the translation of the call-by-value NFB
machine is arguably simpler than the call-by-name one. 

We can prove full abstraction between the call-by-value NFB machines
and their translated \hocore processes as in
Section~\ref{ssec:fullabs-nf}, from which we can deduce full
abstraction between \hocore and the $\lambda$-calculus with
normal-form bisimilarity.
\begin{thm}
  $t \nfbisim s$ iff there exists $n > \max{(\fv t \cup \fv s)}$ such that
  $\tr{\fneval t \mts n} \hobisim \tr{\fneval s \mts n}$.
\end{thm}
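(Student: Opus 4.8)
The plan is to obtain this full abstraction result exactly as in the call-by-name development, as the composition of two coincidences. The theorem just proved already establishes the first half, namely that $t \nfbisim s$ iff $\fnevalcbv t \mts n \ambisim \fnevalcbv s \mts n$ for some $n > \max(\fv t \cup \fv s)$, relating normal-form bisimilarity to machine bisimilarity of the call-by-value NFB machine. It therefore remains to prove the call-by-value analogue of Theorem~\ref{th:am-ho-nf}, i.e. that $\sts \ambisim \sts'$ iff $\tr\sts \hobisim \tr{\sts'}$ for configurations of the NFB machine of Figure~\ref{fig:am-cbv} under the translation of Figure~\ref{fig:cbv-nfb-hocore}; chaining the two coincidences then yields the statement for the initial configurations $\fnevalcbv t \mts n$ and $\fnevalcbv s \mts n$.

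For the remaining coincidence I would replay the proof strategy of Section~\ref{ssec:fullabs-nf} almost verbatim—which the paper already notes applies to every later translation—re-establishing first the two structural invariants on which everything rests. The Determinism property (the analogue of Lemma~\ref{lem:determinist}) requires identifying the choice processes of this translation: the only non-determinism arises in a stuck configuration, once the free variable in function position has been signalled and the process is about to decide, inside $\IChoice{x,p,y}$, between testing the value (rule \hname{Stuck-Val}) and testing the remaining stack (rule \hname{Stuck-Context})—a choice the translation exposes through the distinct flags $\fenter$ and $\fskip$. I would then check, by case analysis on the shape of machine processes, that every non-choice process has at most one possible communication and that the two branches of a choice process are separated by their weak observable actions. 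The Faithfulness property (the analogue of Lemma~\ref{lem:faithful}) is the operational correspondence $\sts \fnred{\fortau} \sts'$ iff $\nexts \fortau {\tr\sts} = \tr{\sts'}$. One simplification over the call-by-name case is worth noting: the main call-by-value NFB machine has no terminating $\fdone$ flag—a computation either diverges silently or keeps emitting flags as it explores the tree—so the second, $\fdone$-specific clause of Lemma~\ref{lem:faithful} is absent, and the barb-restriction corollary (the analogue of Corollary~\ref{cor:wkobs}) simply lists the flag names used by this translation.

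With these invariants in place, the two implications follow the templates of Lemmas~\ref{lem:ho-lambda} and~\ref{lem:lambda-ho}. For barbed equivalence implies machine bisimilarity I would show that $\{(\sts,\sts') \midd \tr\sts \hobisim \tr{\sts'}\}$ is a machine bisimulation, matching each flagged machine transition by placing the complementary flag output $\outn\flag$ in parallel and reading off the forced transition on the other side, invoking the barb-restriction corollary to rule out any spurious interaction with $\outn\flag$ and the up-to-$\tau$ lemma (analogue of Lemma~\ref{lem:tau-implies-bisim}) to absorb internal steps. For the converse I would exhibit the barbed bisimulation candidate built from machine-bisimilar pairs closed under parallel composition with flag-free contexts, exactly as in Lemma~\ref{lem:lambda-ho}, and verify its barb, congruence, and reduction clauses. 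I expect the barbed-equivalence-implies-machine-bisimilarity direction to remain the main obstacle, as in call-by-name: one must rule out that an intermediate choice process silently commits between two translated configurations, and argue that a translated configuration never exposes a single choice flag in isolation, so that matching weak observable actions forces matching flagged transitions. The only genuinely new bookkeeping is that the stuck-term step first signals the free variable and only then commits to the $\fenter/\fskip$ choice, so the $\nexts{\fortau}{\cdot}$ analysis must track this two-phase signalling; the rest is a direct transcription of the call-by-name argument.
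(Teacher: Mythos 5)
Your proposal matches the paper's route exactly: the paper proves this theorem by composing the preceding coincidence between normal-form bisimilarity and machine bisimilarity of the call-by-value NFB machine with a call-by-value analogue of Theorem~\ref{th:am-ho-nf}, the latter obtained by replaying the determinism/faithfulness lemmas and the two bisimulation candidates of Section~\ref{ssec:fullabs-nf}. Your additional observations (the single source of non-determinism in $\IChoice{x,p,y}$, the absence of a terminating $\fdone$ clause, the two-phase $\fvr$-then-choice signalling) are accurate refinements of what the paper leaves implicit.
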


\section{Control Operators}
\label{sec:control}

The benefit of using abstract machines as an intermediary step is that
our encoding can easily be extended to control operators, the
semantics of which can be expressed using abstract machines. Among the
existing control operators, we choose to encode the
$\lambda\mu$-calculus~\cite{Parigot:LPAR92}, to show how to deal with a calculus
with multiple binders.

\subsection{Extended KAM for the Call-by-Name
  \texorpdfstring{$\lambda\mu$-calculus}{lambda-mu-calculus}}

The $\lambda\mu$-calculus extends the $\lambda$-calculus with
\emph{names} or \emph{continuation variables}---ranged over by $\name$
and $\nameb$---which represent an unknown continuation, and a $\mu$
operator to capture continuations. The syntax of terms becomes as
follows.
\[
  t,s \bnf x \midd \lam x t \midd t \app s \midd \mmu \name \nameb t
\]
The semantics of the $\lambda\mu$-calculus is not defined on plain
terms but on \emph{named terms} of the form $\nt \name t$. In
$\mmu \name \nameb t$, the occurrences of $\name$ in $\nt \nameb t$
are bound. We write $\fn t$ and $\fn {\nt \name t}$ for the set of
free names of a term or named term. A term is closed if it has no free
variable and no free name. A named term $\nt \name t$ cannot be closed
as it has at least $\name$ among its free names.

We define the semantics of the $\lambda\mu$-calculus using an
extension of the KAM, meaning that the continuations captured by the
$\mu$ operator are represented by stacks. Like in the call-by-name
$\lambda$-calculus, a stack represents a sequence of applications of
terms; the difference is that it is ended by a name which represents
the toplevel in which the term is executed.
\[
  \stk \bnf t \cons \stk \midd \name
\]

\begin{figure}
\begin{align*}
  \substc \stk \name \name
  & \eqdef \stk
  & \plug \name t
  & \eqdef \nt \name t
  \\
  \substc \stk \name  {\nameb}
  & \eqdef \nameb \mbox{ if } \name \neq \nameb
  & \plug {(s \cons \stk)} t
  & \eqdef \plug \stk {t \app s}
\end{align*}
\begin{align*}
  \substc \stk \name x
  & \eqdef x \\
  \substc \stk \name {\lam x t}
  & \eqdef \lam x {\substc \stk \name t}
  \\
  \substc \stk \name {(t \app s)}
  & \eqdef \substc \stk \name t \app \substc \stk \name s \\
  \substc \stk \name {\mmu {\nameb_1}{\nameb_2} t}
  & \eqdef \mu {\nameb_1}.\plug {(\substc \stk \name {\nameb_2})}
    {\substc \stk \name t}
\end{align*}
\caption{Structural Substitution}
\label{fig:struct-subs}
\end{figure}

The term $\mmu \name \nameb t$ captures such a stack $\stk$ to replace
$\name$ in $\nt \nameb t$ with it. \emph{Structural
  substitution}~\cite{Ariola-Herbelin:JFP08} $\substc \stk \name t$
produces a plain term where the name $\name$ is replaced by $\stk$ in
$t$. It is defined in Figure~\ref{fig:struct-subs}, alongside two
auxiliary operations: name substitution $\substc \stk \name \name$
which produces a stack, and plugging $\plug \stk t$ which produces a
named term. Plugging simply reconstructs around $t$ the applications
represented by the stack $\stk$. In the case
$\substc \stk \name {\mmu {\nameb_1}{\nameb_2} t}$, we assume the free
name $\name$ to be distinct from $\nameb_1$, which is always possible
using $\alpha$-conversion. The captured stack $\stk$ is restored when
$\nameb_2 = \name$, as we plug $\substc \stk \name t$ inside $\stk$.

The semantics is given by the extended KAM, defined on the same
configurations as in Section~\ref{sec:KAM}.
\begin{align*}
  C & \bnf \kam t \stk \tag{configurations} \\
  \\
  \kam {t \app s} \stk
    & \kred \kam t {s \cons \stk}  \tag{\hname{Push}}\\
  \kam {\lam x t} {s \cons \stk}
    & \kred \kam {\subst s x t} \stk \tag{\hname{Grab}}\\
  \kam {\mmu \name \nameb t} \stk
    & \kred \kam {\substc \stk \name t}{\substc \stk \name \nameb}
      \tag{\hname{Capture}}
\end{align*}
The $\hname{Push}$ and $\hname{Grab}$ steps are unchanged. The last
step captures the stack~$\stk$ and replaces~$\name$ with it, thanks to
the operations defined in Figure~\ref{fig:struct-subs}.  To evaluate a
closed term, we assume the existence of a distinguished name $\tp$
representing the toplevel~\cite{Ariola-Herbelin:JFP08}, and we execute
$\kam t \tp$.

\begin{exa}
  Unlike the $\mu$ operator, the control operator call/cc from Scheme
  leaves the captured stack in place when it triggers. This behaviour
  can be encoded in the $\lambda\mu$-calculus as follows~\cite{Parigot:LPAR92}:
  \[
    \callcc \eqdef \lam x {\mmu \name
      \name {x \app \lam y {\mmu \nameb \name y}}}
  \]
  The captured stack is immediately restored, but also saved in the
  value passed to $x$. To illustrate the extended KAM, we run it on an
  example. Let $id \eqdef \lam x x$, $\Omega \eqdef (\lam x {x \app
    x}) \app (\lam x {x \app x})$, $v \eqdef \lam z {z \app id \app
    \Omega}$, and $t$ be any term. We compute $\callcc \app v \app t$
  in the name $\tp$.
  \begin{align*}
    \kam {\callcc \app v \app t} \tp
    & \kred^2 \kam \callcc {v \cons t \cons \tp} \tag{\hname{Push} -
      \hname{Push}} \\
    & \kred \kam {\mmu \name \name {v \app \lam y {\mmu \nameb \name
      y}}} {t \cons \tp} \tag{\hname{grab}} \\
    & \kred \kam {v \app \lam y {\mmu \nameb \tp {y \app t}}}{t \cons
      \tp} \tag{\hname{Capture}} \\
    & \kred^2 \kam{(\lam y {\mmu \nameb \tp {y \app t}}) \app id \app
      \Omega}{t \cons \tp} \tag{\hname{Push} - \hname{Grab}} \\
    & \kred^3 \kam{\mmu \nameb \tp {id \app t}}{\Omega \cons t \cons
      \tp} \tag{\hname{Push} - \hname{Push} - \hname{Grab}} \\
    & \kred \kam {id \app t} \tp \tag{\hname{Capture}} \\
    & \kred^2 \kam t \tp \tag{\hname{Push} - \hname{Grab}}
  \end{align*}
  After the first capture, the stack $t \cons \tp$ is still in place,
  while the stack $\Omega \cons t \cons \tp$ disappears during the
  second capture, because $\nameb$ does not occur in the body of the
  $\mu$ operator. 
\end{exa}

Using a substitution-based semantics for the $\mu$ operator means it
can easily be encoded in \hocore as a higher-order communication. We
translate names as \hocore variables, chosen to be distinct from the
set of translated term variables. When we evaluate closed terms, the
distinguished name $\tp$ is the end of the stack and is translated as
such. The $\mu$ operator is translated as follows.
\begin{align*}
  \tr \tp & \eqdef \out b \nil \\
  \tr {\mmu \name \nameb t} & \eqdef \inp c \name {(\tr t \parallel \out
    c \nameb)}
\end{align*}
The rest of the translation is the same as in
Section~\ref{sec:KAM}. The translated $\mu$ operator reflects the
$\hname{Capture}$ step by receiving the current stack on $c$. Because
we execute closed terms, either $\nameb=\name$, or $\nameb$ is bound
higher in the term and has been substituted before we get to this
point. The term $\tr t$ is therefore executed in a previously captured
stack.

The correspondence between the extended KAM and its \hocore
translation can be stated the same way as for the regular KAM
(Theorem~\ref{th:cor-KAM-trans}). The $\hname{Capture}$ step
corresponds to exactly one communication in the translation of the machine. 

\subsection{NFB Machine}
\label{ss:nfb-mu}

To internalize normal-form bisimilarity into
the extended KAM, we need to distinguish not only free variables $f$
from bound variables $x$, but also free names~$\fnm$,~$\fnmb$ from bound names
$\bnm$. Indeed, normal-form bisimilarity uses fresh names to compare
terms and discriminates terms based on these names. We change the
syntax of terms as follows.
\[
  \name \bnf \fnm \midd \bnm \qquad 
  t,s \bnf \fvr \midd x \midd \lam x t \midd t \app s \midd \mmu \bnm \name t
\]
We assume terms to be well-formed, i.e., to not contain a bound
variable or bound name without its binder. The name at the end of a
stack is necessarily free.
\[
  \stk \bnf t \cons \stk \midd \fnm
\]

Given two stacks $\stk_1$, $\stk_2$, we write $\stk_1 \relR \stk_2$ if
their elements are pairwise related and they are ended by the same
name. Formally, we have either $\stk_1 = \stk_2 = \fnm$ for some
$\fnm$, or $\stk_1 = t \cons \stk'_1$, $\stk_2 = s \cons \stk'_2$,
$t \relR s$, and $\stk'_1 \relR \stk'_2$. The definition of
normal-form bisimilarity for the call-by-name $\lambda\mu$-calculus is
as follows.
\begin{defi}
  A symmetric relation $\relR$ is a normal-form bisimulation if $t
  \relR s$ and $\fnm$ fresh implies:
  \begin{itemize}
  \item if $\kam t \fnm \rtc\kred \kam {\lam x t'}{\fnmb}$, then there
    exists $s'$ such that $\kam s \fnm \rtc\kred \kam {\lam x
      s'}{\fnmb}$ and $\subst \fvr x {t'} \relR \subst \fvr x {s'}$
    for a fresh $\fvr$; 
  \item if $\kam t \fnm \rtc\kred \kam {\fvr}{\stk}$, then there
    exists $\stk'$ such that $\kam s \fnm \rtc\kred \kam
    {\fvr}{\stk'}$ and $\stk \relR \stk'$.
  \end{itemize}
  Normal-form bisimilarity is the largest normal-form bisimulation. 
\end{defi}

Normal-form bisimilarity is not complete \wrt contextual equivalence
of the $\lambda\mu$-calculus when defined either with weak-head
reduction~\cite{Lassen:99} or head reduction~\cite{Lassen:LICS06}. For
an extended $\lambda\mu$-calculus--- the so-called $\Lambda\mu$
calculus~\cite{Saurin:LICS05}---with head reduction, it coincides with
solvable equivalence~\cite{Barendregt:84,Lassen:LICS06}.

The bisimulation definition is almost the same as in
$\lambda$-calculus (Definition~\ref{def:nf-cbn}) except for
names. Terms are compared within some fresh name $\fnm$ for each
bisimulation round, as using a single name like $\tp$ for all rounds
would not be discriminating enough. It would relate for instance $x$
and $\mmu \bnm \tp x$, because $\kam x \tp$ and
$\kam {\mmu \bnm \tp x} \tp$ evaluate to $\kam x \tp$, but they
behave differently: the second term discards its continuation while the
first one does not. Similarly, we expect the resulting normal forms to
exhibit a common name $\fnmb$ (not necessarily equal to $\fnm$). Terms
evaluating to normal forms with distinct toplevel names would imply
that they return to different continuations, and a surrounding context
could easily distinguish them by picking a terminating continuation
for a term and a diverging continuation for the other.

We adapt the $\lambda$-calculus NFB machine to take these changes into
account, resulting in the machine of Figure~\ref{fig:nfbm-mu}. In the
machine, free names are natural numbers, and configurations
$\kmunfev t \stk n m$ include an extra counter $m$ to generate fresh
names. This counter is used as a new stack whenever we restart a
computation, in the $\hname{Lambda}$ and $\hname{Enter}$ steps. The
$\hname{Lambda}$ and $\hname{Done}$ steps now flag the free name
$\fvr$ of the resulting normal forms. The rest of the machine is the
same as in $\lambda$-calculus, except for the $\hname{Capture}$
step.

\begin{figure}
\begin{align*}
  \kmunfev {t \app s} \stk n m
  & \fnred\tau \kmunfev t {s \cons \stk} n m \tag{\hname{Push}}\\
  \kmunfev {\lam x t} {s \cons \stk} n m
  & \fnred\tau \kmunfev {\subst s x t} \stk n m \tag{\hname{Grab}}\\
  \kmunfev {\mmu \bnm \name t} \stk n m
  & \fnred\tau \kmunfev {\substc \stk \bnm t} {\substc \stk \bnm
    \name} n m \tag{\hname{Capture}} \\
  \kmunfev {\lam x t} \fnm n m
  & \fnred{\fnm} \fnred \flambda \kmunfev {\subst n x t} m
    {n+1}{m+1} \tag{\hname{Lambda}}  \\
  \kmunfev \fvr \stk n m
  & \fnred\fvr \kmunfcont \stk n m \tag{\hname{Var}}  \\
  \\
  \kmunfcont \fnm n m & \fnred \fnm \fnred \fdone \tag{\hname{Done}}  \\
  \kmunfcont {t \cons \stk} n m
  & \fnred\fenter \kmunfev t m n {m+1} \tag{\hname{Enter}} \\
  \kmunfcont {t \cons \stk} n m
  & \fnred\fskip \kmunfcont \stk n m \tag{\hname{Skip}} 
\end{align*}
\caption{NFB machine for the call-by-name $\lambda\mu$-calculus}
\label{fig:nfbm-mu}
\end{figure}

The translation of the NFB machine for the $\lambda\mu$-calculus into
\hocore follows the same principles as in $\lambda$-calculus. We just
need to be careful to distinguish between numbers representing free
variables from those standing for free names---by using distinct
\hocore names $suc_{\mathsf v}$, $z_{\mathsf v}$ and
$suc_{\mathsf n}$, $z_{\mathsf n}$. We can prove the equivalence
between normal-form bisimilarity and its machine, and between the NFB
machine and its translation as in Section~\ref{sec:full-abs}.

\begin{thm}
  $t \nfbisim s$ iff there exists $n > \max{(\fv t \cup \fv s)}$ and
  $m > \max{(\fn t \cup \fn s)}$ such that
  $\tr{\kmunfev t m n {m+1}} \hobisim \tr{\kmunfev s m n {m+1}}$.
\end{thm}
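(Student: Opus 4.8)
The plan is to prove the theorem exactly as its call-by-name $\lambda$-calculus counterpart (the corollary following Theorem~\ref{th:am-ho-nf}), by factoring it through machine bisimilarity $\ambisim$ on the NFB machine of Figure~\ref{fig:nfbm-mu}. First I would establish the analog of Lemma~\ref{lem:kam-nfam}: since $\hname{Push}$, $\hname{Grab}$, and $\hname{Capture}$ are the only $\tau$-transitions and coincide with the extended KAM steps, $\kam t \stk \kred \kam{t'}{\stk'}$ iff $\kmunfev t \stk n m \fnred\tau \kmunfev{t'}{\stk'} n m$. The theorem then follows by composing two intermediate results: $t \nfbisim s$ iff $\kmunfev t m n {m+1} \ambisim \kmunfev s m n {m+1}$, and $\sts \ambisim \sts'$ iff $\tr\sts \hobisim \tr{\sts'}$.

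For the first, I would mimic Theorem~\ref{th:nf-am}. In one direction I'd check that $\{(t,s) \midd \kmunfev t m n {m+1} \ambisim \kmunfev s m n {m+1},\ n > \max{(\fv t \cup \fv s)},\ m > \max{(\fn t \cup \fn s)}\}$ is a normal-form bisimulation; in the other, that the union of $\{(\kmunfev t \fnm n m, \kmunfev s \fnm n m) \midd t \nfbisim s\}$ and $\{(\kmunfcont \stk n m, \kmunfcont{\stk'} n m) \midd \stk \nfbisim \stk'\}$, with counters above the free variables and names, is a machine bisimulation. The one genuinely new ingredient relative to Section~\ref{sec:full-abs} is the treatment of names: the $\hname{Lambda}$ and $\hname{Done}$ steps emit the toplevel name $\fnmb$ as a flag before $\flambda$ or $\fdone$. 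Because name flags travel on the dedicated channels $suc_{\mathsf n}, z_{\mathsf n}$, they are distinguishable from variable flags, and matching them in $\ambisim$ is precisely what forces the two normal forms to share the toplevel name $\fnmb$ demanded by the clauses of the bisimilarity definition and by the stack relation; the synchronized counter $m$ then guarantees that the restart in $\hname{Lambda}$/$\hname{Enter}$ uses the same fresh name on both sides. The remaining case analysis is as before.

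For the second result, $\sts \ambisim \sts'$ iff $\tr\sts \hobisim \tr{\sts'}$, I would reuse the architecture of Theorem~\ref{th:am-ho-nf} unchanged, since its proof (via Lemmas~\ref{lem:ho-lambda} and~\ref{lem:lambda-ho}) depends only on the translation being deterministic (Lemma~\ref{lem:determinist}) and faithful (Lemma~\ref{lem:faithful}). As the paper already observes for its later machines, these two properties are re-established by a routine case analysis on the shape of machine processes, the only non-determinism again being confined to the choice processes produced by $\hname{Enter}$/$\hname{Skip}$. I would take the flag names to be $\flambda, \fenter, \fskip, \fdone$ together with $suc_{\mathsf v}, z_{\mathsf v}, suc_{\mathsf n}, z_{\mathsf n}$, and let $\hidden$ collect the remaining translation names; with these definitions the barbed-bisimulation proofs go through verbatim.

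The main obstacle is the faithful treatment of $\hname{Capture}$. Unlike $\hname{Push}$ and $\hname{Grab}$, this step performs a structural substitution $\substc \stk \bnm t$, yet the translation $\inp c \bnm (\tr t \parallel \out c \name)$ realizes it with a single communication on $c$, binding the received encoded stack to the \hocore variable $\bnm$ and reinstalling $\name$ as the current continuation. To justify the Faithfulness analog here I would prove a substitution lemma stating that the translation commutes with structural substitution, name substitution, and plugging---concretely that $\subst{\tr\stk}{\bnm}{\tr t}$ coincides, as a process, with $\tr{\substc \stk \bnm t}$, and similarly for $\plug \stk t$ and for $\substc \stk \bnm \name$ on stacks. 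Granting this lemma, the single $c$-communication implementing $\hname{Capture}$ is deterministic and lands exactly on the translation of $\kmunfev{\substc \stk \bnm t}{\substc \stk \bnm \name} n m$, which is all that is needed to slot $\hname{Capture}$ into the Determinism and Faithfulness case analyses and thereby close the proof through the $\lambda\mu$ analog of Theorem~\ref{th:cor-KAM-trans}.
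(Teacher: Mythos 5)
Your overall architecture is exactly the paper's: the paper proves this theorem by carrying over the two-stage decomposition of Section~\ref{sec:full-abs} (normal-form bisimilarity equals machine bisimilarity via the analogues of Lemma~\ref{lem:kam-nfam} and Theorem~\ref{th:nf-am}, then machine bisimilarity equals barbed equivalence via Determinism and Faithfulness), the only genuinely new point being the separate channels $suc_{\mathsf n}$, $z_{\mathsf n}$ for name flags, which you identify and exploit correctly. The gap is in the one place where you go beyond ``as before'': the substitution lemma you propose for $\hname{Capture}$ is false. It is not the case that $\subst{\tr\stk}{\bnm}{\tr t}$ coincides, as a process, with $\tr{\substc \stk \bnm t}$. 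Take $t = \mmu \nameb \name x$ and $\stk = s \cons \fnm$. Structural substitution performs a plugging, $\substc{\stk}{\name}{(\mmu\nameb\name x)} = \mmu\nameb\fnm{(x \app s)}$, whose translation is $\inp c \nameb {(\tr{x \app s} \parallel \out c {\tr\fnm})}$, where $\tr{x \app s}$ begins with an input on $c$ that must fire before the stack $\tr{s \cons \fnm}$ is assembled. The \hocore substitution instead yields $\inp c \nameb {(x \parallel \out c {\tr{s \cons \fnm}})}$: the stack is already built, i.e., the administrative communication corresponding to the later $\hname{Push}$ step has been pre-performed under the prefix. The two processes are behaviourally equivalent but syntactically distinct, and the second is not the translation of any configuration or term (no translated $\mu$ carries a stack encoding in its output-on-$c$ position). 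This case is unavoidable: it arises precisely when an inner $\mu$ throws to the captured name, which is how $\callcc$ is written, so it is exercised by the paper's own running example.

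Consequently your claim that the single $c$-communication implementing $\hname{Capture}$ lands \emph{exactly} on $\tr{\kmunfev{\substc \stk \bnm t}{\substc \stk \bnm \name} n m}$ fails, and with it the exact form of Faithfulness ($\nexts \fortau {\tr\sts} = \tr{\sts'}$) on which your second stage relies. The repair is to state the correspondence for $\hname{Capture}$ up to these administrative redexes: either in the loose style of Theorem~\ref{th:cor-KAM-trans}, where the reduct need only be the translation of a configuration reached after the $\hname{Capture}$ \emph{and} the $\hname{Push}$ steps it pre-reduces, or by re-proving Determinism and Faithfulness modulo a relation that identifies a term's translation with its version in which guarded $\hname{Push}$ redexes have been contracted. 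Everything else in your proposal --- the analogue of Lemma~\ref{lem:kam-nfam}, the two candidate bisimulations, and the role of the name flags in forcing matching toplevel names and synchronized fresh-name counters --- is consistent with what the paper intends.
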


\subsection{Environment-based Abstract Machine} As in
$\lambda$-calculus (Section~\ref{sec:full-abstr-applicative}),
internalizing applicative bisimilarity requires to generate a testing
argument. Because we cannot generate binders in the \hocore
translation, we consider a de Bruijn syntax for terms and an
environment-based machine. However, unlike in $\lambda$-calculus,
named terms are not completely closed, because of the toplevel name,
and when we play the bisimulation game, considering a single free name
$\tp$ is not enough, as it would not be sound, like with normal-form
bisimilarity (cf Section~\ref{ss:nfb-mu}). We therefore need to
consider three distinct families of de Bruijn indices: term variables
$n$, bound names $\bnm$, and free names $\fnm$. The de Bruijn syntax
of the $\lambda\mu$-calculus is therefore as follows.
\[
  \name \bnf \bnm \midd \fnm \qquad  t,s \bnf n \midd \lamB t \midd t
  \app s \midd \muB \name t  
\]

We present in Figure~\ref{fig:env-ck} the environment-based extended
KAM. It is similar to the Streicher and Reus
machine~\cite{Streicher-Reus:JFP98}, except their syntax uses names
and not de Bruijn indices, and they use a single environment for term
variables and bound names. Instead, we distinguish the environment
mapping term variables to closures (ranged by $\envt$, $\envtt$) from
the one mapping bound names to stacks (ranged by $\envb$,
$\envbb$). An extra environment mapping free names to stacks is added
when we internalize applicative bisimilarity
(Section~\ref{ss:ab-mu}). Closures~$\clos$ are triples
$(t, \envt, \envb)$, and stacks $\stk$ are composed of closures and
ended by a free or bound name.

\begin{figure}
\begin{align*}
  \envt & \bnf \clos \cons \envt \midd \mtet \tag{term-variables environments}\\
  \envb & \bnf \stk \cons \envb \midd \mteb \tag{bound-names environments}\\
  \clos & \bnf (t, \envt, \envb) \tag{closures}\\
  \stk & \bnf \clos \cons \stk \midd \name \tag{stacks}\\
  C & \bnf \kmuB t \envt \envb \stk \midd \kmuBb \bnm \envb
      \clos \tag{configurations}
  \\
  \\
  \kmuB {t \app s} \envt \envb \stk 
        & \kBred\tau \kmuB t {\envt} \envb
          {(s, \envt, \envb) \cons \stk} \tag{\hname{Push}} \\
  \kmuB 0 {(t, \envt, \envb) \cons \envtt}{\envbb} \stk 
        & \kBred\tau \kmuB t \envt \envb \stk \tag{\hname{Zero-Term}}\\
  \kmuB {n+1} {\clos \cons \envt}{\envb} \stk 
        & \kBred\tau \kmuB n {\envt} \envb \stk
          \tag{\hname{Env-Term}} \\
  \kmuB {\lamB t} \envt \envb {\clos \cons \stk}
        & \kBred\tau \kmuB t {\clos \cons \envt} \envb \stk
          \tag{\hname{Grab}} \\
  \kmuB {\muB \name t} \envt \envb \stk 
        & \kBred\tau \kmuB t \envt {\stk \cons \envb} \name
          \tag{\hname{Capture}} \\
  \kmuB {\lamB t} \envt \envb \bnm
        & \kBred\tau \kmuBb \bnm \envb {(\lamB t, \envt, \envb)}
          \tag{\hname{Restore}}\\
  \\
  \kmuBb 0 {\stk \cons \envbb}{(t, \envt, \envb)}
        & \kBred\tau \kmuB t \envt \envb \stk \tag{\hname{Zero-BName}} \\
  \kmuBb {\bnm+1} {\stk \cons \envb} \clos
        & \kBred\tau \kmuBb \bnm \envb \clos \tag{\hname{Env-BName}} \\
\end{align*}
\caption{Environment-based KAM for the $\lambda\mu$-calculus}
\label{fig:env-ck}
\end{figure}

The first four steps are as in the $\lambda$-calculus. The
$\hname{Capture}$ step replaces the current stack~$\stk$ with
$\name$ and puts $\stk$ on top of the bound-name environment
$\envb$. Whenever a $\lambda$-abstraction is executed within a stack
composed only of a bound name $\bnm$, we restore the previously
captured stack, by looking into $\envb$. It is the role of the
$\hname{Zero-BName}$ and $\hname{Env-BName}$ steps, which operate on
dedicated configurations of the form $\kmuBb \bnm \envb \clos$. 

\subsection{AB machine}
\label{ss:ab-mu}

The definition of the sound and complete applicative bisimilarity for
the $\lambda\mu$-calculus~\cite{Biernacki-Lenglet:MFPS14} is more
intricate than in $\lambda$-calculus, because we compare named
terms. We first recall its definition~\cite{Biernacki-Lenglet:MFPS14}
and using named binders, before moving to de Bruijn indices, and then
to an AB machine.

\subsubsection{Informal definition with named binders} Assume two
named terms $\nt \name t$ and $\nt \name s$ where $t$ and $s$ are
closed. If they evaluate to respectively $\nt \name {\lam x {t'}}$ and
$\nt \name {\lam x {s'}}$, a surrounding context cannot test them by
applying them to an arbitrary closed term $t''$, as
$(\nt \name {\lam x {t'}}) \app t''$ is not a valid
$\lambda\mu$-term. The simplest named term that can be built to lead
to such an application is
$\nt \nameb {(\mmu \name \name {\lam x {t'}}) \app t''}$ for a fresh
$\nameb$. Indeed, the $\mu$-binder captures the named context
$\nt \nameb {\mtctx \app t''}$ (represented as a stack
$t'' \cons \nameb$) to reduce to
$\nt \nameb (\lam x {\substc {t'' \cons \nameb} \name {t'}) \app
  t''}$; the structural substitution $\substc {t'' \cons \nameb} \name
{t'}$
is necessary because $\name$ may occur in $t'$. The $\beta$-reduction
is now possible and the previous named term reduces to
$\nt \nameb {\subst {t''} x {\substc {t'' \cons \nameb} \name
    {t'}}}$. We get the same sequence of reductions with $\nt \name
s$, so that applicative bisimilarity compares
\[
  \nt \nameb {\subst {t''} x {\substc {t'' \cons \nameb} \name {t'}}}
  \mbox{ and }
  \nt \nameb {\subst {t''} x {\substc {t'' \cons \nameb} \name {s'}}}.
\]

We need to write such a test with de Bruijn indices and
environments. We represent the successive fresh names $\name$,
$\nameb$, $\ldots$ used by the bisimilarity with indices 0, 1,
$\ldots$ Assume we evaluate the closed closures $(t, \envt, \envb)$
and $(s, \envtt, \envbb)$ in the free name $0$, resulting respectively
into $(\lamB {t'}, \envt', \envb')$ and
$(\lamB {s'}, \envtt', \envbb')$ in the same name $0$. Let $t''$ be a
closed term. The term substitution $\subst {t''} x \cdot$ is
represented as in plain $\lambda$-calculus, by extending $\envt'$ and
$\envtt'$ with the closure corresponding to $t''$. To represent the
structural substitution $\substc {t'' \cons \nameb} \name \cdot$ we
introduce an extra environment dedicated to free names. 
 
\subsubsection{Testing environments} 
A \emph{free-names environment} $\envf$ is a mapping from free names
(represented as de Bruijn indices) to stacks, defined as follows.
\[
  \envf \bnf \stk \cons \envf \midd \mtef \tag{free-names environments}
\]
This environment is extended by the bisimilarity each time it tests
pairs of named values. We write $\empclos {t''}$ for the closure
$(t'', \mtet, \mteb, \mtef)$ used in these tests. The stack
$t'' \cons \nameb$ is represented as $\empclos{t''} \cons 1$, and
$\envf$ maps $0$ to $\empclos{t''} \cons 1$. A subsequent test would
extend $\envf$ to map $1$ to $\empclos{t'''} \cons 2$ for some $t'''$,
then a following test would map $2$ to another closure, \etc

We see that the bisimilarity extends $\envf$ at the end, and we write
$\envf \snoc \stk$ for such an extension. We also notice that the
resulting $\envf$ maps each free name to a stack containing only one
element. More precisely, an environment $\envf$ generated by
successive bisimilarity tests is either empty, or its $i^{\text{th}}$
element is of the form $\empclos t \cons i$ for some closed
term~$t$. We say such an environment is a \emph{testing environment},
of rank $\fnm$ if it has $\fnm$ elements. A testing environment of
rank $\fnm>0$ has only $\fnm$ as free-name index without a mapping, at
the end of its last stack. For example, a testing environment of rank
2 is of the form $\stk_0 \cons \stk_1 \cons \mtef$, mapping $0$ to
$\stk_0 = \empclos{t_0} \cons 1$ for some closed $t_0$, and $1$ to
$\stk_1 = \empclos{t_1} \cons 2$ for some closed $t_1$.

\subsubsection{Applicative bisimilarity}
We extend closures to include testing environments
$\clos \bnf (t, \envt, \envb, \envf)$, and we say that $\clos$ is of
rank $\fnm$ when its testing environment is of rank $\fnm$. A stack is
closed if it is composed only of closed closures. A closure is closed
if $\envt$ contains more elements than the highest term-variable index
in $t$, $\envb$ contains more elements than the highest bound-name
index in $t$, $\envt$ contains only closed closures, and $\envb$ only
closed stacks.

We add the testing environment $\envf$ to configurations, and we
define steps to look in~$\envf$ for the stack corresponding to a given
free-name index. We discuss these changes in details when defining the AB
machine. We adapt the definition of applicative
bisimilarity~\cite{Biernacki-Lenglet:MFPS14} to our setting as
follows.

\begin{defi}
  A symmetric relation $\relR$ on closed closures is an applicative
  bisimulation if
  $(t, \envt, \envb, \envf) \relR (s, \envtt, \envbb, \envf)$ with
  $\envf$ of rank $\fnm$ and
  $\kmuBab t \envt \envb \envf \fnm \kBredrtc\tau \kmuBab {\lamB {t'}}
  {\envt'}{\envb'} \envf \fnm$ implies that there exists $s'$,
  $\envtt'$, and $\envbb'$ such that
  $\kmuBab s {\envtt}{\envbb} \envf \fnm \kBredrtc\tau \kmuBab {\lamB
    {s'}}{\envtt'}{\envbb'} \envf \fnm$ and for all closed $t''$, we
  have
  \[
  (t', \empclos{t''} \cons \envt', \envb', \envf \snoc
  (\empclos{t''} \cons \fnm+1)) \relR \\ (s', \empclos{t''} \cons
  \envtt', \envbb', \envf \snoc (\empclos{t''} \cons \fnm+1)).
  \]
  Applicative bisimilarity $\appbisim$ is the largest applicative bisimulation. 
\end{defi}

Applicative bisimilarity executes closed closures with the same
testing environment of rank $\fnm$ within the stack composed only of
the free name $\fnm$. It then compares the resulting values by
considering a closed closure $\empclos{t''}$ to instantiate the
$\lambda$-abstractions, and to extend the testing environment which
becomes of rank $\fnm+1$.

\subsubsection{AB machine} We present the most interesting steps of
the corresponding AB machine in Figure~\ref{fig:abm-mu}. Evaluation
configurations $\kmuBab {t} \envt \envb \envf \stk$ now include the
testing environment~$\envf$, which is passed along unchanged in the
steps corresponding to the extended KAM (omitted in
Figure~\ref{fig:abm-mu}).

Whenever a $\lambda$-abstraction is executed in a stack composed only
of a free name $\fnm$, we have two possibilities: either~$\fnm$ is
mapped in $\envf$, otherwise we have to generate the closure~$\empclos
t$ of the bisimulation test. The configuration $\kmuBabf \fnm \envf
{\envf_r} \clos {\fnm'}$ is designed with these two cases in mind. Its
main role is to look for $\fnm$ in $\envf$ and to restore the
corresponding stack if the mapping exists.

\begin{figure}
\begin{align*}
  \kmuBab {t \app s} \envt \envb \envf \stk
        & \kBred\tau \kmuBab t {\envt} \envb
          \envf {(s, \envt, \envb, \envf) \cons \stk} \tag{\hname{Push}}\\
  & \hspace{0.9em}\vdots \tag{cf. Figure~\ref{fig:env-ck}}\\
  \kmuBab {\lamB t} \envt \envb \envf \fnm
        & \kBred\tau \kmuBabf \fnm \envf \mtef {(\lamB t, \envt, \envb,
          \envf)} \fnm \tag{\hname{Lookup-FName}}\\
  \\
  \kmuBabf 0 {\stk \cons \envf'} {\envf_r} {(t, \envt, \envb,
  \envf)} \fnm
        & \kBred\tau \kmuBab t \envt \envb \envf \stk
          \tag{\hname{Zero-FName}} \\
  \kmuBabf {\fnm+1} {\stk \cons \envf} {\envf_r} \clos {\fnm'}
        & \kBred\tau \kmuBabf \fnm \envf {\stk \cons {\envf_r}} \clos
          {\fnm'} \tag{\hname{Env-FName}} \\
  \kmuBabf 0 \mtef {\envf_r} \clos {\fnm}
        & \kBred\flambda \kmuBabint 0 1 0 \mtr \clos {\envf_r}  {\fnm}
          \tag{\hname{Arg}} \\
  \\
  \kmuBabint n {\acc}{\acc_\mu} \rho \clos {\envf_r} \fnm
        & \kBred{\fsuc} \kmuBabint {n+1}{\acc+1}{\acc_\mu} \rho \clos {\envf_r} \fnm
          \tag{\hname{Suc}}\\
  & \hspace{0.9em} \vdots \tag{cf. Figure~\ref{fig:am-ab}} \\
  \kmuBabtm t {\acc} {\acc_\mu} \rho \clos {\envf_r} \fnm
  & \kBred{\findmu} \kmuBabintmu 0 t {\acc} {\acc_\mu} \rho \clos
    {\envf_r} \fnm \tag{\hname{Ind-Mu}} \\
  \kmuBabintmu \bnm t {\acc} {\acc_\mu} \rho \clos
  {\envf_r} \fnm 
  & \kBred{\fsucmu} \kmuBabintmu {\bnm+1} t {\acc} {\acc_\mu} \rho \clos
    {\envf_r} \fnm \tag{\hname{Suc-Mu}} \\
  \kmuBabintmu \bnm t {\acc} {\acc_\mu} \rho \clos
  {\envf_r} \fnm
  & \kBred{\fmu} \kmuBabtm {\muB \bnm t} {\acc} {\maxx{\bnm+1}{\acc_\mu}
    -1} \rho \clos {\envf_r} \fnm \tag{\hname{Mu}} \\
  \\
  \kmuBabtm t 0 0 \mtr \clos {\envf_r} \fnm
    & \kBred{\frev} \kmuBabefn {\envf_r} {(\empclos t \cons \fnm+1) \cons \mtef} 
      \clos {\empclos t \cons \fnm+1} \tag{\hname{Init-Rev}} \\
  \kmuBabefn {\stk \cons {\envf_r}} \envf \clos \stk
    & \kBred{\frev} \kmuBabefn {\envf_r} {\stk \cons \envf} \clos \stk
      \tag{\hname{Rev}} \\
  \kmuBabefn \mtef \envf {(\lamB s, \envt, \envb, \envf')} \stk
    & \kBred\fdone \kmuBab {\lamB s} \envt \envb
      \envf \stk \tag{\hname{Restart}}
\end{align*}
\caption{AB machine for the call-by-name $\lambda\mu$-calculus}
\label{fig:abm-mu}
\end{figure}

If $\fnm$ is not mapped, then it is stored as the last parameter of
the $\mathsf{fn}$ mode in the $\hname{Lookup-FName}$ step, and passed
along during the argument generation process. Its successor is needed
to create the stack $\empclos t \cons {\fnm +1}$ ($\hname{Init-Rev}$
step).

The second-to-last parameter $\envf_r$ of the $\mathsf{fn}$ mode is
used to compute the reverse of~$\envf$. Indeed, we remind that
applicative bisimilarity extends $\envf$ from its tail. To do so, we
reverse~$\envf$, add the generated closure, and reverse the result
again. We set $\envf_r$ to empty in the $\hname{Lookup-FName}$ step,
and then the elements of $\envf$ are pushed on $\envf_r$ when we go
through $\envf$ in the $\hname{Env-FName}$ step. As a result, when
$\envf$ is empty in the $\hname{Arg}$ transition,~$\envf_r$ contains
the original environment in reverse.

The $\hname{Arg}$ step initiates the argument generation. It triggers
when $\envf$ is empty, and we know that the index at this point is
$0$, because the only free-name index without a mapping allowed by
design is the rank of $\envf$, i.e., its size. As in
Section~\ref{ssec:app-machine}, we start argument generation with the
leftmost de Bruijn index representing a term variable, and then
generate the term from left to right. Compared to $\lambda$-calculus,
the $\mathsf{ind}$ and $\mathsf{tm}$ configurations carry three extra
parameters: the already discussed $\envf_r$ and $\fnm$, and
$\acc_\mu$, which counts how many $\mu$-binders are necessary for the
generated term to be closed. In Figure~\ref{fig:abm-mu}, we omit the
steps that are the same as in Figure~\ref{fig:am-ab} up to these three
parameters.

We generate a term $\muB \bnm t$ by starting with the bound name index
$\bnm$ ($\hname{Ind-Mu}$ and $\hname{Suc-Mu}$ steps). We then add the
$\mu$-binder and update the $\acc_\mu$ counter accordingly
($\hname{Mu}$ step).

After generating the argument $t$, we step to a configuration of the
form $\kmuBabefn {\envf_r} \envf \clos \stk$, the goal of which is to
construct the extended testing environment. The $\hname{Init-Rev}$
step constructs the bottom of the extended environment and remembers
$\empclos t \cons \fnm+1$ as the new stack. The $\hname{Rev}$ step
goes through $\envf_r$ and pushes its elements on $\envf$ to
reconstruct the original environment on top of its new bottom. When
$\envf_r$ is empty, the $\hname{Restart}$ step restarts the machine,
executing $\lamB s$ in the stack $\empclos t \cons \fnm+1$: the next
step is a $\beta$-reduction which produces the configuration we want.\\

\subsubsection{Full abstraction} As in $\lambda$-calculus, the AB
machine flags each step of its argument generation process, and for
each generated argument $t$ there is a corresponding sequence of flags
$\Seq t$. We can prove as in Section~\ref{ssec:app-machine} that
applicative bisimilarity and the AB machine equivalence
coincide. Besides, the AB machine relies on ingredients and operations
(counters, max, \etc) that we know how to translate in \hocore in a
faithful and deterministic way (see Appendix~\ref{app:abmu}), and for
which we can prove the
following full abstraction theorem.

\begin{thm}
  For all closed closures $(t, \envt, \envb, \envf)$ and $(s, \envtt,
  \envbb, \envf)$ of rank $\fnm$, $(t, \envt, \envb, \envf) \appbisim
  (s, \envtt, \envbb, \envf)$ iff $\tr{\kmuBab t \envt \envb \envf
    \fnm} \hobisim \tr{\kmuBab s {\envtt}{\envbb} \envf \fnm}$.
\end{thm}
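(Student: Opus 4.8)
The plan is to factor the proof through machine bisimilarity, exactly as for the call-by-name $\lambda$-calculus in Sections~\ref{sec:full-abs} and~\ref{sec:full-abstr-applicative}. I would first establish that applicative bisimilarity coincides with machine bisimilarity of the AB machine, i.e. $(t, \envt, \envb, \envf) \appbisim (s, \envtt, \envbb, \envf)$ iff $\kmuBab t \envt \envb \envf \fnm \ambisim \kmuBab s {\envtt}{\envbb} \envf \fnm$, and then that machine bisimilarity coincides with barbed equivalence of the translation, i.e. $\sts \ambisim \sts'$ iff $\tr\sts \hobisim \tr{\sts'}$ (the analog of Theorem~\ref{th:am-ho-nf}). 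Composing the two equivalences yields the statement.

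For the first step I would replay the proof of the corresponding theorem in Section~\ref{ssec:app-machine}. The key ingredient is the analog of Lemma~\ref{lem:abam-all-t}: every closed term $t'$ is produced by the argument-generation phase through the determined flag sequence $\Seq {t'}$, now extended with the $\mu$-binder steps $\hname{Ind-Mu}$, $\hname{Suc-Mu}$, and $\hname{Mu}$ and tracked by both counters $\acc$ and $\acc_\mu$. With this lemma in hand, in the forward direction the relation $\{((t, \envt, \envb, \envf), (s, \envtt, \envbb, \envf)) \midd \kmuBab t \envt \envb \envf \fnm \ambisim \kmuBab s {\envtt}{\envbb} \envf \fnm\}$ is shown to be an applicative bisimulation: a value reached on one side is matched with the same $\flambda$ flag after the $\hname{Lookup-FName}$/$\hname{Arg}$ transitions, every test term $t''$ forces identical flag traces $\Seq {t''}$, and the post-test configurations---which reverse and extend $\envf$ via $\hname{Init-Rev}$, $\hname{Rev}$, and $\hname{Restart}$---remain machine bisimilar, the closures differing only in the tested argument. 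The reverse direction is symmetric, with a relation spanning the evaluation, $\mathsf{fn}$, $\mathsf{ind}$, and $\mathsf{tm}$ modes.

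For the second step I would reuse the proofs of Lemmas~\ref{lem:ho-lambda} and~\ref{lem:lambda-ho} essentially verbatim, since they depend only on two properties of the translation: determinism (Lemma~\ref{lem:determinist}) and faithfulness (Lemma~\ref{lem:faithful}), together with the fact that machine processes exhibit only flag names as barbs (Corollary~\ref{cor:wkobs}). So the real work is to re-establish these statements for the $\lambda\mu$ AB machine translation of Appendix~\ref{app:abmu}, after redefining $\hidden$ to hide every non-flag name (now including the separated $suc_{\mathsf v}, z_{\mathsf v}, suc_{\mathsf n}, z_{\mathsf n}$). Determinism is a case analysis on the shape of each machine process, checking it is either stuck on a flag name, has a unique silent reduction, or is a choice process whose branches commit to distinct flags before reaching the next translated configuration---in particular the new $\mathsf{tm}$-mode choice that includes $\hname{Mu}$. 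Faithfulness is the correspondence $\sts \kBred\fortau \sts'$ iff $\nexts \fortau {\tr\sts} = \tr{\sts'}$, proved by case analysis on the transition in one direction and on $\tr\sts$ in the other.

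The hard part will be discharging faithfulness and determinism for the genuinely new components of the machine, since these drive the whole argument. The free-name lookup ($\hname{Lookup-FName}$, $\hname{Zero-FName}$, $\hname{Env-FName}$) must deterministically either restore the mapped stack or fall through to generation exactly when the index reaches $0$ with an empty $\envf$; the testing-environment reversal ($\hname{Init-Rev}$, $\hname{Rev}$, $\hname{Restart}$) must transfer $\envf$ onto $\envf_r$ and back, consuming precisely the intended messages, so that no spurious communication or flag-name barb appears while the recursion unfolds; and the additional max computation on $\acc_\mu$ in the $\hname{Mu}$ step must interleave cleanly with the existing one on $\acc$. Verifying that each of these recursive \hocore encodings is silent except for its designated flags, and deterministic up to the single source of nondeterminism in choice processes, is routine but lengthy; once it is done, the two headline lemmas and hence the theorem follow by the same bisimulation arguments as in the $\lambda$-calculus case.
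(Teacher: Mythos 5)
Your proposal matches the paper's own treatment: the paper likewise factors the result through machine bisimilarity (reusing the argument of Section~\ref{ssec:app-machine} with the extended flag sequences $\Seq{t}$) and then appeals to determinism and faithfulness of the Appendix~\ref{app:abmu} translation to replay Lemmas~\ref{lem:ho-lambda} and~\ref{lem:lambda-ho}. The points you flag as the remaining work---the free-name lookup, the environment reversal, and the $\acc_\mu$ bookkeeping---are exactly the ingredients the paper defers to the appendix, so the proposal is correct and follows the same route.
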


Full abstraction can also be achieved by internalizing contextual
equivalence directly. The resulting machine would extend the machine
of Figure~\ref{fig:am-ctx} with the steps generating $\mu$-binders of
Figure~\ref{fig:abm-mu}.

\section{Conclusion and Future Work}
\label{sec:conclusion}

We propose encodings of the call-by-name and call-by-value
$\lambda$-calculus with or without control operators into \hocore,
fully abstract w.r.t. normal-form and applicative bisimilarities, and
contextual equivalence. This shows that a minimal higher-order
calculus with a fixed number of hidden names, which is much less
expressive than the name-passing $\pi$-calculus, still has enough
expressive power to faithfully encode these calculi. 

We use abstract machines not only to fix the reduction
strategy, but also as an intermediary step between the
$\lambda$-calculus and \hocore. We turn the equivalences of the
$\lambda$-calculus, and their potentially complex testing conditions,
into a first-order bisimilarity over an LTS (a flag-generating
machine), which is closer to the \hocore equivalence. We believe this
internalization technique can be applied to any language for which an
abstract machine has been defined. No matter how intricate the
bisimilarities for such a language are, it should be always possible
to generate a context as in Section~\ref{ssec:ctxequiv} to internalize
contextual equivalence.

The encodings of the extended abstract machines into \hocore rely on
the same principles, e.g., to represent stacks, non-deterministic
choice, case analyses on terms, \etc We believe it is possible to
automatically derive the encoding from an abstract machine so that
the generated translation verify Lemmas~\ref{lem:determinist}
and~\ref{lem:faithful}, giving us Theorem~\ref{th:am-ho-nf} for free.

Our encodings are \emph{weak}
compositional~\cite{Gorla:IaC10,Parrow:ENTCS06}, i.e., they consist of
a compositional translation of the $\lambda$-term inside a fixed
process representing the machine. We conjecture there is no fully
abstract compositional encoding of the $\lambda$-calculus into
\hocore. In such an encoding, a translated $\lambda$-abstraction could
be composed with an argument (a translated $\lambda$-term) to
$\beta$-reduce. It would require at least one communication on a
public channel name, but the $\lambda$-abstraction should still be
able to protect itself from unwanted behaviours from the outside. It
seems difficult if not impossible to achieve with only hidden names
and no name restriction.

As demonstrated in other
settings~\cite{Prebet:ICALP22,Jaber-Sangiorgi:CSL22}, our encodings
could be useful to import proof techniques such as up-to
techniques~\cite{Sangiorgi-Rutten:12} from the process calculus world into the
$\lambda$-calculus world. It would be interesting especially for
applicative bisimilarity for which powerful up-to techniques have not
been defined yet.

Finally, we would like to explore further the expressiveness of
process calculi by translating the $\lambda$-calculus without a
predefined reduction strategy. The encodings defined so far in the
literature (cf. Section~\ref{sec:intro}) assumes a given strategy for
the $\lambda$-calculus. An exception is the work by Cai and
Fu~\cite{Cai-Fu:MSCS11}, where a $\lambda$-term is represented as a
tree, and $\beta$-reduction is a transformation on trees. It relies on
an unbounded number of restricted names to represent tree nodes; we
wonder if we can use the same ideas with only a fixed number of
names. However, such an encoding works for any source language, as it
is a manipulation of syntax trees. We would like the encoding to be
more tailored to the $\lambda$-calculus, to tell us more about the
relationship between the $\lambda$-calculus and process calculi.

\section*{Acknowledgement}
We would like to thank the anonymous reviewers for their helpful
comments. This work has been partially funded by PHC Polonium and by
the National Science Centre of Poland under grants
no. 2014/15/B/ST6/00619 and 2019/33/B/ST6/00289. Damien Pous was
supported by the LABEX MILYON (ANR-10-LABX-0070) of Université de
Lyon, within the program ``Investissements d'Avenir'' (ANR-11-IDEX-
0007) operated by the French National Research Agency (ANR).

\bibliographystyle{alphaurl}
\bibliography{bibrefs}

\newpage

\appendix

\section{Translation of the \texorpdfstring{$\lambda\mu$}{lambda-mu}-calculus
  AB Machine into \hocore}
\label{app:abmu}

The translation of the AB machine for the call-by-name
$\lambda\mu$-calculus extends the one for the plain
$\lambda$-calculus of Section~\ref{ssec:trans-AM-AB}. We omit the
parts of the translation which are the same or similar to the plain
$\lambda$-calculus one.

The translation of the evaluation mode is as follows. A closure is now
composed of four messages. Because the closures are supposed to be
closed, the process representing
the empty environments~$\mtet$ and~$\mteb$ should never be executed, and we choose
to represent them with $\nil$. The definitions of $\tr \mtef$ and $\tr \name$ are
discussed later.
\begin{align*}
  \tr{\kmuBab t \envt \envb \envf \stk}
  & \eqdef \tr t \parallel \out{\et}{\tr e} \parallel \out{\eb}{\tr \envb}
    \parallel
    \out{\ef}{\tr \envf} \parallel \out c {\tr \stk} \parallel \Prec \\
  \tr {(t, \envt, \envb, \envf)}
  & \eqdef \out{\clos_1}{\tr t} \parallel \out{\clos_2}{\tr \envt}
    \parallel \out{\clos_3}{\tr \envb} \parallel \out{\clos_4}{\tr
    \envf}\\
  \tr {\clos \cons \envt}
  & \eqdef \out \he {\tr \clos} \parallel \out {\et}{\tr \envt}
  & \tr \mtet & \eqdef \nil \\
  \tr {\stk \cons \envb}
  & \eqdef \out \hb {\tr \stk} \parallel \out {\eb}{\tr \envb}
  & \tr \mteb & \eqdef \nil \\
  \tr {\stk \cons \envf}
  & \eqdef \out \hf {\tr \stk} \parallel \out {\ef}{\tr \envf}
  & \tr \mtef & \eqdef \ldots \\
  \tr {\clos \cons \stk}
  & \eqdef \out \hp {\tr \clos}
    \parallel \out c {\tr \stk}
  &  \tr \name & \eqdef \ldots 
\end{align*}

The representations of $\lambda$-abstractions, applications and indices
are the same as in plain
$\lambda$-calculus, adapted to the fact that closures are composed of
four elements now. A $\mu$ operator captures the current stack and
extend the bound-names environment $\envb$ accordingly.
\[
  \tr {\muB \name t}  \eqdef \Mueval {\tr \name}{\tr t}
  \qquad \Mueval {P_\name}{P_t} \eqdef \inp c x {\inp \eb y {(P_t
    \parallel \out \eb {\out \hb x \parallel \out \eb y} \parallel
                             \out c {P_\name})}}
\]

The translation $\tr \bnm$ implements the mode $\mathsf{bn}$ resolving
bound names. The translation $\trb \bnm$ implements the search of
$\bnm$ in $\envb$, while $\tr \bnm$ make a save of $\envb$ before
starting the search. The environment $\envb$ is restored by $\trb 0$
when the search is over.

\begin{align*}
  \tr \bnm & \eqdef \Bound{\trb \bnm} & \Bound{P_\bnm}
  & \eqdef \inp \eb x {(P_\bnm \parallel \out \eb x \parallel \out
    {\svb} x)} \\
  \trb{\bnm + 1} & \eqdef \IndMu{\trb \bnm} & \IndMu{P_\bnm}
  & \eqdef \inp \eb x {(x \parallel \inp \hb \osef {P_\bnm})}\\
           & & \trb 0
  & \eqdef \inp \eb x {(x \parallel \inp \hb y {\inp \eb \osef {\inp
    \svb z {(y \parallel \out \eb z)}}})}
\end{align*}

Similarly, the translation $\tr \fnm$ implements the mode
$\mathsf{fn}$ resolving free names. The translation $\trf \fnm$
implements the search of $\fnm$ in $\envf$, while $\tr \fnm$ make a
save of $\envf$ (on $\svf$) and $\fnm$ (on $\svfn$) before starting
the search. We compute the reverse of $\envf$ on the name $\rev$ using
the processes $\MtRev$ and $\rawConsRev$, which
basically encode a stack. If the search succeeds, the messages on
$\rev$ on $\svfn$ are not useful and are discarded, and the save of
$\envf$ is restored.

\begin{align*}
  \MtRev & \eqdef \inp {mtrev} x {\inp{csrev} \osef x} \\
  \ConsRev{P_\stk}{P_\envf}  & \eqdef \out \hrev {P_\stk} \parallel
                                \out \rev {P_\envf} \parallel \inp
                                {mtrev} \osef {\inp {csrev} x x}
\end{align*}

\begin{align*}
  \tr \fnm & \eqdef \Free{\trf \fnm} & \Free{P_\fnm}
  & \eqdef \inp \ef x {(P_\fnm \parallel \out \ef x \parallel \out
    {\svf} x \parallel \out \svfn {P_\fnm} \parallel \out \rev {\MtRev} )} \\
  \trf{\fnm + 1} & \eqdef \IndFree{\trf \fnm} & \IndFree{P_\fnm}
  & \eqdef \inp \ef x {\inp \rev y {(x \parallel \inp \hf z {(P_\fnm
    \parallel \out \rev {\ConsRev z y})})}}\\
           & & \trf 0
  & \eqdef \inp \ef x {(x \parallel \inp \hf y {\inp \ef \osef {\inp
    \svf z {\inp \svfn \osef {\inp \rev \osef (y \parallel \out \ef z)}}}})}
\end{align*}
If the search fails, it means that $\fnm$ is not mapped to a stack, and $\tr
\mtef$ should start generating the argument. It first discards the
message on $\svf$ which is no longer useful and sets up the
$\mathsf{ind}$ mode.
\[
  \tr \mtef \eqdef \inp \farg \osef {\inp \svf \osef {(
      \out {ind}{\tr 0} \parallel \out k {\tracc 1} \parallel \out
      {\km}{\trkm 0} \parallel \out r {\tr \mtr} \parallel
      \out{initInd} \nil)}}
\]
The message on $\km$ is for the counter $\acc_\mu$, and the
translation $\trkm {\acc_\mu}$ is the same as $\tracc{\acc}$, but
using different names $zerom$ and $sucm$.  The $\RecTm$ process
generating terms needs an extra case analysis on $\acc_\mu$, as the
term generation process may end only if $\acc_\mu = 0$. The process
$\RecTm$ of Section~\ref{ssec:trans-AM-AB} is modified to do this case
analysis when $\acc = 0$.
\begin{multline*}
  \RecTm \eqdef \ldots \\
   \overline{zero}\left\langle \km(u).\left( 
      \begin{aligned}
        u & \parallel \overline {zerom} \left\langle
          \begin{aligned}
            & (\Lambdaa {z} \parallel \out r {\tr \mtr} \parallel \out
            \km u) + (\AppFun z {\tr \mtr} \parallel \out \km u) \\
            &  + (\Muu \parallel \out r {\tr \mtr} \parallel \out k z \parallel \out \km u)+  \DoneTm
          \end{aligned}
        \right\rangle\\
        & \parallel \overline{sucm} \left\langle sukm(\osef).\left(
              \begin{aligned}
            & (\Lambdaa {z} \parallel \out r {\tr \mtr} \parallel \out
            \km u)  \\
            + & (\AppFun z {\tr \mtr} \parallel \out \km u) \\
            + & (\Muu \parallel \out r {\tr \mtr}  \parallel \out k z \parallel
            \out \km u)
              \end{aligned}
              \right)\right\rangle
          \end{aligned}
        \right)\right\rangle 
\end{multline*}
Generating a de Bruijn index for a term variable, a
$\lambda$-abstraction, or an application is done the same way as in
plain $\lambda$-calculus: the processes $\RecInt$, $\Succ$, $\Var$,
$\rawLambdaa$, $\rawAppFun$, and $\rawApp$ are the same as in
Section~\ref{ssec:trans-AM-AB}.

The process $\Muu$ generates a $\mu$ binder, starting with the bound
name $\bnm$, simulating the $\mathsf{indMu}$ mode. It initiates a
counter on $km2$, which is increased in parallel with $\bnm$. That
counter is then compared in $\MuTm$ with the one on $\km$ to compute
the maximum between them, using the same technique as in $\rawApp$ but
on different names. The processes $\RecIMu$ and $\SuccMu$ behave like
their respective counterpart $\RecInt$ and $\Succ$.

\begin{align*}
  \Muu & \eqdef \inp {\findmu} \osef {(\out {indMu}{\trb 0} \parallel \out
         {\km 2}{\trkm{1}} \parallel \out{initIMu} \nil)} \\
  \RecIMu & \eqdef \inp {initIMu} \osef {\inp {recimu} x {(x
            \parallel \out {recimu} x \parallel \SuccMu + \MuTm)}} \\
  \SuccMu & \eqdef \inp \fsucmu \osef {\inp {indMu} x {\inp {\km 2} y
            {(\out {indMu}{\IndMu x} \parallel \out {\km 2}{\Sukm y}
            \parallel \out {initIMu} \nil)}}} \\
  \MuTm & \eqdef \fmu(\osef).tm(x).\km(y_1).\km 2(y_2).indMu(z). \\
       & \quad \left(
         \begin{aligned}
           & \out{maxm1}{y_1} \parallel \out{maxm2}{y_2} \parallel
           \out{initm1}{y_1} \parallel \out{initm2}{y_2} \parallel \\
           & resu(y).\left(
           \begin{aligned}
             & \out {tm}{\Mueval{\Bound z} x} \parallel y \parallel \\
             & \inp{sukm}{y'}{(\out{zerom} \nil \parallel \out{sucm}{\out
                 \km {y'} \parallel \out{initTm} \nil})}
           \end{aligned}
         \right)
       \end{aligned}
               \right)
\end{align*}

The process $\DoneTm$ is executed when we are done computing the term
$t$ and start computing the new testing environment $\envf$. Using the process
representing $\fnm$ saved on the name $\svfn$, it computes the stack
at the bottom of this new environment $\empclos t \cons \fnm+1$, saves
it on the name $\svstk$, and put it also on the name $\newf$ on which
$\envf$ is computed.
\begin{align*}
  \DoneTm & \eqdef \frev(\osef).tm(x).\svfn(y).\left(
            \begin{aligned}
              & \out \newf {\out \hf {\Stk x y} \parallel \out \ef {\tr
                  \mtef}} \parallel \\
              & \out \svstk {\Stk x y} \parallel \out {initRev} \nil
            \end{aligned}
                \right)
  \\
  \Stk {P_t}{P_\fnm} & \eqdef \out \hp {\out {\clos_1}{P_t} \parallel
                       \out {\clos_2}{\tr \mtet} \parallel
                       \out {\clos_3}{\tr \mteb}\parallel
                       \out {\clos_4}{\tr \mtef}} \parallel \out c
                       {\Free{\IndFree {P_\fnm}}}
\end{align*}
The process $\RecRev$ computes $\envf$ on $\newf$ by reversing the environment
that hs been saved on $\rev$. It proceeds by case analysis on the
process on $\rev$, which is either encoding a stack constructor or an
empty stack.
\begin{align*}
  \RecRev & \eqdef initRev(\osef).recrev(x).(x \parallel \out{recrev}
            x \parallel \rev(y).\newf(z). \\
          & \quad \left. \left( 
            \begin{aligned}
              y & \parallel \out{csrev}{\inp \frev \osef {\inp \hrev u
                  {(\out \newf {\out \hf u \parallel \out \ef z}
                    \parallel \out {initRev} \nil)}}} \\
              & \parallel \out {mtrev}{\Done(z)}
            \end{aligned}
                \right) \right)
\end{align*}
During the argument generation, the process $\trf 0$ was stuck
expecting several messages, which can now be provided by the process
$\Done$.
\[
  \Done(P_\envf) = \inp \fdone \osef{\inp \svstk x {(\out \hf z
      \parallel \out \ef \nil \parallel \out \svf {P_\envf} \parallel
      \out \svfn \nil \parallel \out \rev \nil)}}
\]

\end{document}